%% file: main.tex
\documentclass{article}

\usepackage[utf8]{inputenc} 
\usepackage[T1]{fontenc}    
\usepackage{url}            
\usepackage{booktabs}       
\usepackage{amsfonts}       
\usepackage{nicefrac}       
\usepackage{microtype}      
\usepackage{xcolor}         

\input{header}

\title{Accelerated Relax-and-Round for Concave Coverage Problems}

\author[ \hspace{-1.0ex}]{Matthew Fahrbach\footnote{Equal contribution. Emails: \texttt{\{fahrbach,mehraneh,zadim\}@google.com}}}
\author[ \hspace{-1.0ex}]{Mehraneh Liaee\protect\footnotemark[1]}
\author[ \hspace{-1.0ex}]{Morteza Zadimoghaddam}

\affil[ \hspace{-1ex}]{Google Research}

\date{}

\usepackage{tocloft}
\begin{document}

\maketitle

\input{abstract}

\newpage
\tableofcontents

\newpage
\input{introduction}
\input{preliminaries}
\input{smooth_proxy_objective}
\input{algorithm}
\input{approximation_ratios}
\input{experiments}

{
\bibliographystyle{plainnat}
\bibliography{references}
}

\newpage
\appendix
\input{app_introduction}
\input{app_smooth_proxy_objective}
\input{first_order_methods}
\input{app_convex_combination}

\input{app_rounding_ratio}
\input{app_greedy}
\input{app_experiments}

\end{document}

%% file: header.tex
\usepackage[utf8]{inputenc}
\usepackage[T1]{fontenc}
\usepackage{inconsolata}
\usepackage{microtype}
\usepackage{algorithm}
\usepackage[noend]{algpseudocode}
\usepackage{amsmath, amssymb, amsthm}
\usepackage{bm}
\usepackage{upgreek}

\usepackage{graphicx}
\usepackage{thmtools}
\usepackage{thm-restate}
\usepackage[backref=page]{hyperref}  
\usepackage[capitalize, nameinlink]{cleveref}
\usepackage{comment}
\usepackage{enumerate}
\usepackage{mathtools}
\usepackage{subcaption}
\usepackage[dvipsnames]{xcolor}
\usepackage{nag}
\usepackage{stmaryrd}
\usepackage{tcolorbox}
\usepackage{xfrac}
\usepackage{xspace}
\usepackage{booktabs}
\usepackage{multirow}
\usepackage{lipsum}

\usepackage[numbers]{natbib}
\usepackage{authblk}

\usepackage{geometry}
\usepackage{tikz}
\usetikzlibrary{shapes, arrows.meta, positioning, decorations.pathreplacing}

\graphicspath{{figures/}{.}}

\newcommand{\declarecolor}[2]{\definecolor{#1}{RGB}{#2}\expandafter\newcommand\csname #1\endcsname[1]{\textcolor{#1}{##1}}}
\declarecolor{White}{255, 255, 255}
\declarecolor{Black}{0, 0, 0}
\declarecolor{LightGray}{216, 216, 216}
\declarecolor{Gray}{127, 127, 127}
\declarecolor{Orange}{237, 125, 49}
\declarecolor{LightOrange}{251,229, 214}
\declarecolor{Yellow}{255, 192, 0}
\declarecolor{LightYellow}{255, 242, 200}
\declarecolor{Blue}{91, 155, 213}
\declarecolor{LightBlue}{222, 235, 247}
\declarecolor{Green}{112, 173, 71}
\declarecolor{LightGreen}{226, 240, 217}
\declarecolor{Navy}{68, 114, 196}
\declarecolor{LightNavy}{218, 227, 243}

\hypersetup{
	colorlinks=true,
	pdfpagemode=UseNone,
	citecolor=Navy,
	linkcolor=Navy,
	urlcolor=Navy,
}

\crefformat{equation}{Eq.~(#2#1#3)}

\newcommand{\mytilde}{\raise.17ex\hbox{$\scriptstyle\mathtt{\sim}$}}

\newcommand{\ConcaveCoverage}{$\varphi$-\textsc{MaxCoverage}\xspace}
\newcommand{\xgreedy}{\mat{x}_{\texttt{greedy}}}
\newcommand{\degreeR}{d_R}
\newcommand{\MainAlgorithm}{\texttt{AcceleratedRelaxAndRound}\xspace}
\newcommand{\SwapRound}{\texttt{SwapRound}\xspace}
\newcommand{\MergeBases}{\texttt{MergeBases}\xspace}
\newcommand{\SmoothC}{\widetilde{C}_\mu}

\newcommand{\mat}[1]{\mathbf{#1}}

\newcommand{\OPT}{\textnormal{OPT}}
\newcommand{\ALG}{\textnormal{ALG}}
\newcommand{\Z}{\mathbb{Z}}
\newcommand{\R}{\mathbb{R}}
\newcommand{\E}{\mathbb{E}}
\DeclareMathOperator*{\argmax}{arg\,max}
\DeclareMathOperator*{\argmin}{arg\,min}
\newcommand{\Var}{\mathrm{Var}}

\newcommand{\Pois}{\textnormal{Pois}}

\newcommand{\rounded}{\textnormal{int}}
\newcommand{\facebook}{\texttt{facebook}\xspace}
\newcommand{\dblp}{\texttt{dblp}\xspace}

\newcommand{\defeq}{:=}

\DeclarePairedDelimiter{\abs}{\lvert}{\rvert}
\DeclarePairedDelimiter{\set}{\{}{\}}
\DeclarePairedDelimiter{\parens}{(}{)}
\DeclarePairedDelimiter{\bracks}{[}{]}
\DeclarePairedDelimiter{\floor}{\lfloor}{\rfloor}
\DeclarePairedDelimiter{\ceil}{\lceil}{\rceil}
\DeclarePairedDelimiter{\norm}{\lVert}{\rVert}

\theoremstyle{plain}
\newtheorem{theorem}{Theorem}[section]
\newtheorem{lemma}[theorem]{Lemma}
\newtheorem{corollary}[theorem]{Corollary}

\theoremstyle{definition}
\newtheorem{definition}[theorem]{Definition}

\newtheorem{remark}[theorem]{Remark}

%% file: abstract.tex
\begin{abstract}
We present an accelerated relax-and-round algorithm for concave coverage problems, which generalize the classic maximum coverage problem.
Building on the relax-and-round framework of Barman et al.~[\hyperlink{cite.barman2021tight}{STACS 2021}],
we propose two significant improvements.
First, we replace the linear programming (LP) relaxation step with a projected accelerated gradient method
applied to a smooth surrogate objective to achieve a $\tilde{O}(mn \varepsilon^{-1})$ running time.
Second, we use a specialized rounding scheme for the hypersimplex that combines the
Carathéodory decomposition algorithm in~Karalias et al.~[\hyperlink{cite.karalias2025geometric}{NeurIPS 2025}]
with randomized swap rounding of~Chekuri et al.~[\hyperlink{cite.chekuri2010dependent}{FOCS 2010}].
We prove tight approximation ratios for new reward functions,
including a $0.827$-approximation for the logarithmic reward $\varphi(x) = \log(1 + x)$.
Finally, we conduct maximum multi-coverage experiments on synthetic and real-world graphs,
demonstrating that our algorithm outperforms approaches that use state-of-the-art LP solvers.
\end{abstract}

%% file: introduction.tex
\section{Introduction}
\label{sec:introduction}

While the massive scale of modern datasets continues to push the machine learning (ML) frontier,
the corresponding costs of data curation, storage, and model training have increased proportionally.
Consequently, efficient data pruning has become a critical component of ML workflows.
Data subset selection is typically formulated as a combinatorial optimization problem where the objective is to find a core subset that fully represents the dataset while minimizing redundancy.
This approach is widely used across domains, including
active learning~\citep{wei2015submodularity,sener2018active,coleman2020selection,yehuda2022active,fahrbach2025gist},
machine translation~\citep{kirchhoff2014submodularity,biccici2014optimizing},
bioinformatics~\citep{hobohm1992selection,libbrecht2018choosing},
and data mixture optimization~\citep{renduchintala2024smart, sachdeva2024train, thudi2025mixmin}.
Motivated by these applications, we develop scalable algorithms for the \emph{concave coverage problem},
a generalization of the NP-hard maximum coverage problem that provides strong theoretical guarantees.

\begin{definition}[Concave coverage]
Given a concave, nondecreasing function $\varphi : \Z_{\ge 0} \rightarrow \R_{\ge 0}$,
a bipartite graph $G = (L, R, E)$ with $|L| = n$ and $|E| = m$,
nonnegative weights $w_{j}$ for each $j \in R$,
and a cardinality constraint $k \in \Z_{\ge 0}$,
the objective of \ConcaveCoverage is to maximize
\begin{equation}
\label{eqn:objective_function}
  C_{\varphi}(S) := \sum_{j \in R} w_j \varphi(|S \cap N(j)|),
\end{equation}
over subsets $S \subseteq L$ such that $|S| \le k$.
Without loss of generality, we assume $\varphi$ is normalized such that $\varphi(0) = 0$ and $\varphi(1) = 1$,
that $\sum_{j \in R} w_{j} = 1$, and that $\deg(j) \ge 1$ for all $j \in R$.
\end{definition}

\begin{definition}
The \emph{Poisson concavity ratio} of $\varphi$ is
\begin{equation}
\label{eqn:poisson_concavity_ratio}
  \alpha_{\varphi}
  := 
  \inf_{x \in \Z_{\ge 1}}
  \frac{\E[\varphi(\Pois(x))]}{\varphi(\E[\Pois(x)])}
  =
  \inf_{x \in \Z_{\ge 1}}
  \frac{\E[\varphi(\Pois(x))]}{\varphi(x)},
\end{equation}
where $\Pois(x)$ is a Poisson random variable with rate $x$.
We write $\alpha_{\varphi}(x) := \E[\varphi(\Pois(x))]/\varphi(x)$.
\end{definition}

To give an example, we can frame data pruning as a \emph{$c$-multi-coverage problem} on a bipartite graph~\citep{kirchhoff2014submodularity}.
Let $L$ be a set of $n$ text documents and $R$ be a set of semantic topics (e.g., concepts in a natural language processing task).
Define the edge $\{i,j\} \in E$ if document $i \in L$ represents topic $j \in R$
(e.g., an ESPN article connects to the ``sports'' topic).
To ensure a model learns each concept sufficiently, our goal is to select a subset of data $S \subseteq L$ with budget $|S| \le k$
such that every topic is covered at least $c$ times.
This objective is captured by the concave function $\varphi(x) = \min(x, c)$.
Since concave coverage is an instance of monotone submodular maximization subject to a cardinality constraint,
the greedy algorithm guarantees a $(1-\sfrac{1}{e})$-approximate solution~\citep{nemhauser1978analysis}.
\citet{barman2021tight, barman2022tight}, however, introduced a relax-and-round algorithm for concave coverage
and proved it has a tight approximation ratio equal to the Poisson concavity ratio $\alpha_{\varphi}$,
which is purely a function of $\varphi$ and often better than $1 - \sfrac{1}{e}$.
For example, if we require deep semantic coverage with $c = 1000$, then $\alpha_{\varphi} =  1 - c^c / (e^c c!) \ge 0.987$.
Conversely, setting $c = 1$ recovers the classic maximum coverage problem and its $1-\sfrac{1}{e}$ guarantee.

\begin{theorem}[{\citet{barman2021tight}}]
\label{thm:barman_approximation_ratio}
For any normalized nondecreasing concave function $\varphi$,
there is a polynomial-time $\alpha_{\varphi}$-approximation
algorithm for the \ConcaveCoverage problem.
Furthermore, for $\varphi(n) = o(n)$, it is NP-hard to approximate
the \ConcaveCoverage problem to within a factor of
$\alpha_\varphi + \varepsilon$, for any constant $\varepsilon > 0$.
\end{theorem}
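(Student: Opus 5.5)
The algorithmic half goes through an LP relaxation followed by a rounding step that loses no more than $\alpha_\varphi$ per element of $R$.

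\textbf{Relaxation.} Write $\varphi$ on $\Z_{\ge 0}$ as the minimum of its supporting lines. Concavity gives $\varphi(t)=\min_{\ell}(a_\ell t+b_\ell)$, with one line for each segment $[\ell,\ell+1]$: slope $a_\ell=\varphi(\ell+1)-\varphi(\ell)$ and intercept $b_\ell=\varphi(\ell)-\ell a_\ell$. Only $\ell<\max_j \deg(j)$ matter, so the LP has polynomially many lines. The LP has variables $x\in[0,1]^L$ with $\sum_i x_i\le k$, and for each $j\in R$ a variable $y_j\le a_\ell\sum_{i\in N(j)}x_i+b_\ell$ for every $\ell$. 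It maximizes $\sum_j w_j y_j$. Its optimum equals $\max_x \sum_j w_j \bar\varphi(x(N(j)))$, where $\bar\varphi$ is the piecewise-linear concave extension. Any integral $S$ is feasible with value $C_\varphi(S)$, so the LP value is at least $\OPT$.

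\textbf{Rounding.} Take an optimal fractional $x$, assume without loss of generality that $\sum_i x_i=k$, and round with a dependent scheme whose output $S$ has $|S|=k$ and is negatively correlated on each $N(j)$. Pipage or swap rounding on the hypersimplex does this. Fix $j$ and let $X_j=|S\cap N(j)|$, a sum of negatively correlated Bernoullis with mean $\mu_j=x(N(j))$. The needed element-wise inequality is
\[
\E[\varphi(X_j)]\ \ge\ \E[\varphi(\Pois(\mu_j))]\ \ge\ \alpha_\varphi\,\bar\varphi(\mu_j).
\]
For the first inequality I would use the standard convex-order argument. Negative correlation lets us compare with independent Bernoullis (or use the concave-extension property of pipage rounding directly). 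A sum of independent Bernoullis with mean $\mu$ dominates $\Pois(\mu)$ in the concave order, because each Bernoulli$(p)$ is a mean-preserving contraction of $\Pois(p)$ and concave orders are closed under independent sums. The second inequality has to cover non-integer $\mu$, whereas $\alpha_\varphi$ is defined only as an infimum over integers. So I need to check that the ratio $\E[\varphi(\Pois(\mu))]/\bar\varphi(\mu)$ on $[\ell,\ell+1]$ is minimized at an endpoint. Also, for $\mu<1$ the ratio is at least $\E[\Pois(\mu)\wedge 1]/\mu=(1-e^{-\mu})/\mu\ge 1-1/e$, and this must be handled against $\alpha_\varphi(1)=1-1/e$. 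Summing over $j$ with weights $w_j$ then gives $\E[C_\varphi(S)]\ge\alpha_\varphi\cdot\mathrm{LP}\ge\alpha_\varphi\OPT$. Derandomization can be done by pipage rounding with conditional expectations, or the bound can be kept in expectation.

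\textbf{Main obstacle: the interpolation claim.} I expect the hardest step to be showing that the Poisson ratio over non-integer means is controlled by its values at integers. I would first try to show that $g(\mu)=\E[\varphi(\Pois(\mu))]$ is concave in $\mu$, with $g'(\mu)=\E[\varphi(\Pois(\mu)+1)-\varphi(\Pois(\mu))]$, which is nonincreasing. Since $\bar\varphi$ is linear on $[\ell,\ell+1]$, the ratio of a concave function to a positive linear function on an interval attains its minimum at an endpoint. To see this, write $g\ge$ the chord, and the chord divided by a linear function is monotone in $\mu$.

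\textbf{Hardness.} For the hardness half ($\varphi(n)=o(n)$), I would follow Feige's reduction via label cover/partition systems. In the soundness case every chosen set meets each element essentially like independent random choices, so coverage counts become approximately $\Pois(x)$, where $x$ is the integer achieving the infimum. This yields value at most $(\alpha_\varphi+\varepsilon)\OPT$. The condition $\varphi(n)=o(n)$ ensures the infimum is approached at bounded $x$ and controls the tails. I would cite this reduction from Barman et al.\ rather than reproduce it.
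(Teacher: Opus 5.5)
Your algorithmic half follows the route the paper outlines in its preliminaries. The LP gives $C(\mat{x}^*)\ge \OPT$, the rounding gives $\E[C(S)]\ge F(\mat{x}^*)$, and the linking inequality $F(\mat{x})\ge \alpha_\varphi\, C(\mat{x})$ closes the chain; the paper only cites that inequality as Theorem~2 of Barman et al. The hardness half is cited in both. You additionally sketch a proof of the linking inequality, via $\mathrm{Bernoulli}(p)\le_{\mathrm{cx}}\Pois(p)$, closure of the convex order under independent sums, concavity of $g(\mu)=\E[\varphi(\Pois(\mu))]$, and the chord argument on $[\ell,\ell+1]$ for $\ell\ge 1$. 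That sketch is sound. Two local steps are wrong as written, though.

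\emph{Negative correlation.} The step ``negative correlation lets us compare with independent Bernoullis'' is false. Negative correlation, even in the cylinder sense that swap rounding guarantees, yields Chernoff-type tail bounds, not the concave-order comparison. Take four exchangeable $\mathrm{Bernoulli}(1/2)$ variables whose sum $X$ is uniform on $\{1,3\}$. They satisfy every cylinder inequality, with equality for pairs and triples, yet $\E[\min(X,2)]=1.5<1.625=\E[\min(\mathrm{Bin}(4,1/2),2)]$. You need the property in your parenthetical, which is what the paper uses. Pipage rounding does not decrease $F$, because $F$ is convex along $\mat{e}_i-\mat{e}_j$. Swap rounding satisfies $\E[f(S)]\ge F(\mat{x})$ for every submodular $f$ (Chekuri et al., Theorem~2.1). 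So run your per-element chain for the independent rounding of $\mat{x}^*$, sum over $j$, and only then compare with the rounded set.

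\emph{The case $\mu<1$.} Your treatment rests on $\alpha_\varphi(1)=1-1/e$. But $\alpha_\varphi(1)=\E[\varphi(\Pois(1))]$, which equals $1-1/e$ only when $\varphi(t)=\min(t,1)$ on the integers; it is about $0.827$ for the log reward. So $(1-e^{-\mu})/\mu\ge 1-1/e$ is too weak. Use your chord argument on $[0,1]$ instead: $g(0)=0$ and concavity give $g(\mu)\ge \mu\, g(1)$, while $\bar\varphi(\mu)=\mu$ there, so the ratio is at least $\alpha_\varphi(1)$.

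As a side remark, $\varphi(n)=o(n)$ is not what places the infimum at bounded $x$. The appendix bound $\alpha_\varphi(x)\ge 1-\Pr(\Pois(x)=x)$ forces $\alpha_\varphi(x)\to 1$ for every admissible $\varphi$, so that hypothesis must serve another purpose in the reduction. Since you defer to Barman et al.\ there, as the paper does, this does not affect your argument.
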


\noindent
We note that for any reward function $\varphi$, we have $\alpha_{\varphi} \ge 1 - \sfrac{1}{e}$.
We give a short proof in~\Cref{app:poisson_concavity_ratio_lower_bound}.
\vspace{-0.5cm}

\begin{table*}
    \centering
    \caption{Running times and approximation ratios of concave coverage algorithms,
    where $\omega \approx 2.371$~is the matrix multiplication constant
    and $\Delta \hspace{-0.045cm}=\hspace{-0.045cm} \max_{j \in R} \deg(j)$ is the maximum degree of the right nodes.}
    \label{tab:theory_comparison}
    \begin{tabular}{l c c c c c}
        \toprule
        \textbf{Algorithm} & \textbf{Running time} & \textbf{Approximation}\\ 
        \midrule
        Greedy & $O(mk)$ & $1 - \sfrac{1}{e}$ \\
        \citet{barman2021tight} & $\tilde{O}((n + |R|)^{\omega} \log(1/\varepsilon) + n m \Delta)$ & $\alpha_{\varphi} - \varepsilon$ \\
        This paper & $O\parens*{(m + n \log n) \frac{n \sqrt{\log n}}{\varepsilon}}$ & $\alpha_{\varphi} - \varepsilon$ \\
        \bottomrule
    \end{tabular}
    \vspace{-0.25cm}
\end{table*}

\subsection{Our contributions}

We summarize the main contributions of our work below:

\begin{itemize}
    \item \emph{Faster LP relaxation.}
    We construct a smooth surrogate coverage objective $\SmoothC(\mat{x})$ in \Cref{sec:smooth_proxy_objective}
    and run accelerated projected gradient ascent starting from $\mat{x}^{(0)} = \mat{x}_{\text{greedy}}$ (\Cref{alg:accelerated_relax_and_round})
    to optimize this concave proxy for $\mat{x} \in \Delta_{n,k}$.
    We bound the number of steps required for a
    $(1-\varepsilon)$-approximation to the original objective $C(\mat{x})$ in \Cref{thm:relative_error_algorithm},
    achieving a $\tilde{O}(mn/\varepsilon)$ running time,
    which is strictly faster than \citet{barman2021tight} (see \Cref{app:runtime_comparison}).

    \item \emph{Rounding.}
    After computing an approximate LP solution $\widetilde{\mat{x}}^*$,
    we use a rounding scheme tailored for points in the hypersimplex $\Delta_{n,k}$.
    Pipage rounding in \citet{barman2021tight} takes $O(nm \Delta)$ time,
    but by combining the Carathéodory decomposition in \citet{karalias2025geometric}
    and swap rounding~\citep{chekuri2010dependent},
    we round $\widetilde{\mat{x}}^*$ to a high-value integral feasible solution in $O(nk \log n)$ time
    and remove the dependence on the number of edges $m$ and maximum degree $\Delta$.

    \item \emph{Approximation ratios.}
    We study new reward functions and prove tight approximation ratios in \Cref{sec:approximation_ratios}.
    For example, if $\varphi(x) = \log(1 + x)$ then $\alpha_{\varphi} \approx 0.827$.
    We also study piecewise linear functions
    and the isoelastic utility $\varphi(x) = x^{1 - \gamma}$, which we show has a closed-form integral expression for $\alpha_{\varphi}$
    by building on the infinite series in \citep{barman2021tight}.

    \item \emph{Experiments.}
    We construct $c$-multi-coverage instances where the greedy algorithm gets stuck at $(1-\sfrac{1}{e})\OPT$
    but relax-and-round converges to the optimal solution, i.e., $\alpha_{\varphi} \rightarrow 1$.
    Then we compare \Cref{alg:accelerated_relax_and_round} to state-of-the-art LP solvers with pipage rounding on real-world SNAP graphs,
    demonstrating better end-to-end running times and rounded objective values.

\end{itemize}

\subsection{Related work}

\paragraph{Submodular maximization.}
Submodular maximization subject to a cardinality constraint $k$ is an NP-hard problem.
For monotone submodular functions, \citet{nemhauser1978analysis} showed that the greedy
algorithm achieves a $(1-\sfrac{1}{e})$-approximation, which is asymptotically optimal unless P = NP~\citep{feige1998threshold}.
Over the last decade, there has been significant research on distributed algorithms for submodular maximization
in the MapReduce model~\citep{mirzasoleiman2013distributed,mirrokni2015randomized,barbosa2015power,liu2019submodular} and
low-adaptivity models~\citep{balkanski2019exponential, ene2019submodular, fahrbach2019submodular,chen2024scalable}.

\paragraph{Concave coverage.}
\citet{barman2021tight,barman2022tight} study the concave coverage problem
and give a relax-and-round algorithm with tight approximation guarantees in terms of the
Poisson concavity ratio $\alpha_{\varphi}$.
A related line of work starts with \citet{stobbe2010efficient},
who give an algorithm for \emph{minimizing} a class of functions they call \emph{decomposable submodular functions},
i.e., those that can be represented as a sum of concave functions applied to modular functions (SCMs).
\citet{dolhansky2016deep} later studied \emph{deep submodular functions} (DSFs)
and showed that deep neural networks with nonnegative weights correspond to a DSF
and that DSFs strictly generalize SCMs.

\citet{bai2018submodular} is the first work to \emph{maximize} DSFs, a class of submodular functions that includes concave coverage.
Their algorithm uses projected gradient ascent and pipage rounding,
and requires $O(n^2 / \varepsilon^2)$ calls to a \emph{submodular value oracle} before rounding.
In an independent work, \citet{lv2026efficient} recently used accelerated projected gradient ascent and pipage rounding for SCM maximization.
However, neither guarantees worst-case constant-factor approximation ratios.
In fact, \citet{lv2026efficient} state that ``the approximation bounds above are not fixed constants,''
and specifically identify
$\varphi(x) = \log(1 + x)$ as one of ``several commonly used [concave] functions.'' 
Our work addresses these open problems with tight analysis using the framework of \citet{barman2021tight}.

\paragraph{ML applications.}
\citet{kirchhoff2014submodularity} use \emph{feature-based submodular functions} for a machine translation
task (an instance of concave coverage)
and find that ``the square root $\varphi(x) = \sqrt{x}$ performs slightly better than the log function'' in their experiments.
\citet{wei2014unsupervised} use the same formulation for a large-scale speech data subset selection task.
\citet{karimi2017stochastic} study weighted coverage functions and
give a concave relax-and-round algorithm for influence maximization.
\citet{deng2025procurement} design reverse data auctions via submodular maximization
and ``quasi-linear utilities,'' for which we derive tight and novel approximation ratios in \Cref{thm:piecewise_linear_ratio}.
For more examples, we direct the reader to~\citep{bilmes2022submodularity} for an excellent survey on applications of submodularity in machine learning.

%% file: preliminaries.tex
\section{Preliminaries}
\label{sec:preliminaries}

\paragraph{Notation.}
Let $C(S) \defeq C_{\varphi}(S)$ and $[n] \defeq \{1, 2, \dots, n\}$.
We use boldface for vectors $\mat{x} \in \R^n$
and matrices $\mat{A} \in \R^{m \times n}$,
with iterates of our projected gradient algorithm indexed by superscripts $\mat{x}^{(t)}$.
Let $\mat{1}_{S} \in \{0, 1\}^n$ be the indicator vector for $S \subseteq [n]$.
We use $\log(x)$ for the natural logarithm,
and we define the \emph{hypersimplex} as $\Delta_{n, k} \defeq \{\mat{x} \in \R^n : \sum_{i=1}^n x_i = k \text{~and~} 0 \le x_i \le 1\}$.

\paragraph{Graphs.}
For a bipartite graph $G = (L, R, E)$,
we denote left and right nodes by $i \in L$ and $j \in R$, respectively.
Let $n = |L|$ since $L$ is the ground set of our subset selection problem and $m = |E|$.
The neighborhood and degree of node $j \in R$ are $N(j) = \{i \in L : (i,j) \in E\}$ and $\deg(j) = |N(j)|$.

\addtocontents{toc}{\protect\setcounter{tocdepth}{0}}
\subsection{\citet{barman2021tight}'s algorithm}
\addtocontents{toc}{\protect\setcounter{tocdepth}{3}}  

We start by reviewing the relax-and-round framework in \citep{barman2021tight} for concave coverage problems.

\paragraph{LP relaxation.}
First, extend the reward function $\varphi$ to be piecewise linear on the domain $\R_{\ge 0}$:
\begin{equation*}
  \varphi(x)
  :=
  \lambda \varphi(\floor{x})
  +
  (1-\lambda) \varphi(\floor{x} + 1),
\end{equation*}
where $\lambda = 1 - (x - \floor{x}) \in [0, 1]$.
This piecewise linear extension is also concave and nondecreasing.
Now consider the natural linear programming (LP) relaxation:
\begin{tcolorbox}[
   title={Concave coverage LP relaxation.},
   colback=white,
]
\vspace{-0.3cm}
\begin{align*}
    \text{maximize~~} & \sum_{j \in R} w_j c_j \\
    \text{subject to~~}
    & y_j = \sum_{i \in N(j)} x_i, ~~ \forall j \in R \\
    & c_j \le \varphi(y_j), ~~ \forall j \in R \\
    & \sum_{i = 1}^n x_i = k \\
    & 0 \le x_i \le 1, ~~ \forall i \in L
\end{align*}
\end{tcolorbox}

\noindent
Each piecewise linear constraint $c_j \le \varphi(y_j)$ can be represented using at most $\deg(j)$ linear constraints since
$\varphi(x)$ is a concave hull and the maximum value of $y_j$ is $\deg(j)$.
For example, if $\deg(j) = n$, then $c_j \le \varphi(y_j)$ is equivalent to
\begin{align*}
    c_j &\le \varphi(0) + (\varphi(1) - \varphi(0)) \cdot (y_j - 0), \\
    c_j &\le \varphi(1) + (\varphi(2) - \varphi(1)) \cdot (y_j - 1), \\
    &~~~~~~~~~~~~~~~~~~~~~~~~~\vdots \\
    c_j &\le \varphi(n-1) + (\varphi(n) - \varphi(n-1)) \cdot (y_j -(n-1)).
    \vspace{-0.2cm}
\end{align*}

Let $\mat{x}^* \in [0, 1]^n$ be an optimal solution to the LP relaxation.
By extending $C$ to the domain $[0,1]^n$ as in the LP relaxation,
the two functions agree on integral points, i.e., $C(\mat{1}_S) = C(S)$,
and we have the upper bound
\begin{equation}
\label{eqn:lp_relaxation_bound_1}
  C(\mat{x}^*)
  \ge
  C(\mat{1}_{S^*})
  =
  \OPT,
\end{equation}
where $S^*$ is an optimal solution to the discrete problem.

\vspace{-0.2cm}
\paragraph{Multilinear extension.}
For any set function
$f : 2^{[n]} \rightarrow \R_{\ge 0}$,
its \emph{multilinear extension}
$F : [0,1]^n \rightarrow \R_{\ge 0}$ is defined as
\vspace{-0.3cm}
\begin{equation*}
    F(\mat{x})
    :=
    \sum_{S \subseteq [n]} f(S) \prod_{i \in S} x_i \prod_{i \not\in S} (1 - x_i).
    \vspace{-0.1cm}
\end{equation*}
There is a probabilistic interpretation of the multilinear extension.
For any $\mat{x} \in [0,1]^n$,
let $S_{\mat{x}}$ be a random subset where each $i \in [n]$
is included independently with probability $x_i$, and not included with probability $1-x_i$.
Then, the multilinear extension of $f$ is simply $F(\mat{x}) = \E[f(S_{\mat{x}})]$.

\vspace{-0.2cm}
\paragraph{Pipage rounding.}
After computing the LP solution $\mat{x}^*$,
\citet{barman2021tight} use \emph{pipage rounding} \citep{ageev2004pipage}
to round $\mat{x}^*$ to a feasible integral vector $\mat{x}^\rounded \in \{0,1\}^n$ satisfying
\begin{equation}
\label{eqn:pipage_rounding_guarantee}
  F(\mat{x}^\rounded) \ge F(\mat{x}^*),
\end{equation}
where the underlying set function is $f(S) = C(S)$.
Pipage rounding can be applied here because
\begin{enumerate}
    \item $C(S)$ is submodular $\implies$ $F(\mat{x})$ is convex along any line of direction $\mat{e}_i - \mat{e}_j$ for $i,j \in [n]$;
    \item $F(\mat{x})$ can be evaluated in polynomial time, e.g., using dynamic programming.
\end{enumerate}

\begin{theorem}[{\citet[Theorem 2]{barman2021tight}}]
For any $\mat{x} \in [0,1]^n$, we have
\begin{equation}
\label{eqn:linking_the_extensions}
    F(\mat{x})
    \ge
    \parens*{\min_{i \in [n]} \alpha_{\varphi}(i)} C(\mat{x}),
\end{equation}
where $F(\mat{x})$ is the multilinear extension of $C(S)$
and $C(\mat{x})$ is the LP relaxation objective.
\end{theorem}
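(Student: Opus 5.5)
By linearity over $j \in R$, it suffices to prove a per-node inequality. Both $F(\mat{x}) = \sum_j w_j \E[\varphi(|S_{\mat{x}} \cap N(j)|)]$ and $C(\mat{x}) = \sum_j w_j \varphi(y_j)$, with $y_j = \sum_{i \in N(j)} x_i$, decompose over right nodes, and the weights are nonnegative. So I will fix $j$ and show
\[
\E\bigl[\varphi(X)\bigr] \ge \Bigl(\min_{i \in [n]} \alpha_{\varphi}(i)\Bigr) \varphi(y),
\]
where $X = \sum_{i \in N(j)} B_i$ is a sum of independent Bernoulli$(x_i)$ variables with mean $y = y_j \le \deg(j) \le n$. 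Here $\varphi$ is the piecewise linear extension.

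\textbf{Step 1 (Bernoulli sum to Poisson).} I would show $\E[\varphi(X)] \ge \E[\varphi(\Pois(y))]$ for concave $\varphi$. The cleanest route is a convex-order argument. A Bernoulli$(p)$ variable is dominated in convex order by $\Pois(p)$: they have equal means, and $\Pois(p)$ is a mean-preserving spread, since $\Pois(p)$ conditioned appropriately can be obtained from Bernoulli$(p)$ by adding a zero-mean perturbation. Concretely, one can couple $\Pois(p) = B + Z$ with $\E[Z \mid B] = 0$ by using concavity of $\varphi$ on integers and the standard result (Hoeffding 1956, or Barman et al.'s Lemma) that $\E[g(\text{Bern}(p))] \ge \E[g(\Pois(p))]$ for concave $g$. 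Convex order is preserved under independent sums, giving $X \le_{cx} \Pois(y)$, and hence $\E[\varphi(X)] \ge \E[\varphi(\Pois(y))]$. Alternatively, I would replace the $x_i$ one at a time by Poisson variables, using concavity conditioned on the rest.

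\textbf{Step 2 (integer to real rates).} It remains to show $\E[\varphi(\Pois(y))] \ge \bigl(\min_{1 \le i \le n} \alpha_\varphi(i)\bigr)\varphi(y)$ for real $y \in [0, n]$. If $y = 0$, both sides are $0$. Otherwise write $y = a + t$ with $a = \floor{y}$ and $t \in [0,1)$. Since $\varphi$ is linear on $[a, a+1]$, we have $\varphi(y) = (1-t)\varphi(a) + t\varphi(a+1)$. On the left, $h(y) := \E[\varphi(\Pois(y))]$ is concave in $y$. This follows because $h'(y) = \E[\varphi(\Pois(y)+1) - \varphi(\Pois(y))]$, and the integer increments of $\varphi$ are nonincreasing while $\Pois(y)$ is stochastically increasing in $y$. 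Concavity of $h$ gives $h(y) \ge (1-t) h(a) + t\, h(a+1)$. Bounding $h(a) \ge \alpha \varphi(a)$ and $h(a+1) \ge \alpha \varphi(a+1)$ for integers in $[1, n]$ then finishes the step.

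\textbf{Main obstacle: the edge cases in Step 2.} When $a = 0$, we need $h(y) \ge \alpha\, y$ for $y \in (0,1)$. This follows from concavity of $h$ on $[0,1]$, together with $h(0) = 0$ and $h(1) \ge \alpha$. When $y \in (n-1, n]$ we have $a + 1 = n$, which stays within the allowed range because $y \le \deg(j) \le n$.

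Step 1 is the one needing the most care, since it must be justified cleanly through the convex-order comparison. I would first try the direct per-coordinate replacement: for concave $g$ on $\Z_{\ge 0}$ and $p \in [0,1]$, the inequality $(1-p) g(0) + p\, g(1) \ge \E[g(\Pois(p))]$ follows because $g(k) \le g(0) + k\,(g(1) - g(0))$, and taking expectations yields exactly the desired bound. Applying this coordinate by coordinate, conditioned on the sum of the remaining variables (a shifted concave function remains concave), completes the reduction.
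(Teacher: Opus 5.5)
Your proof is correct; the only slip is at $y=n$, where $a=n$ rather than $a+1=n$, but then $t=0$ and the bound $h(n)\ge \alpha\varphi(n)$ is immediate. The paper does not reprove this result but cites \citet{barman2021tight}, and your argument follows the same two-step route as that source: a Poisson-binomial-to-Poisson comparison for concave $\varphi$, for which your coordinate-by-coordinate replacement via $g(k)\le g(0)+k(g(1)-g(0))$ is a clean self-contained proof, followed by concavity of $y\mapsto \E[\varphi(\Pois(y))]$ and piecewise linearity of $\varphi$ to pass from integer to real rates (the content of their Proposition A.1, which the paper invokes in its appendix).
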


This result connects the LP solution $C(\mat{x}^*) \ge \OPT$ to the multilinear extension $F(\mat{x^*})$,
and is where the Poisson concavity ratio $\alpha_{\varphi}$ gets introduced.
The approximation ratio of \citet{barman2021tight}'s relax-and-round method follows by chaining the inequalities in \eqref{eqn:lp_relaxation_bound_1}, \eqref{eqn:linking_the_extensions}, and \eqref{eqn:pipage_rounding_guarantee}.

\paragraph{Running time.}
The LP has
$O(n + |R|)$ variables,
$O(m)$ linear constraints, and
$O(m)$ nonzeros in the constraint matrix.
Using the fast theoretical LP solvers in~\citep{van2020deterministic, cohen2021solving},
the running time to compute $\mat{x}^*$ to $\varepsilon$-relative accuracy is $\tilde{O}((n + |R|)^{\omega} \log(1/\varepsilon))$,
where $\omega \approx 2.371$ is the matrix multiplication exponent~\citep{alman2025more}.
If $\varphi$ saturates early, i.e., $\varphi(\ell) = \varphi(\ell + 1)$,
then we can write the LP with fewer linear constraints and nonzeros.

After finding $\mat{x}^*$,
pipage rounding runs for $O(n)$ steps and evaluates $F(\mat{x})$ twice in each step.
By the linearity of expectation,
\vspace{-0.15cm}
\begin{align*}
    F(\mat{x})
    &=
    \E[C(S_{\mat{x}})]
    =
    \sum_{j \in R} w_j \E\bracks*{\varphi(|S_{\mat{x}} \cap N(j)|)}.
\end{align*}
Each random variable $|S_{\mat{x}} \cap N(j)|$ follows a \emph{Poisson binomial distribution} with parameter $\mat{x}_{N(j)}$.
\citet[Proposition A.8]{barman2021tight} compute each $\E\bracks{\varphi(|S_{\mat{x}} \cap N(j)|)}$ term using a discrete Fourier transform in $O(\deg(j)^2)$ time.
This means we can evaluate $F(\mat{x})$ in $O\parens{\sum_{j \in R} \deg(j)^2} = O\parens{m \Delta}$ time,
where $\Delta = \max_{j \in R} \deg(j)$.
It follows that the running time of pipage rounding is $O(n m \Delta)$,
and the overall running time of the relax-and-round algorithm is given in \Cref{tab:theory_comparison}.

%% file: smooth_proxy_objective.tex
\section{Smooth approximate coverage function}
\label{sec:smooth_proxy_objective}

Building on \citet{barman2021tight}'s approach,
we present our accelerated relax-and-round algorithm for concave coverage problems.
Our first step is to construct a smooth approximation for $C(\mat{x})$.
This allows us to use accelerated projected gradient methods to design a fast approximate LP solver.

\paragraph{Nesterov smoothing for piecewise linear functions.}

Let $f : \R^n \rightarrow \R$ be the piecewise linear concave function
\[
    f(\mat{x}) = \min_{i \in [d]} \parens*{\mat{a}_{i}^\intercal \mat{x} + b_i}.
\]
Since $f(\mat{x})$ is not continuously differentiable, we use Nesterov smoothing with an entropy regularizer.

\vspace{0.2cm}
\noindent
\begin{minipage}[c]{0.50\textwidth}%
\begin{definition}
For any $\mu > 0$, the \emph{log-sum-exp (LSE) smoothing} of $f(\mat{x})$ is
\[
    f_{\mu}(\mat{x}) \defeq -\mu \log\parens*{\sum_{i=1}^d \exp\parens*{- \frac{\mat{a}_{i}^\intercal \mat{x} + b_i}{\mu}}}.
\]
\end{definition}%
As $\mu \rightarrow 0$, $f_{\mu}(\mat{x})$ converges uniformly from below to $f(\mat{x})$.
We give the gradient, smoothness, and approximation guarantees of $f_\mu(\mat{x})$
in \Cref{app:log_exp_sum_trick}.
\end{minipage}\hfill
\begin{minipage}[c]{0.49\textwidth}
    \centering
    \vspace{-0.05cm}
    \includegraphics[width=0.75\linewidth]{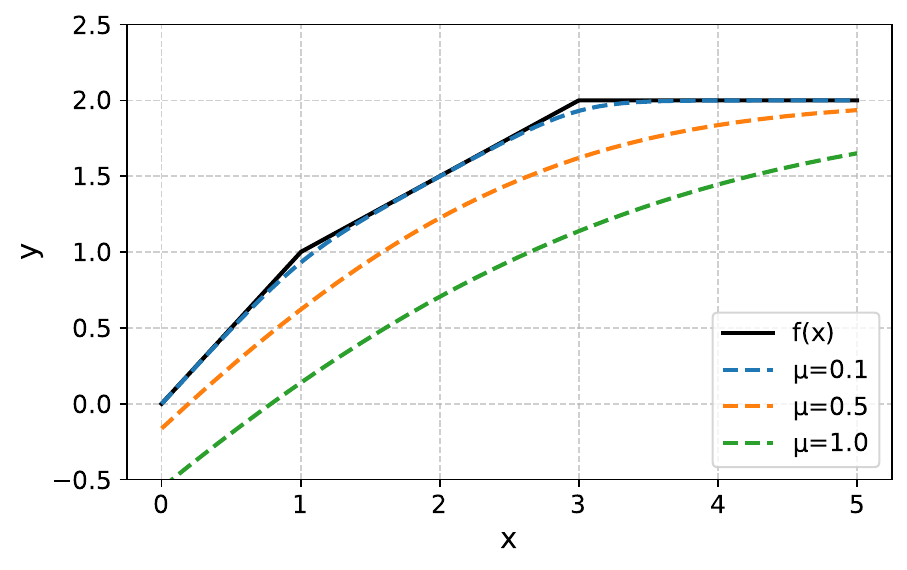}
    \vspace{-0.32cm}
    \captionof{figure}{Examples of LSE smoothing.}
\end{minipage}

\paragraph{Smoothing the coverage function $C(\mat{x})$.}
We apply LSE smoothing to each $j \in R$.
Start by writing
\[
    \varphi_{j}(\mat{x})
    \defeq
    \min_{i = 1}^{\deg(j)} \parens*{\mat{a}_{j,i}^\intercal \mat{x} + b_{i}},
\]
where
\begin{align*}
    \mat{a}_{j, i}
    &=
    (\varphi(i) - \varphi(i - 1)) \mat{1}_{N(j)} \\
    b_{i}
    &=
    \varphi(i - 1) - (\varphi(i) - \varphi(i - 1)) (i - 1).
\end{align*}

\begin{definition}
For any $\mu > 0$, define the \emph{smooth coverage function} as
\begin{equation*}
\label{def:smooth_c}
    \SmoothC(\mat{x})
    \defeq
    \sum_{j \in R} w_j \varphi_{j,\mu}(\mat{x}).
\end{equation*}
Note that $\SmoothC(\mat{x})$ is not the same as applying LSE smoothing to $C(\mat{x})$, hence the tilde notation.
\end{definition}

LSE smoothing is a temperature-controlled softmin, so define the probabilities
\begin{align}
\label{eqn:p_def}
    p_{j, i}(\mat{x})
    &\defeq
    \frac{\exp\parens*{-(\mat{a}_{j, i}^\intercal \mat{x} + b_{i})/\mu}}{\sum_{k=1}^{\deg(j)}\exp\parens{-(\mat{a}_{j, k}^\intercal \mat{x} + b_{k})/\mu}}.
\end{align}

\begin{restatable}{lemma}{SmoothCGaurantees}
\label{lem:smooth_c_guarantees}
The smooth coverage function $\SmoothC(\mat{x})$ has the following properties:
\begin{itemize}
    \item \emph{Gradient.}
    \begin{equation*}
    \label{eqn:smooth_c_grad_def}
        \nabla \SmoothC(\mat{x})
        =
        \sum_{j \in R} w_j \mat{1}_{N(j)} \sum_{i=1}^{\deg(j)} p_{j, i}(\mat{x}) \cdot (\varphi(i) - \varphi(i - 1)).
    \end{equation*}
    \item \emph{Smoothness.}
    Let $\degreeR \defeq \sum_{j \in R} w_j \deg(j)$ be the effective degree in $R$. Then,
    \[
        \norm{\nabla^2 \SmoothC(\mat{x})}_{2} \le \frac{\degreeR}{4 \mu}.
    \]
    \item \emph{Approximation.}
    Let $\mat{x}^*$ be a maximizer of $C(\mat{x})$ and $\widetilde{\mat{x}}_\mu^*$ be a maximizer of $\SmoothC(\mat{x})$. Then,
    \[
        C(\mat{x}^*) - \mu \log \degreeR
        \le
        \widetilde{C}_{\mu}(\widetilde{\mat{x}}^*_\mu)
        \le
        C(\mat{x}^*).
    \]
\end{itemize}
\end{restatable}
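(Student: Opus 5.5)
The plan is to prove each of the three properties per right node $j \in R$ for the single LSE-smoothed piece $\varphi_{j,\mu}$, and then sum with the weights $w_j$. I would first record the elementary facts about $f_\mu$ (deferred to \Cref{app:log_exp_sum_trick}), writing $\ell_i(\mat{x}) = \mat{a}_i^\intercal\mat{x}+b_i$:
\begin{itemize}
\item $\nabla f_\mu = \sum_i p_i \mat{a}_i$, where $p_i$ is the softmin weight of $\ell_i$.
\item $\nabla^2 f_\mu = -\frac{1}{\mu}\sum_i p_i(\mat{a}_i-\bar{\mat{a}})(\mat{a}_i-\bar{\mat{a}})^\intercal$ with $\bar{\mat{a}}=\sum_i p_i\mat{a}_i$, so $f_\mu$ is concave.
\item $f-\mu\log d \le f_\mu \le f$.
\end{itemize}
The sandwich bound follows because the sum of exponentials lies between its largest term $\exp(-f/\mu)$ and $d$ times that term.

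\emph{Gradient.} Let $s_i := \varphi(i)-\varphi(i-1)$, so that $\mat{a}_{j,i}=s_i\mat{1}_{N(j)}$. Applying the chain-rule formula for $\nabla f_\mu$ with the probabilities in \eqref{eqn:p_def} gives $\nabla\varphi_{j,\mu}(\mat{x})=\mat{1}_{N(j)}\sum_i p_{j,i}(\mat{x})\,s_i$. Multiplying by $w_j$ and summing over $j$ yields the stated gradient.

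\emph{Smoothness.} Because every $\mat{a}_{j,i}$ is a scalar multiple of the same vector $\mat{1}_{N(j)}$, the Hessian collapses to a rank-one matrix:
\[
\nabla^2\varphi_{j,\mu}(\mat{x}) = -\tfrac{1}{\mu}\,\Var_{p_j}(s)\,\mat{1}_{N(j)}\mat{1}_{N(j)}^\intercal ,
\]
where $\Var_{p_j}(s)$ is the variance of $s_i$ under the distribution $p_{j,\cdot}$. By concavity, monotonicity and normalization of $\varphi$, we have $s_i\in[0,1]$ (indeed $s_1=1\ge s_2\ge\cdots\ge0$). Popoviciu's inequality then gives $\Var_{p_j}(s)\le 1/4$. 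Since $\norm{\mat{1}_N\mat{1}_N^\intercal}_2=\deg(j)$, the triangle inequality over $j$ gives $\norm{\nabla^2\SmoothC}_2\le\sum_j w_j\deg(j)/(4\mu)=\degreeR/(4\mu)$. This step is the main point where the special structure is used; without the collinearity of the $\mat{a}_{j,i}$, one would only get a bound in terms of $\max\norm{\mat{a}}^2$.

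\emph{Approximation.} First I would check that $\varphi_j(\mat{x})=\varphi(y_j)$ for $y_j=\sum_{i\in N(j)}x_i\in[0,\deg(j)]$. This holds because the $\deg(j)$ affine pieces are the chords of the concave piecewise-linear extension on $[i-1,i]$, and a concave piecewise-linear function equals the minimum of its pieces. Hence $\sum_j w_j\varphi_j(\mat{x})=C(\mat{x})$ on $[0,1]^n$.

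The per-$j$ sandwich then gives $C(\mat{x})-\mu\sum_j w_j\log\deg(j)\le\SmoothC(\mat{x})\le C(\mat{x})$. Since $\sum_j w_j=1$ and $\log$ is concave, Jensen's inequality gives $\sum_j w_j\log\deg(j)\le\log\degreeR$. Finally, optimality of $\widetilde{\mat{x}}^*_\mu$ and of $\mat{x}^*$ over the same feasible set gives the chain
\[
C(\mat{x}^*)-\mu\log\degreeR \le \SmoothC(\mat{x}^*) \le \SmoothC(\widetilde{\mat{x}}^*_\mu) \le C(\widetilde{\mat{x}}^*_\mu) \le C(\mat{x}^*).
\]
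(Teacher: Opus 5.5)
Your proposal is correct and follows essentially the same route as the paper's proof. The paper also works node by node: it gets the gradient from the LSE gradient formula, collapses each Hessian to $-\frac{1}{\mu}\Var(Y_j)\,\mat{1}_{N(j)}\mat{1}_{N(j)}^\intercal$, bounds it via Popoviciu, the triangle inequality and the rank-one spectral norm $\deg(j)$, and finishes with the per-node LSE sandwich, Jensen's inequality for $\sum_j w_j \log\deg(j) \le \log \degreeR$, and the same optimality chain. Your added check that the minimum of the $\deg(j)$ chords equals the piecewise-linear extension on $[0,1]^n$ is a step the paper leaves implicit.
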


\begin{proof}[Proof\xspace sketch]
We give the full proof in \Cref{app:smooth_approximate_coverage_function}.
The gradient $\nabla \SmoothC(\mat{x})$ and the approximation guarantees follow from linearity and LSE smoothing properties.
To bound the smoothness, we relate the Hessian $\nabla^2 \SmoothC(\mat{x})$
to the covariance matrix of the random vector $Y_j(\mat{x}) \mat{1}_{N(j)}$, where $Y_j(\mat{x})$ is a discrete random variable over the slopes $\varphi(i) - \varphi(i - 1)$ with probability mass function $p_j(\mat{x})$.
The result follows from applying Popoviciu’s inequality.
\end{proof}

%% file: algorithm.tex
\section{Algorithm}
\label{sec:algorithm}

We now present our accelerated relax-and-round algorithm.
It starts by running the greedy algorithm to get $\xgreedy \in \{0,1\}^n$, which we show is a $(1-\sfrac{1}{e})$-approximation to the fractional optimum $C(\mat{x}^*)$.
Starting from $\mat{x}^{(0)} = \xgreedy$, it then runs
accelerated projected gradient ascent to optimize
the smooth coverage function $\SmoothC(\mat{x})$
to get $\mat{x}^{(T)} \approx \widetilde{\mat{x}}_{\mu}^*$.
Finally, it rounds the fractional solution $\mat{x}^{(T)} \in [0,1]^n$
using a Carath\'eodory decomposition algorithm \citep{karalias2025geometric} and \SwapRound~\citep{chekuri2010dependent}.
For uniform matroids, this combined rounding method is faster than pipage rounding.
The rest of this section analyzes the approximation ratio and running time of the algorithm.

\begin{algorithm}[H]
\caption{Accelerated relax-and-round algorithm for concave coverage problems.}
\label{alg:accelerated_relax_and_round}
\begin{algorithmic}[1] 
\Procedure{\MainAlgorithm}{$G = (L, R, E)$, weights $w_j$, cardinality constraint $k$, concave reward function $\varphi$, error $\varepsilon > 0$}
    \State $S_0 \gets \varnothing$        \hfill \Comment{Greedy initialization}
    \For{$t = 0, 1, \dots, k - 1$}
        \State $i^*_{t+1} \gets \argmax_{i \in L \setminus S_t} C(S_t \cup \set{i}) - C(S_t)$
        \State $S_{t+1} \gets S_t \cup \{i^*_{t+1}\}$
    \EndFor
    \State $\mat{x}_{\texttt{greedy}} \gets \mat{1}_{S_k}$
    \State $\mu \gets \frac{\varepsilon C(\mat{x}_{\texttt{greedy}})}{2 \log \degreeR}$  \hfill \Comment{Smoothing parameter}
    \State $\eta \gets \frac{4 \mu}{\degreeR}$
    \State $T \gets \ceil*{\frac{2}{\varepsilon C(\xgreedy)} \sqrt{\frac{k \degreeR \log \degreeR}{(1 - \sfrac{1}{e})(1 + \sfrac{1}{e})}}~}$
    \State $\mat{x}^{(0)} \gets \mat{x}_{\texttt{greedy}}$
    \State $\mat{y}^{(0)} \gets \mat{x}^{(0)}$
    \State $\beta_0 \gets 1$
    \For{$t = 0, 1, \dots, T - 1$}
        \State $\mat{x}^{(t+1)} \gets \Pi_{\Delta_{n,k}}(\mat{y}^{(t)} + \eta \nabla \SmoothC(\mat{y}^{(t)}))$ \hfill \Comment{Projected gradient step onto $\Delta_{n,k}$}
        \State $\beta_{t+1} \gets \frac{1 + \sqrt{1 + 4 \beta_t^2}}{2}$
        \State $\mat{y}^{(t+1)} \gets \mat{x}^{(t+1)} + \parens{\frac{\beta_t - 1}{\beta_{t+1}}} (\mat{x}^{(t+1)} - \mat{x}^{(t)})$
    \EndFor
    \State \Return $\SwapRound(\mat{x}^{(T)})$
\EndProcedure
\end{algorithmic}
\end{algorithm}

\begin{theorem}
\label{thm:main_theorem}
For any $\varepsilon > 0$, \MainAlgorithm returns a set $S \subseteq L$ such that $|S| \le k$ and
\[
    \E[C(S)] \ge \parens*{\alpha_\varphi - \varepsilon} \cdot \OPT.
\]
The running time is $O\parens*{(m + n \log n) \frac{n \sqrt{\log n}}{\varepsilon}}$.
\end{theorem}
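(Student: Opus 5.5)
The proof chains three bounds, mirroring \citet{barman2021tight}: an optimization error bound for the smoothed LP relaxation, the linking inequality \eqref{eqn:linking_the_extensions}, and a rounding guarantee. We then add up the running times.

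\textbf{Optimization step.} First, $C(\xgreedy) \ge (1-\sfrac{1}{e})\OPT$ by \citet{nemhauser1978analysis}. I also want the stronger $C(\xgreedy) \ge (1-\sfrac{1}{e}) C(\mat{x}^*)$ promised in \Cref{sec:algorithm}. I would prove it via the standard greedy argument against a fractional point: $C(\mat{x})$ is concave and, for the LP extension, dominated by the concave closure of the submodular $C(S)$. Hence each greedy step gains at least $\frac{1}{k}(C(\mat{x}^*) - C(S_t))$.

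Next, since $\SmoothC$ is concave with $L = \degreeR/(4\mu)$-Lipschitz gradient (\Cref{lem:smooth_c_guarantees}), FISTA with $\eta = 1/L$ gives
\[
\SmoothC(\widetilde{\mat{x}}_\mu^*) - \SmoothC(\mat{x}^{(T)}) \le \frac{2L\norm{\mat{x}^{(0)} - \widetilde{\mat{x}}_\mu^*}_2^2}{(T+1)^2}.
\]
For the distance, I would use that $\mat{x}^{(0)}$ is a $0/1$ vector with $k$ ones and $\widetilde{\mat{x}}^*_\mu \in \Delta_{n,k}$. A crude bound is $2k$. The constant $(1-\sfrac{1}{e})(1+\sfrac{1}{e})$ in $T$ suggests a refinement, but any $O(k)$ bound suffices up to constants. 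With $\mu = \varepsilon C(\xgreedy)/(2\log \degreeR)$ and the stated $T$, the gap becomes at most $\tfrac{\varepsilon}{2} C(\xgreedy)$. The approximation part of \Cref{lem:smooth_c_guarantees} contributes a further loss of $\mu\log\degreeR = \tfrac{\varepsilon}{2} C(\xgreedy)$, and $\SmoothC \le C$ pointwise. Together these give
\[
C(\mat{x}^{(T)}) \ge C(\mat{x}^*) - \varepsilon C(\xgreedy) \ge (1-\varepsilon)C(\mat{x}^*) \ge (1-\varepsilon)\OPT .
\]

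\textbf{Linking and rounding.} By \eqref{eqn:linking_the_extensions}, $F(\mat{x}^{(T)}) \ge \alpha_\varphi C(\mat{x}^{(T)})$. Here I use $\min_i \alpha_\varphi(i) \ge \alpha_\varphi$, since the infimum defining $\alpha_\varphi$ ranges over the same set. For rounding, I first write $\mat{x}^{(T)}$ as a convex combination of at most $n$ indicator vectors of $k$-sets, using the Carathéodory decomposition of \citet{karalias2025geometric}. I then merge these bases by \SwapRound. By \citet{chekuri2010dependent}, the output $S$ satisfies $\E[\mat{1}_S] = \mat{x}^{(T)}$ and its coordinates are negatively correlated. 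Since $C$ is monotone submodular, the concentration/expectation result for swap rounding gives $\E[C(S)] \ge F(\mat{x}^{(T)})$. This is the step I expect to need most care: the standard statement is for the multilinear extension under negative correlation, and I would invoke the lemma that swap rounding's successive merges are pipage-like steps along $\mat{e}_i-\mat{e}_j$ directions, where $F$ is convex. Chaining yields $\E[C(S)] \ge \alpha_\varphi(1-\varepsilon)\OPT \ge (\alpha_\varphi-\varepsilon)\OPT$.

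\textbf{Running time.} Lazy greedy with incremental gain updates costs $O(mk)$; this is dominated by the main term, since $mk \le mn$. Each iteration computes $\nabla\SmoothC$ in $O(m)$ time: for each $j$ we form the $p_{j,i}$ in $O(\deg(j))$ time and then scatter to $N(j)$. Each iteration also performs a projection onto $\Delta_{n,k}$ in $O(n\log n)$ time by sorting plus a threshold search. To bound $T$, I use $C(\xgreedy) \ge (1-\sfrac{1}{e})\OPT$, together with $\OPT \ge$ the value of the best $k$ vertices, which is at least $k$ times an average. The intended bound is $k\degreeR/C(\xgreedy)^2 = O(n^2)$: scaling gives $\OPT \gtrsim k\degreeR/n$ because the top-$k$ left nodes cover weighted degree at least $k/n$ times the total $\degreeR$, each contributing slope at least $\varphi(1)$ adjustments. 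Verifying this inequality carefully is the delicate point, and I would try the averaging argument $\max_{|S|=k}\sum_{i\in S}\sum_{j\ni i}w_j \ge \tfrac{k}{n}\degreeR$ combined with concavity of $\varphi$ on the covered counts. Since $\degreeR \le n$, this gives $T = O(n\sqrt{\log n}/\varepsilon)$. Finally, the rounding costs $O(nk\log n)$, which is dominated by the main term. The total is $O\parens*{(m+n\log n)\frac{n\sqrt{\log n}}{\varepsilon}}$.
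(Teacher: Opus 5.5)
The approximation half of your proposal is sound and follows the paper's chain: \SwapRound, then \citet[Theorem 2]{barman2021tight}, then $C \ge \SmoothC$, then the FISTA bound plus the LSE error, then $C(\mat{x}^*) \ge \OPT$. Plugging the Line~9 value of $T$ straight into the FISTA bound, as you do, gives a gap of at most $\frac{\varepsilon}{2}(1-e^{-2})C(\xgreedy)$. So that part needs only $C(\xgreedy) \le C(\mat{x}^*)$.

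The genuine gap is in the running time, namely showing $T = O(n\sqrt{\log n}/\varepsilon)$, i.e., $k\degreeR / C(\xgreedy)^2 = O(n^2)$. Your intermediate claim $\OPT \gtrsim k\degreeR/n$ is false. Take a single right node adjacent to all of $L$ and $\varphi(x) = \min(x,1)$; then $\degreeR = n$ and $\OPT = 1$, while $k\degreeR/n = k$. The averaging argument only controls $\sum_j w_j |S \cap N(j)|$. Saturation of $\varphi$ can cut node $j$'s contribution down to a $1/\deg(j)$ fraction of that, so no appeal to concavity on the covered counts can give a bound linear in $\degreeR$.

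What is true, and what you actually need, is a bound of order $\sqrt{k\degreeR}/n$, and it requires a case split your sketch lacks. The paper's \Cref{lem:bounding_fractional_opt} separates high-degree right nodes ($\deg(j) > n/k$, total weight $\alpha$) from low-degree ones. It evaluates $C$ at the uniform point $\frac{k}{n}\mat{1}$ to get $C(\mat{x}^*) \ge \alpha + \frac{k}{n}\degreeR^{\textnormal{low}}$ and $C(\mat{x}^*) \ge k/n$. It then combines these with $\degreeR \le \alpha n + \degreeR^{\textnormal{low}}$ and $\degreeR^{\textnormal{low}} \ge 1-\alpha$ to conclude $C(\mat{x}^*) \ge 0.43\sqrt{k\degreeR}/n$. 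To transfer this to $C(\xgreedy)$ you can use \Cref{lem:xgreedy}. Alternatively, staying with your route through $\OPT$, your own rounding and linking steps give $\OPT \ge F(\mat{x}^*) \ge \alpha_\varphi C(\mat{x}^*) \ge (1-\sfrac{1}{e})C(\mat{x}^*)$. Either way $C(\xgreedy) = \Omega(\sqrt{k\degreeR}/n)$, and the stated $T$ follows.

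A smaller point: your concave-closure justification of $C(\xgreedy) \ge (1-\sfrac{1}{e})C(\mat{x}^*)$ has the inequality backwards. The concave closure $C^{+}$ is the smallest concave majorant of $C(S)$ on $[0,1]^n$, so the concave LP extension satisfies $C(\mat{x}) \ge C^{+}(\mat{x})$, often strictly. For example, take $L = [4]$, $R$ the six pairs of $L$ with uniform weights, $\varphi(x) = \min(x,1)$, and $\mat{x} = \frac12\mat{1} \in \Delta_{4,2}$; then $C(\mat{x}) = 1$ but $C^{+}(\mat{x}) = 5/6$. So a greedy guarantee against $C^{+}$ does not transfer. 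If you need that lemma, use the paper's explicit supergradient $g_i = \sum_{j \in N(i)} w_j\bigl(\varphi(|S_t \cap N(j)|+1) - \varphi(|S_t \cap N(j)|)\bigr)$ of $C(\mat{x})$ at $\mat{1}_{S_t}$.
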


\subsection{Greedy initialization}

To analyze this algorithm,
we start by extending the analysis of the greedy algorithm in \citet{nemhauser1978analysis}
for cardinality-constrained monotone submodular maximization
from the discrete setting, $\max_{S \subseteq L : |S| \le k} C(S)$,
to the continuous LP relaxation, $\max_{\mat{x} \in \Delta_{n,k}} C(\mat{x})$.
This ensures that $\xgreedy$ achieves a constant-factor approximation
to $C(\mat{x}^*)$,
which we need to compute an optimal smoothing parameter $\mu$.

\begin{restatable}{lemma}{XGreedyLemma}
\label{lem:xgreedy}
The greedy solution $\mat{x}_{\texttt{greedy}} \in \{0,1\}^n$ can be computed in $O(mk)$ time
and satisfies
\[
    C(\mat{x}_{\texttt{greedy}}) \ge \parens*{1 - \sfrac{1}{e}} \cdot C(\mat{x}^*).
\]
\end{restatable}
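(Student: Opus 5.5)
The proof adapts the Nemhauser--Wolsey--Fisher argument, replacing the discrete optimum by the fractional optimum $\mat{x}^*\in\Delta_{n,k}$ of the LP relaxation. The key tool is a ``fractional submodularity'' inequality. For any set $S\subseteq L$ and any $\mat{x}\in\Delta_{n,k}$,
\[
  C(\mat{x}) \le C(\mat{x}\vee \mat{1}_S) \le C(S) + \sum_{i\in L} x_i\,\bigl(C(S\cup\{i\})-C(S)\bigr).
\]
Here $C(\cdot)$ on fractional points is $\sum_j w_j\varphi(\sum_{i\in N(j)}x_i)$ with the piecewise linear extension of $\varphi$. The first inequality is monotonicity of $C$, since $\varphi$ is nondecreasing.

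The second inequality is checked coordinate by coordinate in $j\in R$. Let $s=|S\cap N(j)|$ and $y=\sum_{i\in N(j)\setminus S}x_i$. Then the $j$-th term of $C(\mat{x}\vee\mat{1}_S)$ is $\varphi(s+y)$. Since $\varphi$ is concave and piecewise linear with integer breakpoints, and $s$ is an integer,
\[
  \varphi(s+y)\le \varphi(s)+y\,(\varphi(s+1)-\varphi(s)).
\]
Moreover, for $i\in N(j)\setminus S$ the marginal of $i$ at $S$ on node $j$ is exactly $\varphi(s+1)-\varphi(s)$. For $i\notin N(j)$ or $i\in S$ the marginal on $j$ is $0$. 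Summing with weights $w_j$ gives the claim. Since $\sum_i x^*_i=k$, this yields
\[
  C(\mat{x}^*)\le C(S)+k\max_i\bigl(C(S\cup\{i\})-C(S)\bigr)
\]
(marginals are nonnegative).

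Apply this with $S=S_t$, where the greedy choice attains the maximum marginal. Writing $\delta_t = C(\mat{x}^*)-C(S_t)$, we get $C(S_{t+1})-C(S_t)\ge \delta_t/k$, hence $\delta_{t+1}\le(1-1/k)\delta_t$. Iterating from $\delta_0 = C(\mat{x}^*)-C(\varnothing)=C(\mat{x}^*)$, using $\varphi(0)=0$, gives
\[
  C(S_k)\ge \bigl(1-(1-1/k)^k\bigr)C(\mat{x}^*)\ge(1-\sfrac{1}{e})\,C(\mat{x}^*).
\]
Finally $C(\xgreedy)=C(\mat{1}_{S_k})=C(S_k)$.

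For the running time, maintain the counts $c_j=|S_t\cap N(j)|$ for all $j\in R$. The marginal of $i$ is $\sum_{j\in N(i)}w_j(\varphi(c_j+1)-\varphi(c_j))$, which costs $O(\deg(i))$. Each round therefore recomputes all marginals in $O(\sum_i\deg(i))=O(m)$ time and updates the counts in $O(\deg(i^*))$ time. Over $k$ rounds this totals $O(mk)$, assuming $\varphi$ can be evaluated in $O(1)$ time.

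The main obstacle is the coordinatewise inequality. One must check that the piecewise linear extension lets the fractional mass on each right node be charged to the single slope $\varphi(s+1)-\varphi(s)$. This holds because $s$ is an integer breakpoint and concavity gives the tangent-line bound; the rest of the argument is standard.
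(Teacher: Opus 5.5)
Your proof is correct and follows essentially the same route as the paper. Both arguments use the tangent-line (supergradient) bound for $\varphi$ at the integer counts $|S_t\cap N(j)|$ with slope $\varphi(s+1)-\varphi(s)$, bound the result by $k$ times the greedy marginal, solve the same recurrence, and use the same $O(m)$-per-round marginal computation. The only cosmetic difference is that you first pass to $\mat{x}^*\vee\mat{1}_{S_t}$ via monotonicity, whereas the paper applies the supergradient inequality in the direction $\mat{x}^*-\mat{1}_{S_t}$ and drops the nonpositive terms for $i\in S_t$.
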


\begin{proof}[Proof\xspace sketch]
Let $\mat{x}_{S_t} = \mat{1}_{S_t}$ denote the point at step $t$.
We derive a supergradient inequality using the concavity of $C(\mat{x})$ to get
\[
    C(\mat{x}^*) - C(\mat{x}_{S_t}) \le k \cdot (C(\mat{x}_{S_{t+1}}) - C(\mat{x}_{S_t})).
\]
Solving the recurrence gives the classic bound
\[
    C(\mat{x}_{S_{k}}) \ge \parens{1 - (1-\sfrac{1}{k})^k} C(\mat{x}^*).
\]
See \Cref{app:xgreedy} for the full proof.
\end{proof}

\subsection{Accelerated projected gradient method}

Next we list the key ingredients to approximately optimize the smooth problem $\max_{\mat{x} \in \Delta_{n,k}} \SmoothC(\mat{x})$.

\begin{lemma}
\label{lem:compute_gradient}
The gradient $\nabla \SmoothC(\mat{x})$ can be computed in $O(m)$ time.
\end{lemma}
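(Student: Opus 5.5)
We compute $\nabla \SmoothC(\mat{x})$ directly from the closed form in \Cref{lem:smooth_c_guarantees}:
\[
    \nabla \SmoothC(\mat{x}) = \sum_{j \in R} w_j\, g_j(\mat{x})\, \mat{1}_{N(j)},
    \qquad
    g_j(\mat{x}) \defeq \sum_{i=1}^{\deg(j)} p_{j,i}(\mat{x}) \parens{\varphi(i) - \varphi(i-1)}.
\]
The gradient is therefore a weighted sum of indicator vectors of the neighborhoods $N(j)$, each scaled by one scalar $g_j(\mat{x})$. The plan has three passes.

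\textbf{Pass 1: loads.} Compute $y_j = \sum_{i \in N(j)} x_i$ for every $j \in R$. Each edge is touched once, so this costs $O(m)$. Every affine piece of $\varphi_j$ depends on $\mat{x}$ only through $y_j$, since $\mat{a}_{j,i}^\intercal \mat{x} + b_i = \varphi(i-1) + (\varphi(i)-\varphi(i-1))(y_j - (i-1))$.

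\textbf{Pass 2: scalars $g_j$.} For each $j$, evaluate the $\deg(j)$ values $\ell_{j,i} = \mat{a}_{j,i}^\intercal \mat{x} + b_i$ in $O(1)$ each from $y_j$. We assume the values $\varphi(0), \dots, \varphi(\Delta)$ are precomputed or available in $O(1)$ time. Then form the softmin probabilities \eqref{eqn:p_def} and the weighted average of the slopes, for a cost of $O(\deg(j))$. For numerical stability, subtract $\min_i \ell_{j,i}$ before exponentiating; this is the standard log-sum-exp shift, still $O(\deg(j))$ work, and it leaves $p_{j,i}$ unchanged. Summing over $j$ gives $O(\sum_j \deg(j)) = O(m)$.

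\textbf{Pass 3: accumulation.} Initialize $\nabla \gets \mat{0} \in \R^n$, which takes $O(n)$ time. For each edge $(i,j)$, add $w_j g_j(\mat{x})$ to coordinate $i$. This costs $O(m)$.

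The total is $O(n + m)$. This equals $O(m)$ under the standing assumptions, or after discarding isolated left nodes, whose gradient coordinates are identically zero. Isolated left nodes can be handled implicitly, or we can note that $n \le m$ once they are removed.

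The main point to get right is Pass 2. A naive implementation would form each $\mat{a}_{j,i} \in \R^n$ explicitly, costing $\Theta(n)$ or $\Theta(\deg(j))$ per piece and $\Theta(\sum_j \deg(j)^2)$ overall. The fix is the observation above: every $\mat{a}_{j,i}$ is a multiple of the same vector $\mat{1}_{N(j)}$, so all inner products collapse to the single precomputed scalar $y_j$. With that, the cost for each $j$ is linear in $\deg(j)$.
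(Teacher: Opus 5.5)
Your proposal is correct and follows the same three-step structure as the paper's proof: compute the softmin probabilities $p_{j,i}(\mat{x})$, form the per-node slope averages $\E[Y_j(\mat{x})]$, and assemble the vector along edges, with each step costing $O(\sum_{j \in R} \deg(j)) = O(m)$. You also make explicit two details the paper leaves implicit. Each $\mat{a}_{j,i}^\intercal \mat{x}$ reduces to a multiple of the precomputed load $y_j$, which is what makes the per-$j$ cost linear in $\deg(j)$, and the $O(n)$ initialization is absorbed by $n \le m$ once isolated left nodes are discarded.
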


\begin{proof}
We evaluate the gradient in three steps:
compute all of the probabilities $p_{j,i}(\mat{x})$ defined in \eqref{eqn:p_def},
calculate the expectations $\E[Y_{j}(\mat{x})]$ in \Cref{lem:smooth_c_guarantees},
and assemble the final vector.
Each of these steps takes $O(\sum_{j \in R} \deg(j)) = O(m)$ time, so the overall time complexity is $O(m)$.
\end{proof}

\begin{restatable}{lemma}{HypersimplexProjection}
\label{lem:hypersimplex_projection}
The projection of any point $\mat{x} \in \R^n$ onto the hypersimplex $\Delta_{n,k}$,
defined as $\Pi_{\Delta_{n,k}}(\mat{x}) \defeq \argmin_{\mat{y} \in \Delta_{n,k}} \norm{\mat{y} - \mat{x}}_{2}$,
can be computed in $O(n \log n)$ time.
\end{restatable}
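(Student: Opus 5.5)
We will characterize the projection through the KKT conditions of the strictly convex problem $\min_{\mat{y}} \tfrac12\norm{\mat{y}-\mat{x}}_2^2$ subject to $\sum_i y_i = k$ and $0 \le y_i \le 1$. The feasible set is nonempty and compact (we assume $0 \le k \le n$), so there is a unique minimizer. Introduce a multiplier $\tau \in \R$ for the equality constraint. The stationarity and complementary slackness conditions for the box constraints say that the minimizer has the form $y_i(\tau) = \operatorname{clip}_{[0,1]}(x_i - \tau) = \min(1,\max(0, x_i - \tau))$. Conversely, any $\tau$ with $g(\tau) \defeq \sum_i y_i(\tau) = k$ yields a KKT point, hence the optimum. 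So computing the projection reduces to solving the scalar equation $g(\tau) = k$.

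Next we analyze $g$. Each summand $\tau \mapsto \operatorname{clip}_{[0,1]}(x_i - \tau)$ is continuous, nonincreasing and piecewise linear. It has breakpoints at $\tau = x_i - 1$, where it leaves the value $1$, and at $\tau = x_i$, where it reaches $0$. Hence $g$ is continuous, nonincreasing and piecewise linear, with at most $2n$ breakpoints. Moreover $g(\tau) = n$ for $\tau \le \min_i x_i - 1$ and $g(\tau) = 0$ for $\tau \ge \max_i x_i$. By the intermediate value theorem a root of $g(\tau)=k$ exists, and it lies in some interval between consecutive breakpoints, on which $g$ is affine.

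The algorithm proceeds as follows.
\begin{enumerate}
\item Form the multiset of $2n$ breakpoints $\{x_i - 1, x_i\}$ and sort it in $O(n\log n)$ time.
\item Sweep the breakpoints in increasing order, maintaining in $O(1)$ per event three quantities:
\begin{itemize}
\item the number $a$ of coordinates in the active (linear) regime;
\item the sum $s$ of their $x_i$;
\item the number $u$ of coordinates saturated at $1$.
\end{itemize}
On the current interval, $g(\tau) = u + s - a\tau$.
\item Evaluate $g$ at each breakpoint in $O(1)$ time and detect the interval where $g$ crosses $k$. There, solve $\tau = (u + s - k)/a$; if $a = 0$ on a flat piece with $g = k$, any point of that piece works.
\item Output $\mat{y} = \operatorname{clip}_{[0,1]}(\mat{x} - \tau\mat{1})$ in $O(n)$ time.
\end{enumerate}
The total cost is $O(n\log n)$, dominated by the sort.

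The main obstacle is the careful bookkeeping of the sweep. We must handle ties among breakpoints (process all events at equal $\tau$ together) and flat pieces where $a=0$. We must also check that the chosen $\tau$ gives exactly $\sum_i y_i = k$; this follows from the continuity of $g$. The KKT argument itself is routine, since the constraints are affine and Slater-type qualification is automatic. A linear-time median-of-breakpoints search would even give $O(n)$, but sorting suffices for the stated bound.
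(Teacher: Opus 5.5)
Your proposal is correct and follows the paper's argument: the KKT conditions give $y_i = \min(\max(x_i - \tau, 0), 1)$, the breakpoints $\{x_i - 1, x_i\}$ are sorted, and the root of the continuous, nonincreasing, piecewise linear sum is found by linear interpolation on the bracketing interval. The only difference is how you locate that interval: you sweep the sorted breakpoints and update $(a, s, u)$ in $O(1)$ per event (the linear-scan variant the paper attributes to Wang et al.), while the paper binary-searches the breakpoints with an $O(n)$ evaluation per probe; both are $O(n \log n)$ overall.
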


\begin{proof}[Proof\xspace sketch]
Using Lagrange multipliers, the solution is of the form $y^*_{i} = \min(1, \max(0, x_i - \lambda))$, for some $\lambda \in \R$.
We use binary search and linear interpolation to find the $\lambda^*$ such that $\sum_{i=1}^n y_i^* = k$.
We defer the algorithm and proof to \Cref{app:hypersimplex_projection}.
\end{proof}

\begin{restatable}[Convergence rate]{lemma}{ConvergenceRate}
\label{lem:convergence_rate}
Let $\mu > 0$ and assume the step size satisfies $\eta \le \frac{4\mu}{\degreeR}$.
Then, for any iteration $t \ge 1$,
\begin{align}
\label{eqn:main_error_bound}
    C(\mat{x}^*) - \widetilde{C}_{\mu}(\mat{x}^{(t)})
    &\le
    \frac{k \degreeR}{\mu (t+1)^2} + \mu \log \degreeR.
\end{align}
\end{restatable}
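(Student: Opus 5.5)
The bound splits the error into an optimization term and a smoothing term:
\[
C(\mat{x}^*) - \SmoothC(\mat{x}^{(t)}) = \bigl(C(\mat{x}^*) - \SmoothC(\widetilde{\mat{x}}^*_\mu)\bigr) + \bigl(\SmoothC(\widetilde{\mat{x}}^*_\mu) - \SmoothC(\mat{x}^{(t)})\bigr).
\]
By the approximation part of \Cref{lem:smooth_c_guarantees}, the first bracket is at most $\mu \log \degreeR$. That gives the second term of \eqref{eqn:main_error_bound} directly. It remains to bound the optimization error of the accelerated method on $\max_{\mat{x} \in \Delta_{n,k}} \SmoothC(\mat{x})$.

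For the second bracket, I would use the standard FISTA / accelerated projected gradient guarantee (Beck--Teboulle). It applies to an $L$-smooth concave function maximized over a closed convex set, with step size $\eta \le 1/L$ and the momentum schedule $\beta_{t+1} = (1+\sqrt{1+4\beta_t^2})/2$ used in \Cref{alg:accelerated_relax_and_round}. The guarantee is
\[
\SmoothC(\widetilde{\mat{x}}^*_\mu) - \SmoothC(\mat{x}^{(t)}) \le \frac{2\norm{\mat{x}^{(0)} - \widetilde{\mat{x}}^*_\mu}_2^2}{\eta (t+1)^2}.
\]
The hypotheses are checked as follows:
\begin{itemize}
\item \emph{Concavity.} $\SmoothC$ is concave, since each $\varphi_{j,\mu}$ is a negated log-sum-exp of affine functions, composed as $-\mu\log\sum\exp(-\cdot/\mu)$.
\item \emph{Smoothness.} The smoothness part of \Cref{lem:smooth_c_guarantees} gives $L = \degreeR/(4\mu)$, so the assumption $\eta \le 4\mu/\degreeR$ is exactly $\eta \le 1/L$.
\item \emph{Feasible set.} $\Delta_{n,k}$ is closed and convex, and the projection is well defined by \Cref{lem:hypersimplex_projection}.
\end{itemize}
I would cite this result, stated in ascent form, from the appendix on first-order methods rather than reprove it. The one detail to verify is that the iteration indexing ($\beta_0 = 1$, $\mat{y}^{(0)} = \mat{x}^{(0)}$) yields the $(t+1)^2$ denominator rather than $t^2$.

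Next I would bound the initial distance. Both $\mat{x}^{(0)} = \xgreedy$ and $\widetilde{\mat{x}}^*_\mu$ lie in $\Delta_{n,k}$. For $\mat{u},\mat{v} \in \Delta_{n,k}$,
\[
\norm{\mat{u}-\mat{v}}_2^2 \le \norm{\mat{u}-\mat{v}}_1 \le \norm{\mat{u}}_1 + \norm{\mat{v}}_1 = 2k,
\]
where the first inequality holds because the coordinates of $\mat{u}-\mat{v}$ lie in $[-1,1]$.

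I will not use the sharper bound available from $\xgreedy$ being integral. It would give $\sum_{i \in S}(1-v_i)^2 + \sum_{i\notin S} v_i^2 \le \sum_{i\in S}(1-v_i) + \sum_{i \notin S} v_i = 2(k - \sum_{i\in S} v_i)$, which is still at most $2k$ in the worst case. Plugging in the crude bound,
\[
\frac{2 \cdot 2k}{\eta(t+1)^2}.
\]
With the algorithm's choice $\eta = 4\mu/\degreeR$, this is $\frac{4k\degreeR}{4\mu(t+1)^2} = \frac{k\degreeR}{\mu(t+1)^2}$, which matches \eqref{eqn:main_error_bound}. For a general $\eta \le 4\mu/\degreeR$ the constant is the same once $\eta$ is taken at its upper bound, and I would phrase the statement under that choice.

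The main obstacle is constant bookkeeping in the FISTA bound. Its textbook constant is $2L\norm{\cdot}^2/(t+1)^2$, and it must combine with the $2k$ distance bound and the $1/4$ in the smoothness constant to give exactly $k\degreeR/(\mu(t+1)^2)$. If the constants do not line up, I would sharpen the distance estimate as above, or restate the FISTA bound with $1/\eta$ in place of $L$, to close the gap.
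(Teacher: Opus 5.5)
Your proposal is correct and follows the paper's proof essentially step for step. You use the same split into smoothing error (at most $\mu \log \degreeR$ by \Cref{lem:smooth_c_guarantees}) plus FISTA optimization error from Beck--Teboulle's Theorem 4.4, the same $2k$ bound on $\norm{\xgreedy - \widetilde{\mat{x}}^*_\mu}_2^2$ (the paper drops the nonnegative cross term where you pass through the $\ell_1$ norm), and the same substitution $\eta = 4\mu/\degreeR$. Your remark that the stated constant holds only with $\eta$ at its upper bound is accurate: the paper's proof likewise sets $\eta = 1/L$ rather than handling a general $\eta \le 4\mu/\degreeR$.
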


\begin{proof}[Proof\xspace  sketch]
We combine the approximation guarantee in \Cref{lem:smooth_c_guarantees} for $\SmoothC(\mat{x})$
and the convergence rate analysis of FISTA \citep[Theorem 4.4]{beck2009fast}.
Specifically, we take the penalty term in the proximal operator of FISTA to be the indicator function for the feasible set $\Delta_{n,k}$.
See \Cref{app:accelerated_projected_gradient_ascent} for the full proof.
\end{proof}

Lastly, we give a relative error guarantee by setting the right-hand side of \eqref{eqn:main_error_bound} to $\varepsilon C(\mat{x}^*)$
and optimizing the smoothing parameter $\mu$ to minimize the total iteration count $T$.
This achieves the optimal balance between the LSE approximation error and the FISTA convergence rate.

\begin{restatable}{theorem}{OptimizingMu}
\label{thm:relative_error_algorithm}
Let $\varepsilon > 0$ and $\mu = \frac{\varepsilon C(\xgreedy)}{2 \log \degreeR}$.
If $t \ge \frac{5.1}{\varepsilon} n \sqrt{\log n} $, then
$
    \widetilde{C}_{\mu}(\mat{x}^{(t)}) \ge \parens*{1 - \varepsilon} \cdot C(\mat{x}^*).
$
\end{restatable}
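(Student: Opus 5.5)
We plug the choice of $\mu$ into the bound of \Cref{lem:convergence_rate} and lower bound $C(\xgreedy)$ using \Cref{lem:xgreedy}. Write $C^* \defeq C(\mat{x}^*)$ and $g \defeq C(\xgreedy)$. The algorithm uses $\eta = 4\mu/\degreeR$, so \Cref{lem:convergence_rate} applies. With $\mu = \varepsilon g/(2\log \degreeR)$, the smoothing term is $\mu \log \degreeR = \varepsilon g/2 \le \varepsilon C^*/2$, since $g \le C^*$ (the greedy point is feasible for the LP). It therefore suffices to make the FISTA term at most $\varepsilon C^*/2$, i.e.
\[
  \frac{k \degreeR}{\mu (t+1)^2} = \frac{2 k \degreeR \log \degreeR}{\varepsilon g (t+1)^2} \le \frac{\varepsilon C^*}{2}.
\]
Using $g \ge (1-\sfrac{1}{e}) C^*$ from \Cref{lem:xgreedy}, this holds once $(t+1)^2 \ge 4 k \degreeR \log \degreeR / ((1-\sfrac{1}{e}) \varepsilon^2 (C^*)^2)$. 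This also explains the choice of $T$ in \Cref{alg:accelerated_relax_and_round}; the extra $(1+\sfrac{1}{e})$ there is harmless, and we may simply use the weaker requirement.

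The remaining step, and the main obstacle, is to convert this instance-dependent bound into the instance-free bound $t \ge \frac{5.1}{\varepsilon} n\sqrt{\log n}$. This requires a lower bound on $C^*$ in terms of $k$, $\degreeR$, and $n$. We have $\degreeR = \sum_j w_j \deg(j) \le n$, and if $\degreeR \ge e$ then $\log \degreeR \le \log n$. For $C^*$, we use the uniform point $\mat{x} = (k/n)\mat{1} \in \Delta_{n,k}$. Each $j$ then has $y_j = k\deg(j)/n$, and concavity with $\varphi(0)=0$ gives $\varphi(y) \ge \min(y,1)$. Hence $C^* \ge \sum_j w_j \min(k\deg(j)/n, 1) \ge k/n$, because $\deg(j) \ge 1$ and $k \le n$. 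Consequently
\[
  \frac{k \degreeR}{(C^*)^2} \le \frac{k n}{(k/n)^2} = \frac{n^3}{k}.
\]
This is too weak, so the bound must use $\degreeR$ more cleverly. Instead, bound $C^* \ge \sum_j w_j \min(k\deg(j)/n,1)$ directly against $k\degreeR$, splitting into two cases. If $k\degreeR \le n$, we want $(C^*)^2 \ge$ roughly $(k\degreeR/n)^2$. By Jensen applied to the concave function $\min(\cdot,1)$ we get $C^* \ge \min(k\degreeR/n, 1)$. Then $k\degreeR/(C^*)^2 \le n^2/(k\degreeR) \le n^2$, using $k\degreeR \ge 1$. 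Otherwise $C^* \ge 1$, and $k\degreeR \le n\cdot n = n^2$. In both cases $k\degreeR/(C^*)^2 \le n^2$.

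Combining, it suffices that $t+1 \ge \frac{2}{\varepsilon}\, n \sqrt{\log n/(1-\sfrac{1}{e})}$. Since $2/\sqrt{1-\sfrac{1}{e}} \approx 2.52$, the stated constant $5.1$ gives ample room. That room also covers the $(1+\sfrac{1}{e})$ factor, rounding, and the edge case $\degreeR < e$. In that edge case we replace $\log\degreeR$ by $\max(\log \degreeR, 1)$, or note that the required count is then even smaller, and $\log n \ge 1$ for $n \ge 3$. A loose factor of 2 is likely how the authors arrived at $5.1$. Finally, $\widetilde{C}_\mu(\mat{x}^{(t)}) \ge C^* - \varepsilon C^*/2 - \varepsilon C^*/2 = (1-\varepsilon)C^*$.
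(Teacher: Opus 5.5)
\textbf{The gap is in your lower bound on $C^*$.} You invoke Jensen for the concave function $\min(\cdot,1)$, but Jensen for a concave function gives $\sum_j w_j\min(z_j,1)\le\min\bigl(\sum_j w_j z_j,1\bigr)$, the opposite of what you use. The claimed bound $C^*\ge\min(k\degreeR/n,1)$ is false. Take $\varphi(x)=\min(x,1)$, $k=2$, $n\ge 3$, one right node $a$ with $N(a)=L$ and $w_a=\tfrac12$, and right nodes $b_1,\dots,b_n$ with $N(b_i)=\{i\}$ and $w_{b_i}=\tfrac{1}{2n}$. Then $\degreeR=(n+1)/2$, so $k\degreeR>n$ and your second branch asserts $C^*\ge 1$, yet $C^*=\tfrac12+\tfrac1n$. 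With $w_a=\tfrac14$ and $w_{b_i}=\tfrac{3}{4n}$ instead, $k\degreeR\le n$ but $C^*=\tfrac14+\tfrac{3}{2n}<k\degreeR/n$, which breaks your first branch. So neither case is justified, and the inequality $k\degreeR\le n^2(C^*)^2$ behind your constant $2.52$ is unproven. The missing idea is degree heterogeneity: a small weight on high-degree right nodes inflates $\degreeR$ by up to a factor $n$, while adding only that weight to the coverage lower bound.

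\textbf{How the paper closes it, and what that costs you.} The paper proves a separate lemma. Let $\alpha$ be the weight of right nodes with $\deg(j)>n/k$. The uniform point gives $C^*\ge\max\{\alpha,k/n\}$ and $C^*\ge\alpha+\tfrac{k}{n}\degreeR^{\mathrm{low}}$, while $\degreeR\le\alpha n+\degreeR^{\mathrm{low}}$. Balancing these yields $C^*\ge 0.43\sqrt{k\degreeR}/n$. With $0.43$ in place of your $1$, however, your even split of the error budget gives $2/(0.43\sqrt{1-\sfrac{1}{e}})\approx 5.85>5.1$. The paper avoids this by noting that the smoothing term is only $\gamma\varepsilon C^*/2$ with $\gamma=C(\xgreedy)/C^*\ge 1-\sfrac{1}{e}$. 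The FISTA term may therefore use $(1-\gamma/2)\varepsilon C^*$, which turns your factor $\gamma$ into $\gamma(2-\gamma)\ge(1-\sfrac{1}{e})(1+\sfrac{1}{e})$ and gives $2/(0.43\sqrt{1-e^{-2}})\approx 5.00$. This is exactly where the $(1+\sfrac{1}{e})$ in $T$ comes from. It is not a harmless extra factor: it makes $T$ smaller than your split would require.

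\textbf{Alternative repair at constant $1$.} Your constant-$1$ target is in fact true, but it needs a real argument. Let $A=\sum_{k\deg(j)\ge n}w_j$ and $B=\sum_{k\deg(j)<n}w_j\,k\deg(j)/n$. The uniform point gives $C^*\ge A+B$, and $k\degreeR/n^2\le (Ak+B)/n$. The inequality $(A+B)^2\ge(Ak+B)/n$ then follows in two steps. First, the difference is increasing in $B$, so you may take $B=(1-A)k/n$. Second, the resulting convex quadratic in $A$ is nonnegative on $[0,1]$, which is a short calculation.
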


\begin{proof}[Proof sketch]
Ideally, we would use smoothing parameter $\mu^* = \varepsilon C(\mat{x}^*) / (2 \log \degreeR)$,
but this relies on the value we want to compute.
Thus, we use the underestimate $\mu = \varepsilon C(\xgreedy) / (2 \log \degreeR) = \gamma \mu^*$,
which has multiplicative error $\gamma \ge 1-\sfrac{1}{e}$ by \Cref{lem:xgreedy}.
The result follows from the convergence rate in \Cref{lem:convergence_rate}
and applying the lower bound $C(\mat{x}^*) \ge 0.43\sqrt{k \degreeR} / n$ in \Cref{lem:bounding_fractional_opt}.
The value of $T$ in Line~9 of \Cref{alg:accelerated_relax_and_round} is tighter than this result, but this result is a cleaner running time bound.
We defer the full proof to \Cref{app:optimizing_mu}.
\end{proof}

\subsection{Rounding}
In the last stage of our algorithm,
we round the fractional solution $\mat{x}^{(T)} \in [0, 1]^n$
to a feasible integral solution $\mat{x}^{\rounded} \in \{0, 1\}^n$.
\citet{barman2021tight} use pipage rounding~\citep{ageev2004pipage},
but pipage rounding requires $O(nm \Delta)$ time since it
evaluates the multilinear extension $F(\mat{x})$ at most $n$ times,
which can be prohibitively expensive.

To make this faster, we give a tailored version of the \SwapRound algorithm~\citep{chekuri2010dependent}
for uniform matroids (cardinality constraint $k$)
that removes the dependence on the number of edges~$m$.
A subtle but crucial requirement for \SwapRound is that
the input fractional point needs to be represented as a convex combination of matroid bases,
i.e., $\mat{x}^{(T)}$ needs to be written as a convex combination of vertices of the hypersimplex $\Delta_{n,k}$.
Carathéodory's theorem guarantees that such a decomposition exists with at most $n$ vertices,
so we use the recent combinatorial vertex-peeling algorithm of \citet{karalias2025geometric}
for uniform matroids to find this decomposition.

\begin{restatable}[\SwapRound for uniform matroids]{lemma}{HypersimplexSwapRound}
\label{thm:swapround}
Given any point $\mat{x} \in \Delta_{n,k}$ in the basis polytope of the uniform matroid $U_{n}^k$,
there is an $O(nk \log n)$-time implementation of \SwapRound that guarantees
\[
  \E[C(\SwapRound(\mat{x}))] \ge F(\mat{x}).
\]
\end{restatable}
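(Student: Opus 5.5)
The proof has two phases. First we write $\mat{x}$ as an explicit convex combination of hypersimplex vertices, i.e., of bases of $U_n^k$. Then we run \SwapRound on that decomposition and invoke the concentration/monotonicity property of \citet{chekuri2010dependent}.

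\emph{Phase 1: decomposition.} We use the vertex-peeling algorithm of \citet{karalias2025geometric}. At each step, sort the coordinates of the current residual point. Take the vertex $\mat{1}_B$, where $B$ consists of all coordinates equal to $1$ plus enough of the largest remaining fractional coordinates to reach size $k$. Choose the largest $\lambda$ such that $(\mat{x} - \lambda \mat{1}_B)/(1-\lambda)$ stays in $\Delta_{n,k}$. The maximal $\lambda$ forces at least one new coordinate of the rescaled residual to become $0$ or $1$, or two coordinates to become equal in a way that fixes the next vertex. Hence the process stops after $O(n)$ peels, matching the Carath\'eodory bound. Each peel touches the $k$ chosen coordinates plus a constant number of boundary coordinates. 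If we keep the fractional coordinates in a balanced search tree (or heap) keyed by value, an update costs $O(k\log n)$, for a total of $O(nk\log n)$. The output is $\mat{x} = \sum_{\ell=1}^{N} \lambda_\ell \mat{1}_{B_\ell}$ with $N = O(n)$. We store each $B_\ell$ explicitly, which takes $O(nk)$ space.

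\emph{Phase 2: swap rounding.} Merge the bases sequentially. Maintain a current base $C$ with weight $\Lambda = \sum_{\ell' < \ell}\lambda_{\ell'}$, and merge it with $B_\ell$ of weight $\lambda_\ell$. While $C \neq B_\ell$, pick $i \in C\setminus B_\ell$ and $j \in B_\ell \setminus C$. In a uniform matroid, every such pair is a valid exchange, so no exchange property search is needed. With probability $\Lambda/(\Lambda+\lambda_\ell)$, replace $B_\ell$ by $B_\ell - j + i$; otherwise replace $C$ by $C - i + j$. Each exchange reduces $|C \triangle B_\ell|$ by $2$, so each merge takes $O(k)$ exchanges. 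With hash sets or indicator arrays, each exchange costs $O(1)$, so this phase costs $O(nk)$ overall. This is dominated by Phase 1, giving the claimed $O(nk\log n)$ bound.

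\emph{Phase 3: the guarantee.} Chekuri et al.\ show that each elementary swap preserves the marginal expectation: the expected value of the weighted vector $\Lambda\mat{1}_C + \lambda_\ell \mat{1}_{B_\ell}$ is unchanged. Moreover, the output indicators are negatively correlated in their strong sense, via a martingale along directions $\mat{e}_i - \mat{e}_j$. The key structural fact is that $F$ is convex along any direction $\mat{e}_i - \mat{e}_j$ because $C$ is submodular, as recalled in \Cref{sec:preliminaries}. Using it, the argument of \citet[Lemma/Theorem on submodular functions]{chekuri2010dependent} gives $\E[F(\mat{1}_{\text{out}})] \ge F(\mat{x})$. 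Since $F$ agrees with $C$ on integral points, this yields $\E[C(\SwapRound(\mat{x}))] \ge F(\mat{x})$.

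\emph{Main obstacle.} The step that needs the most care is the inductive convexity argument. At an intermediate merge, the fractional point $\Lambda' \mat{1}_C + \lambda_\ell\mat{1}_{B_\ell}$ (normalized together with the untouched bases) moves randomly along a direction $\mat{e}_i-\mat{e}_j$ with mean zero. Convexity of $F$ along that line then shows that $\E[F]$ of the current fractional point is nondecreasing. I will make this precise with a supermartingale argument over all swaps, which needs no structure of $C$ beyond submodularity. The running-time accounting for the peeling step, in particular showing that it really needs only $O(n)$ iterations of $O(k\log n)$ work, is the other delicate point.
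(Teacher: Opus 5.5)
Your proposal is essentially the paper's proof. It uses the same three ingredients: top-$k$ vertex peeling from \citet{karalias2025geometric} for the decomposition, the same randomized exchange rule as \MergeBases, and \citet[Theorem 2.1]{chekuri2010dependent} for $\E[C(\SwapRound(\mat{x}))]\ge F(\mat{x})$. Your mean-zero step along $\mat{e}_i-\mat{e}_j$ combined with convexity is exactly the argument behind that citation, though it makes $F$ of the current point a \emph{sub}martingale, not a supermartingale. Your indicator-array merge at $O(k)$ per call is also slightly leaner than the paper's $O(k\log k)$.

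The one thing to repair is your Phase~1 running-time accounting. As you describe it, the rescaling by $1/(1-\lambda)$ changes every coordinate outside $B$, so each peel costs $\Theta(n)$, not $O(k\log n)$. To get $O(k\log n)$ per peel, keep the unnormalized residual $\mat{r}=\mat{x}-\sum_{s<t}\lambda_s\mat{1}_{B_s}$ with remaining mass $w_t=1-\sum_{s<t}\lambda_s$ and peel weight $\lambda_t=\min\{\min_{i\in B_t}r_i,\ w_t-\max_{j\notin B_t}r_j\}$, so that only the $k$ entries of $B_t$ change. A single lazy global scale factor also works. Finally, the ``two coordinates become equal'' case never arises: the maximal $\lambda$ always drives some fractional coordinate to $0$ or $1$ (after normalization), where it stays, and that is what bounds the number of peels by $n$.
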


\begin{proof}[Proof sketch]
We find a Carathéodory decomposition of $\mat{x}^{(T)}$
with $r \le n$ bases in $O(rk \log n)$ time using \citet[Theorem 4.3]{karalias2025geometric}.
Running \SwapRound takes $O(rk \log n)$ time for uniform matroids,
which proves the running time.
The expected approximation follows from \citet[Theorem 2.1]{chekuri2010dependent}.
See \Cref{app:rounding} for full details and pseudocode for the subroutines.
\end{proof}

\subsection{Analysis: Putting everything together}

\begin{proof}[Proof of \Cref{thm:main_theorem}]
Our analysis builds on the framework of \citet{barman2021tight}.

\paragraph{Approximation ratio.}
Let $S = \SwapRound (\mat{x}^{(T)})$.
The result follows from a chain of inequalities:
\begin{align*}
\label{eqn:chain_of_inequalities_2}
    \E[\ALG]
    &=
    \E[C(\mat{1}_{S})] \\
    &=
    \E[F(\mat{1}_{S})] & \text{($C(\mat{x}) = F(\mat{x})$ for $\mat{x} \in \{0, 1\}^n$)} \\
    &\ge
    F(\mat{x}^{(T)}) & \text{(\Cref{thm:swapround})} \\
    &\ge
    \alpha_{\varphi} \cdot C(\mat{x}^{(T)}) & (\text{{\citep[Theorem 2]{barman2021tight}}}) \\
    &>
    \alpha_{\varphi} \cdot \SmoothC (\mat{x}^{(T)}) & \text{($C(\mat{x}) > \SmoothC(\mat{x})$ for $\mat{x} \in \R^n_{\ge 0}$)}\\
    &\ge
    (\alpha_{\varphi} - \varepsilon) \cdot C(\mat{x}^*) & \text{(\Cref{thm:relative_error_algorithm})}\\
    &\ge
    (\alpha_{\varphi} - \varepsilon) \cdot C(\mat{1}_{S^*}) \\
    &=
    (\alpha_{\varphi} - \varepsilon) \cdot \OPT.
\end{align*}

\paragraph{Running time.}
\Cref{alg:accelerated_relax_and_round} runs in three stages: greedy initialization, projected gradient ascent, and rounding.
Computing $\mat{x}^{(0)} = \xgreedy$ takes $O(mk)$ time by \Cref{lem:xgreedy},
and evaluating $C(\xgreedy)$ to set $\mu$, $\eta$, and $T$ takes $O(m)$ time.

In each step $t$ of accelerated projected gradient ascent,
computing the gradient of the smooth objective $\nabla \SmoothC(\mat{y}^{(t)})$ takes $O(m)$ time by \Cref{lem:compute_gradient}.
Projecting the gradient step onto the hypersimplex $\Delta_{n,k}$ to get the next point $\mat{x}^{(t+1)}$ takes $O(n \log n)$ time by \Cref{lem:hypersimplex_projection}.
We need $T$ steps to achieve the relative error bound,
so it follows from \Cref{thm:relative_error_algorithm} that the running time of this stage is
\[
    O\parens*{\parens*{m + n \log n} \cdot \frac{n}{\varepsilon} \sqrt{\log n}}.
\]
Finally, once we have computed $\mat{x}^{(T)} \in [0,1]^n$,
we apply the \SwapRound analysis in \Cref{thm:swapround}.
\end{proof}

%% file: approximation_ratios.tex
\section{Approximation ratios}
\label{sec:approximation_ratios}

In this section, we compute the Poisson concavity ratio (and hence approximation ratio of \Cref{alg:accelerated_relax_and_round}) for new reward functions.
We defer the proofs to \Cref{app:approximation_ratios}.

\begin{restatable}{theorem}{LogRatio}
\label{thm:log_ratio}
If $\varphi(x) = \log(1 + x)$, then $\alpha_{\varphi} \approx 0.827$.
\end{restatable}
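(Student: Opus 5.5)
The plan is to show that the infimum in \eqref{eqn:poisson_concavity_ratio} is attained at $x = 1$, and then evaluate $\alpha_{\varphi}(1)$ numerically. Write $P_x \sim \Pois(x)$ and $\alpha_{\varphi}(x) = \E[\log(1+P_x)]/\log(1+x)$. The argument has three steps: evaluate $\alpha_\varphi(1)$ directly, bound $\alpha_\varphi(x)$ analytically for all large $x$, and handle the finitely many remaining integers by certified computation.

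\emph{Step 1: the value at $x=1$.} Sum the series
\[
  \alpha_{\varphi}(1) = \frac{1}{\log 2}\sum_{\ell \ge 1} \frac{e^{-1}}{\ell!}\log(1+\ell).
\]
Truncate after $\ell = L$ and bound the tail by $\sum_{\ell > L} \log(1+\ell)/\ell! \le \sum_{\ell > L} \ell/\ell!$. The first few terms are approximately $0.2550, 0.2021, 0.0850, 0.0247, 0.0055, 0.0010$. They sum to about $0.5733$, so $\alpha_\varphi(1) \approx 0.5733/0.6931 \approx 0.827$, with a rigorous error bar from the tail estimate.

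\emph{Step 2: an analytic lower bound for large $x$.} Write $\log(1+P_x) = \log(1+x) + \log(1+t)$ with $t = (P_x - x)/(1+x) \ge -x/(1+x) > -1$.
\begin{itemize}
  \item On the event $t \ge -1/2$, use $\log(1+t) \ge t - t^2$.
  \item On the complementary event, use only $\log(1+P_x) \ge 0$.
\end{itemize}
Since $\E[t] = 0$ and $\E[t^2] = x/(1+x)^2$, and correcting for the linear term on the discarded event, this gives
\[
  \E[\log(1+P_x)] \ge \log(1+x)\bigl(1 - \Pr[t < -\tfrac12]\bigr) - \frac{x}{(1+x)^2} - \E\bigl[\,|t|\,\mathbf{1}\{t<-\tfrac12\}\bigr].
\]
The event $t < -1/2$ is $P_x < (x-1)/2$. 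A Chernoff bound for the Poisson lower tail gives $\Pr[P_x \le x/2] \le e^{-cx}$ with $c = (1-\log 2)/2$. Since $|t| \le 1$, the last expectation is at most this same probability. Hence
\[
  \alpha_\varphi(x) \ge 1 - e^{-cx}\Bigl(1 + \frac{1}{\log(1+x)}\Bigr) - \frac{1}{(1+x)\log(1+x)}.
\]
The right-hand side tends to $1$ and exceeds $0.83$ once $x \ge N$ for an explicit moderate $N$, found by direct evaluation of this bound.

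\emph{Step 3: the finite range.} For $2 \le x < N$, evaluate $\alpha_\varphi(x)$ by truncated series with explicit tail bounds, as in Step 1. Check that each value exceeds $\alpha_\varphi(1)$; numerically the sequence appears to increase, e.g.\ $\alpha_\varphi(2)$ is already near $0.87$. Combining the three steps gives $\alpha_\varphi = \alpha_\varphi(1) \approx 0.827$.

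The main obstacle is making Step 2 strong enough that $N$ is small. A crude Chernoff constant, or the cruder inequality $\log(1+t) \ge t - t^2$ on a narrower window, may push $N$ into the hundreds. That is still finite and checkable, but inelegant. If that happens, I would first try the sharper bound $\log(1+t) \ge t - \frac{t^2}{2(1+t)}$, valid for all $t > -1$, on a wider window. Alternatively I would prove monotonicity of $\alpha_\varphi$ directly via the identity $\frac{d}{dx}\E[g(P_x)] = \E[g(P_x+1) - g(P_x)]$.
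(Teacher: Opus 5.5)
Your proof is correct, but Step~2 takes a different route from the paper's. The paper applies the global inequality $\log z \ge 1 - 1/z$ with $z = (1+X)/(1+x)$, $X \sim \Pois(x)$. Combined with the exact identity $\E[1/(1+X)] = (1 - e^{-x})/x$, this yields $\alpha_{\varphi}(x) \ge 1 - \frac{1}{x \log(1+x)}$ for every $x > 0$. That bound is increasing and already about $0.845$ at $x = 4$. So only $\alpha_\varphi(1), \dots, \alpha_\varphi(4)$ are evaluated numerically, and no concentration inequality is needed.

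Your quadratic bound $\log(1+t) \ge t - t^2$ fails as $t \to -1$. That is what forces your split into a good event and a Chernoff-controlled lower tail. In exchange, your template uses only the variance and a tail bound. It would therefore carry over to other smooth concave rewards for which no closed form like $\E[1/(1+X)]$ exists.

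Your worry about $N$ is unfounded. Your final bound is increasing in $x$ and evaluates to about $0.841$ at $x = 15$, though only about $0.816$ at $x = 14$. So $N = 15$ suffices, and Step~3 amounts to thirteen evaluations, $x = 2, \dots, 14$.

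Two small remarks. First, the linear correction actually has the favorable sign, since $\E[t\,\mathbf{1}\{t \ge -1/2\}] = -\E[t\,\mathbf{1}\{t < -1/2\}] \ge 0$. You can therefore drop the term $\E[|t|\,\mathbf{1}\{t<-1/2\}]$ entirely, which lowers $N$ to $13$. Second, numerically $\alpha_\varphi(2) \approx 0.890$, not $0.87$.
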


\begin{proof}[Proof sketch]
Using the formula for $\alpha_{\varphi}(x)$ in \eqref{eqn:poisson_concavity_ratio},
we show that $\alpha_{\varphi}(x) \ge 1 - \frac{1}{x \log(1 + x)}$.
For $x \in \Z_{\le 4}$, we check $\alpha_{\varphi}(x)$ numerically.
For $x \ge 4$, this lower bound lets us prove
$\alpha_{\varphi}(x) \ge \alpha_{\varphi}(1) \approx 0.827$.
\end{proof}

The next two results generalize the maximum coverage problem and recover the $1-1/e \approx 0.632$ approximation ratio.
First, we extend the $c$-multi-coverage objective $\varphi(x) = \min(x, c)$ to a piecewise linear reward that discounts any marginal reward beyond $c$ by a factor of $\beta$.
This perfectly interpolates between the approximation ratio for $c$-multi-coverage functions in \citet[Theorem 1.1]{barman2022tight} and the identity function $\varphi(x) = x$,
which has Poisson concavity ratio $\E_{X \sim \Pois(x)}[X] / x = 1$.

\begin{restatable}{theorem}{PiecewiseLinear}
\label{thm:piecewise_linear_ratio}
Let $c \in \Z_{\ge 1}$, $\beta \in [0, 1]$.
If $\varphi(x) = \beta x + (1 - \beta) \min(x, c)$, then
\[
    \alpha_\varphi = 1 - (1 - \beta) \frac{c^c e^{-c}}{c!}.
\]
\end{restatable}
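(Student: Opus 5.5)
Write $\varphi(x) = \beta x + (1-\beta)\psi_c(x)$ with $\psi_c(x) = \min(x,c)$. By linearity of expectation and $\E[\Pois(x)] = x$,
\[
  \E[\varphi(\Pois(x))] = \beta x + (1-\beta)\E[\min(\Pois(x), c)].
\]
Define the deficit $D_c(x) := \min(x,c) - \E[\min(\Pois(x),c)] \ge 0$. Then
\[
  \alpha_\varphi(x) = 1 - (1-\beta)\frac{D_c(x)}{\varphi(x)}.
\]
So the task reduces to computing $\sup_{x \in \Z_{\ge 1}} D_c(x)/\varphi(x)$ and showing it equals $c^c e^{-c}/c!$. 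The case $\beta = 1$ is trivial, so assume $\beta < 1$.

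\textbf{Step 1: a closed form for the deficit.} I will use
\[
  \E[\min(X,c)] = \sum_{\ell=1}^{c} \Pr[X \ge \ell]
  \quad\text{and}\quad
  \E[(X-c)^+] = x - \E[\min(X,c)].
\]
For $x \le c$ this gives $D_c(x) = \E[(c - X)^+] - (c - x)$, which equals $\E[(X-c)^+]$. For $x \ge c$ it gives $D_c(x) = c - \E[\min(X,c)] = \E[(c-X)^+]$.

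The standard Poisson identity $\E[(X-c)^+] - \E[(c-X)^+] = x - c$ ties the two cases together. It also gives a clean formula at the kink. Using $\ell\,p_\ell = x\,p_{\ell-1}$, one gets $\E[(c-X)^+] = \sum_{\ell<c}(c-\ell)p_\ell$, which telescopes to $c\,p_c$ when $x = c$. Hence
\[
  D_c(c) = c \cdot \frac{c^c e^{-c}}{c!}.
\]
Since $\varphi(c) = \beta c + (1-\beta)c = c$, the ratio at $x = c$ is $D_c(c)/\varphi(c) = c^c e^{-c}/c!$. This shows $\alpha_\varphi \le \alpha_\varphi(c) = 1 - (1-\beta)c^ce^{-c}/c!$.

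\textbf{Step 2: the matching bound on each side of $c$.}

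For $x \ge c$, I claim $D_c(x) = \E[(c-X)^+]$ is nonincreasing in $x$. This holds because $(c-X)^+$ is a nonincreasing function of $X$ and Poisson variables are stochastically increasing in the rate. Meanwhile $\varphi(x) \ge \varphi(c) = c$. So the ratio is at most $D_c(c)/c$.

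For $x \le c$ the denominator is $\varphi(x) = x$, so I need $\E[(X-c)^+]/x$ to be nondecreasing on $(0,c]$. I would differentiate, using $\frac{d}{dx}\E[(X-c)^+] = \Pr[X \ge c]$, which reduces the claim to
\[
  x\Pr[X\ge c] \ge \E[(X-c)^+].
\]
Writing $\E[(X-c)^+] = \sum_{\ell \ge c}(\ell - c)p_\ell$ and $\ell p_\ell = x p_{\ell-1}$, the right side is $x\Pr[X \ge c-1] - c\Pr[X\ge c]$. The inequality therefore becomes $(x + c)\Pr[X\ge c] \ge x\Pr[X \ge c-1]$, i.e.
\[
  c\Pr[X \ge c] \ge x\,p_{c-1} = c\,p_c.
\]
This is obvious because $\Pr[X\ge c] \ge p_c$.

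Alternatively, I could restrict to integers and use the known result of \citet[Theorem 1.1]{barman2022tight} for $\min(x,c)$, which says the ratio $D_c(x)/x$ is maximized at $x = c$ over $x \le c$.

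\textbf{Conclusion and main obstacle.} Combining both sides, $\sup_x D_c(x)/\varphi(x) = c^ce^{-c}/c!$, attained at $x = c \in \Z_{\ge1}$, which yields the stated $\alpha_\varphi$. The main obstacle is the $x \le c$ monotonicity in Step 2. If the derivative identity causes trouble, I would fall back on citing \citet{barman2022tight} for the multi-coverage case. That citation suffices for $x \le c$ because there $\varphi(x) = x = \min(x,c)$, so the ratio coincides with their multi-coverage ratio.
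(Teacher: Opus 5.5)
Your proof is correct. Its overall structure (evaluate at $x=c$, then show monotonicity on each side) is the same as the paper's, and your $x \le c$ case coincides with the paper's Case~1. Both arguments differentiate $\E[\min(X,c)]/x$: you phrase it as $D_c(x)/x$ increasing, the paper as $f(x)$ decreasing. Both then reduce to $c\Pr[X \ge c] \ge x\Pr[X=c-1] = c\Pr[X=c]$.

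The genuine difference is on $x \ge c$. The paper applies the quotient rule to $h(x)=\E[\varphi(X)]/\varphi(x)$ and simplifies the numerator to $(1-\beta)c\,[\Pr(X\le c-1)+\beta\Pr(X=c)] \ge 0$. Your deficit form $\alpha_\varphi(x) = 1-(1-\beta)D_c(x)/\varphi(x)$ lets you finish in one line instead. There $D_c(x)=\E[(c-X)^+]$ is nonincreasing by stochastic monotonicity of the Poisson family, and $\varphi(x)\ge c$. The paper's computation yields an explicit derivative and strict monotonicity of $h$, but neither is needed for the result.

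Your decomposition in fact gives an even shorter proof with no new calculus. For all $x$ we have $\varphi(x) \ge \min(x,c)$, with equality at $x=c$, and $D_c \ge 0$. Hence $D_c(x)/\varphi(x) \le D_c(x)/\min(x,c) = 1-\E[\min(X,c)]/\min(x,c)$. By the multi-coverage result of Barman et al.\ (which the paper reproves first), the supremum of the right-hand side over $x\in\Z_{\ge 1}$ is $c^ce^{-c}/c!$, attained at $x=c$. This makes transparent why the answer has the form it does: adding the linear term $\beta x$ only scales the deficit by $(1-\beta)$ and enlarges the denominator.
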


Finally, we give a closed-form formula for the approximation ratio of isoelastic rewards $\varphi(x) = x^{1 - \gamma}$.
We build on \citet[Proposition B.3]{barman2021tight}, which states
$\alpha_\varphi = \frac{1}{e} \sum_{n=1}^\infty \frac{n^{1-\gamma}}{n!}$.
For example, if $\gamma = 1/2$, then $\varphi(x) = \sqrt{x}$ and $\alpha_{\varphi} \approx 0.773$.
As $\gamma \rightarrow 1$, we recover the maximum coverage problem.

\begin{restatable}{theorem}{Isoelastic}
\label{thm:isoelastic_ratio}
For $\gamma \in (0, 1)$ and $\varphi(x) = x^{1 - \gamma}$, we have
\[
    \alpha_{\varphi} = \frac{1}{e \Gamma(\gamma)} \int_{0}^{1} \frac{e^x}{(-\log x)^{1 - \gamma}} dx.
\]
\end{restatable}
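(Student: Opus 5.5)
The plan is to start from the series representation quoted above from \citet[Proposition B.3]{barman2021tight},
\[
  \alpha_{\varphi} = \frac{1}{e}\sum_{n=1}^{\infty} \frac{n^{1-\gamma}}{n!} = \frac{1}{e}\sum_{n=1}^{\infty} \frac{n^{-\gamma}}{(n-1)!},
\]
and convert it into an integral with the Gamma-function identity for negative powers. For $\gamma \in (0,1)$ and $n \ge 1$, the substitution $s = nt$ in the definition of $\Gamma(\gamma)$ gives
\[
  n^{-\gamma} = \frac{1}{\Gamma(\gamma)} \int_0^\infty t^{\gamma-1} e^{-nt}\, dt .
\]
This integral converges at $t \to 0$ precisely because $\gamma > 0$, and this is where the hypothesis $\gamma \in (0,1)$ is used.

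Next, I will substitute this representation into the series and swap the sum and the integral. The swap is justified by Tonelli's theorem, since every term $t^{\gamma-1}e^{-nt}/(n-1)!$ is nonnegative. The inner sum is an exponential series:
\[
  \sum_{n=1}^\infty \frac{e^{-nt}}{(n-1)!} = e^{-t}\sum_{m=0}^\infty \frac{(e^{-t})^m}{m!} = e^{-t}\exp(e^{-t}).
\]
This yields
\[
  \alpha_{\varphi} = \frac{1}{e\,\Gamma(\gamma)} \int_0^\infty t^{\gamma-1} e^{-t} \exp(e^{-t})\, dt .
\]

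Finally, I will change variables with $x = e^{-t}$, so that $t = -\log x$ and $dx = -e^{-t}\,dt$. The factor $e^{-t}\,dt$ is absorbed exactly into $dx$, and the limits $t \in (0,\infty)$ map to $x \in (0,1)$ with the orientation flip cancelling the sign. This gives
\[
  \int_0^1 (-\log x)^{\gamma-1} e^{x}\, dx = \int_0^1 \frac{e^x}{(-\log x)^{1-\gamma}}\, dx,
\]
which is the claimed formula.

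As a sanity check, I will verify integrability directly. Near $x \to 1$ we have $-\log x \sim 1-x$, so the integrand behaves like $(1-x)^{\gamma-1}$, which is integrable since $\gamma > 0$. Near $x \to 0$ the integrand tends to $0$ because $1-\gamma > 0$.

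There is no real obstacle here. The only points needing care are justifying the interchange of sum and integral (handled by Tonelli via nonnegativity) and keeping track of the $e^{-t}$ Jacobian factor so that it cancels exactly in the substitution.
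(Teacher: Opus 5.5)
Your proposal is correct and is essentially the paper's argument run in reverse: the paper starts from the integral, substitutes $u = -\log x$, expands $e^{e^{-u}}$ as a power series, swaps sum and integral, and evaluates each term via $\int_0^\infty u^{\gamma-1}e^{-nu}\,du = \Gamma(\gamma)\,n^{-\gamma}$ to recover the series of Barman et al. These are exactly your steps in the opposite order, and your appeal to Tonelli (nonnegative terms) is a small improvement, since the paper swaps the sum and integral without justification.
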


%% file: experiments.tex
\section{Experiments}
\label{sec:experiments}

Finally, we compare relax-and-round methods for $c$-multi-coverage rewards $\varphi(x) = \min(x, c)$.

\subsection{Warm-up: Hard synthetic graphs}

\begin{minipage}[c]{0.51\textwidth}

Inspired by the $(1-\sfrac{1}{e}+\varepsilon)$-hardness of approximation in \citet{feige1998threshold},
we construct a family of instances with the following property in \Cref{app:hard_greedy_instances}.

\begin{restatable}{theorem}{HardInstances}
\label{thm:hard_instances}
For any $\varepsilon > 0$ and $c \ge 1 + \ceil*{3.3 / \varepsilon}$,
there is a $c$-multi-coverage input where the greedy algorithm outputs $S$
with $f(S) \le (1 - \sfrac{1}{e} + \varepsilon) \OPT$.
\end{restatable}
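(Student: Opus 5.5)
We will adapt Feige's tight instance for max coverage (where greedy gets only $1-(1-1/k)^k$) and lift it to $c$-multi-coverage by replicating each right node's demand. Throughout, $f = C_\varphi$ with $\varphi(x)=\min(x,c)$.

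\textbf{Base construction.} Start from the classic bad instance for greedy with budget $k$. The universe $R$ of total weight $1$ is partitioned into $k$ disjoint ``optimal'' sets $O_1,\dots,O_k$, each of weight $1/k$. The adversarial greedy sets $G_1,\dots,G_k$ are chosen so that $G_t$ covers a $1/k$ fraction of what $G_1,\dots,G_{t-1}$ left uncovered, spread evenly across the $O_i$'s. Ties are broken toward $G_t$, or $G_t$ is perturbed slightly upward in weight. Greedy then covers $1-(1-1/k)^k$, while $\OPT=1$.

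\textbf{Lifting to $c$-multi-coverage.} For $c$-coverage, we make $c$ parallel copies of each optimal set. Concretely, $L$ contains $O_i^{(1)},\dots,O_i^{(c)}$ with identical neighborhoods $O_i$, and the budget becomes $kc$. Then $\OPT=c$: selecting all copies covers every node exactly $c$ times. The greedy decoys $G_t$ are built with respect to the residual deficits $c-\min(|S\cap N(j)|,c)$, again spread evenly so that each decoy gains marginally more than any unchosen copy. With $K=kc$ steps and a budget of $K$, the standard analysis then gives $f(S)\approx(1-(1-1/K)^K)\OPT$.

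\textbf{Main obstacle.} Greedy on $c$-coverage is not forced into the $1-1/e$ worst case, since $\alpha_\varphi\to1$. The decoys therefore have to exploit the integrality gap created when coverage counts cross thresholds. The cleaner plan is to let each $G_t$ touch every right node at most once and keep every count below $c$. Marginal gains are then linear, exactly as in max coverage with weights equal to residual deficits. The difficulty is that evenly spread decoys eventually push some counts up to $c$. A continuous analysis of the deficit profile shows greedy's value is $c(1-1/e)$ plus an overshoot term of order $O(1)$ total, i.e.\ a relative error of $O(1/c)$. Bounding this error by $3.3/(c-1)$, using $c-1\ge 3.3/\varepsilon$, yields $f(S)\le(1-1/e+\varepsilon)\OPT$. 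This explicit constant-tracking, which I expect combines $(1-1/K)^K\le 1/e$ with a Stirling or Poisson-tail estimate for the fraction of nodes saturated early, is the step I expect to be hardest. I would first try bounding the saturated mass by $\Pr[\Pois(c)\ge c]$-type quantities.
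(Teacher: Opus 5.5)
Your lifting of the optimum is correct, and it is exactly the paper's: each block $M_i$ consists of $c$ identical columns over the same $\ell=c-1$ rows, with $q=c+1$ blocks and budget $k=qc$. The gap is in the decoys, and there the mechanism you propose runs backwards. In $c$-multi-coverage the marginal gain of a left node $i$ is $\sum_{j\in N(i)} w_j \mathbf{1}[\,|S\cap N(j)|<c\,]$, so it is not weighted by residual deficits. Let $D_t=\OPT-f(S_t)=\sum_j w_j\,(c-\min(|S_t\cap N(j)|,c))$, and let $U_t$ be the weight of the unsaturated right nodes. Some unchosen copy of some $O_i$ has gain at least $U_t/k$. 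Also $D_t\le c\,U_t$. Hence greedy's gain at step $t$ satisfies $g_t\ge U_t/k\ge D_t/K$. Greedy can end near $(1-1/e)\OPT$ only if both inequalities are nearly tight essentially throughout the run. The second one is tight exactly when every unsaturated node has count $0$. Your ``cleaner plan'' of keeping all counts below $c$ therefore defeats the construction: $U_t=1$ throughout, greedy gains at least $1/k$ per step, and it finishes with at least $kc/k=c=\OPT$. Saturation is not an $O(1)$ overshoot to be controlled by a Poisson-tail estimate; it is the entire source of greedy's loss. The decoys must fill one layer of right nodes from $0$ all the way to $c$ before touching the next layer. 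They may spread across the $O_i$'s, but not across nodes within an $O_i$. Relatedly, $\alpha_\varphi\to 1$ is a statement about relax-and-round, not greedy, so it gives no reason to expect greedy to escape $1-1/e$.

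The paper supplies exactly this layered structure. In phase $i$ it takes one fresh row from each block. The decoys are cyclic arcs through these $q$ rows, of length $\ell+2-i$ or $\ell+3-i$, arranged so that each of the $q$ rows is covered exactly $c$ times; this uses $\ell+2\le q$ and $(\ell+1)^2\le k$. Every such decoy beats an optimal column, whose marginal value after $i-1$ phases is $\ell+1-i$. When the phase ends those rows are saturated, so every column loses exactly one unit. Phase $i$ uses $\lfloor k/(\ell+2-i)\rfloor$ picks. The bound $\sum_{i=1}^{i'}\lfloor k/(\ell+2-i)\rfloor\ge k\log\frac{\ell+2}{\ell+3-i'}$ shows the budget runs out by phase $i^*=\ell+3-\lfloor(\ell+2)/e\rfloor$. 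So greedy touches at most $i^*$ of the $\ell$ rows per block, and its value is at most $i^*qc$ against $\OPT=\ell qc$. Finally $i^*/\ell\le 1-1/e+(4-2/e)/\ell\le 1-1/e+3.3/(c-1)$. The constant $3.3$ is floor and boundary slack, not a Stirling or Poisson estimate. Note also that your step ``the standard analysis then gives $f(S)\approx(1-(1-1/K)^K)\OPT$'' invokes the lower-bound argument for greedy. An upper bound requires an explicit decoy family together with a check that greedy actually picks it, and your proposal never specifies the decoys' neighborhoods.
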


\citet{barman2022tight} prove that $\alpha_{\varphi} = 1 - \frac{c^c e^{-c}}{c!} \rightarrow 1$ as $c$ increases.
\Cref{fig:synthetic_gap} shows a clear gap between the approximation guarantee of \Cref{alg:accelerated_relax_and_round} (dotted blue) and the greedy objective value (solid orange).

\end{minipage}
\hspace{0.2cm}
\begin{minipage}[c]{0.45\textwidth}
    \centering
    \vspace{-0.58cm}
    \includegraphics[width=0.95\linewidth]{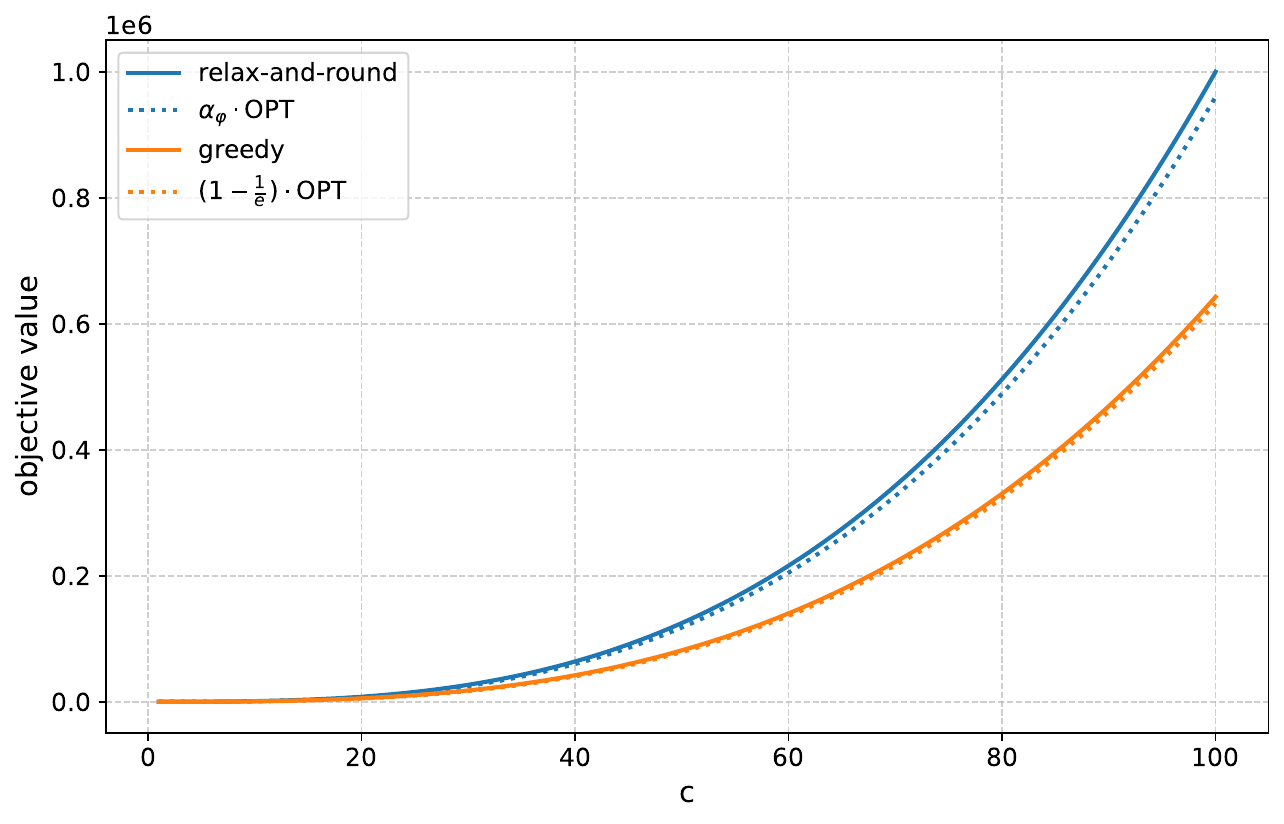}
    \captionof{figure}{Comparison of relax-and-round and greedy algorithms on hard synthetic graphs.}
    \label{fig:synthetic_gap}
\end{minipage}

\subsection{SNAP graphs}

\paragraph{Datasets.}
We build $c$-multi-coverage instances from SNAP graphs~\citep{snapnets}.
For a graph $G = (V, E)$,
let $G' = (L, R, E')$ where
$L = R = V$ and $E' = \{(i,i) : i \in V\} \cup \{(i, j) : \{i,j\} \in E \text{~or~} \{j,i\} \in E\}$.
The \texttt{facebook} graph has $n=\text{4,039}$ and $m=\text{180,507}$;
\texttt{dblp} has $n = \text{317,080}$ and $m = \text{2,416,812}$.
We study $c$-multi-coverage on these bipartite graphs for different values of $c$ and budgets~$k$.

\vspace{-0.1cm}
\paragraph{Setup.}
We use several different LP solvers to compute $\mat{x}^*$:
GLOP is a simplex method in Google OR-Tools~\citep{ortools},
HiGHS is a barrier method~\citep{huangfu2018parallelizing},
and SCIP is an optimization library for mixed-integer programming~\citep{hojny2025scip}.
We set the time limit for each LP solve to 600 seconds.
If the LP solver does not find a feasible solution, we report ``TLE'' (time limit exceeded);
otherwise, we run pipage rounding as in \citep{barman2021tight} and report the total running time.

We compare the LP-based relax-and-round methods to \MainAlgorithm (\Cref{alg:accelerated_relax_and_round}) with $\varepsilon = 0.01$.
We vary the step size $\eta$ as a hyperparameter
and apply an early stopping rule if the relative improvement of the smooth objective drops below $\texttt{tol} = \text{1e-6}$.
The initial solution $\mat{x}^{(0)}$ is computed with the lazy greedy algorithm.
All of our experiments are implemented in C\texttt{++}.
We give machine details and LP solver versions in \Cref{app:experiments}.

\vspace{-0.25cm}
\begin{table}[H]
  \caption{End-to-end running times of relax-and-round methods in seconds for $c=2$ multi-coverage.}
  \label{table:main_experiments}
  \centering
  \begin{tabular}{rrrrrrrrrrrrrrrrrrr}
    \toprule
    & & \multicolumn{3}{c}{\textbf{LP Solver + PipageRounding}} & \multicolumn{3}{c}{\textbf{AcceleratedRelaxAndRound}} \\
    \cmidrule(r){3-5} \cmidrule(l){6-8}
    & $k$ & GLOP & HiGHS & SCIP & $\eta = 10$ & $\eta = 1$ & $\eta = 0.1$ \\
    \midrule
    \multirow{5}{*}{\rotatebox[origin=c]{90}{\texttt{facebook}}} &
    20 & 0.84 \scriptsize{$\pm$ 0.01} & 1.12 \scriptsize{$\pm$ 0.01} & 7.64 \scriptsize{$\pm$ 0.14} & \textbf{0.04} \scriptsize{$\pm$ 0.00} & 0.05 \scriptsize{$\pm$ 0.00} & 0.98 \scriptsize{$\pm$ 0.01} \\
    & 40 & 4.00 \scriptsize{$\pm$ 0.01} & 4.34 \scriptsize{$\pm$ 0.02} & 10.60 \scriptsize{$\pm$ 0.47} & \textbf{0.04} \scriptsize{$\pm$ 0.00} & 0.08 \scriptsize{$\pm$ 0.00} & 1.46 \scriptsize{$\pm$ 0.00} \\
    & 60 & 6.76 \scriptsize{$\pm$ 0.06} & 7.02 \scriptsize{$\pm$ 0.04} & 13.66 \scriptsize{$\pm$ 1.19} & \textbf{0.04} \scriptsize{$\pm$ 0.00} & 0.07 \scriptsize{$\pm$ 0.00} & 2.08 \scriptsize{$\pm$ 0.01} \\
    & 80 & 16.58 \scriptsize{$\pm$ 0.11} & 16.56 \scriptsize{$\pm$ 0.03} & 22.47 \scriptsize{$\pm$ 2.08} & \textbf{0.05} \scriptsize{$\pm$ 0.00} & 0.09 \scriptsize{$\pm$ 0.00} & 3.81 \scriptsize{$\pm$ 0.02} \\
    & 100 & 20.28 \scriptsize{$\pm$ 0.14} & 20.74 \scriptsize{$\pm$ 1.21} & 24.97 \scriptsize{$\pm$ 0.71} & \textbf{0.05} \scriptsize{$\pm$ 0.00} & 0.14 \scriptsize{$\pm$ 0.00} & 3.95 \scriptsize{$\pm$ 0.02} \\
    \midrule
    \multirow{5}{*}{\rotatebox[origin=c]{90}{\texttt{dblp}}}
    & 20 & 105.79 \scriptsize{$\pm$ 0.38} & 50.45 \scriptsize{$\pm$ 0.63} & TLE & 1.42 \scriptsize{$\pm$ 0.05} & \textbf{1.38} \scriptsize{$\pm$ 0.05} & 2.09 \scriptsize{$\pm$ 0.09} \\
    & 40 & 116.98 \scriptsize{$\pm$ 0.28} & 56.36 \scriptsize{$\pm$ 1.39} & TLE & 1.58 \scriptsize{$\pm$ 0.02} & 1.79 \scriptsize{$\pm$ 0.04} & \textbf{1.48} \scriptsize{$\pm$ 0.04} \\
    & 60 & 665.57 \scriptsize{$\pm$ 0.39} & 78.01 \scriptsize{$\pm$ 1.68} & TLE & 1.53 \scriptsize{$\pm$ 0.04} & 1.80 \scriptsize{$\pm$ 0.05} & \textbf{1.51} \scriptsize{$\pm$ 0.02} \\
    & 80 & 733.57 \scriptsize{$\pm$ 0.98} & 75.13 \scriptsize{$\pm$ 1.55} & TLE & 1.53 \scriptsize{$\pm$ 0.00} & 2.24 \scriptsize{$\pm$ 0.07} & \textbf{1.48} \scriptsize{$\pm$ 0.01} \\
    & 100 & 732.11 \scriptsize{$\pm$ 1.40} & 86.70 \scriptsize{$\pm$ 1.53} & TLE & 1.56 \scriptsize{$\pm$ 0.02} & 2.28 \scriptsize{$\pm$ 0.02} & \textbf{1.47} \scriptsize{$\pm$ 0.01} \\
    \bottomrule
  \end{tabular}
\end{table}
\paragraph{Results.}
\Cref{table:main_experiments} gives end-to-end running times in seconds averaged across 3 trials.
We focus on $c = 2$, but sweep over $c \in \{1, 2, 4\}$ and report all objective values in \Cref{app:experiments}.
Since the LP solver limit is 600s, pipage rounding on \dblp using GLOP ($k \ge 60$) takes over 130s, which is \mytilde 100x slower than \Cref{alg:accelerated_relax_and_round}.
SCIP does not find a feasible solution on \dblp for any~$k$.
GLOP exhausts its time limit for $k \ge 60$ and only finds suboptimal solutions.
HiGHS finds optimal solutions for $c = 2$,
but \emph{fails to find any feasible solutions in one hour} on \dblp if $c = 1$;
in contrast,
\Cref{alg:accelerated_relax_and_round} runs in 180s for $k \le 10^4$ and finds better solutions than greedy (see \Cref{table:dblp_objective_values_large_k_c1} and \Cref{table:dblp_running_times_large_k_c1}).

%% file: app_introduction.tex
\section{Analysis for \Cref{sec:introduction}}

\subsection{Poisson concavity ratio lower bound}
\label{app:poisson_concavity_ratio_lower_bound}

\begin{lemma}
For any concave nondecreasing function $\varphi : \Z_{\ge 0} \rightarrow \R$ such that $\varphi(0) = 0$,
\[
    \alpha_{\varphi} \ge 1 - \frac{1}{e}.
\]
\end{lemma}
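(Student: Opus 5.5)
Fix an integer $x \ge 1$ and write $X \sim \Pois(x)$. The plan is to show $\E[\varphi(X)] \ge (1 - \sfrac{1}{e})\varphi(x)$ for each such $x$. We first reduce to the case $\varphi(x) > 0$, since otherwise the bound is trivial. Both sides are linear in $\varphi$. Moreover, every concave nondecreasing $\varphi$ with $\varphi(0)=0$ on $\Z_{\ge 0}$ can be written as a nonnegative combination of the "hinge" functions $h_c(y) = \min(y, c)$ for $c \in \Z_{\ge 1}$, plus possibly the identity $h_\infty(y)=y$. Concretely,
\[
\varphi(y) = \sum_{c \ge 1} (\delta_c - \delta_{c+1})\min(y,c) + \delta_\infty\, y,
\]
where $\delta_c = \varphi(c)-\varphi(c-1)$ is nonincreasing and nonnegative, and $\delta_\infty = \lim_c \delta_c \ge 0$. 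It therefore suffices to prove $\E[\min(X,c)] \ge (1-\sfrac{1}{e})\min(x,c)$ for every $c \in \Z_{\ge 1}$, together with the trivial identity case $\E[X]=x$. Summing with nonnegative coefficients then gives the claim. Interchanging the sum and the expectation is justified by monotone convergence, because all terms are nonnegative.

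For the hinge case, the key step is to prove $\E[\min(X,c)] \ge (1-\sfrac1e)\min(x,c)$. I would use concavity of $t \mapsto \E[\min(\Pois(t), c)]$ in $t$. Its derivative is $\Pr[\Pois(t) \le c-1]$, which is decreasing in $t$, so the function $g(t) = \E[\min(\Pois(t),c)]$ is concave with $g(0)=0$. Hence $g(t)/t$ is nonincreasing. For $x \le c$, we get $g(x)/x \ge g(c)/c$. For $x \ge c$, monotonicity gives $g(x) \ge g(c)$ while $\min(x,c)=c$. So in all cases the ratio is at least $g(c)/c = \E[\min(\Pois(c),c)]/c$.

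It remains to show $\E[\min(\Pois(c),c)]/c \ge 1 - \sfrac1e$ for every integer $c \ge 1$. This is the main obstacle. The known closed form is $1 - c^c e^{-c}/c!$, quoted from \citet{barman2022tight} in the paper. I would verify it by writing $\E[\min(X,c)] = c - \E[(c-X)^+]$ and computing $\E[(c-X)^+] = c\,\Pr[X \le c] - c\,\Pr[X \le c-1]$ via the identity $k\Pr[X=k] = c\Pr[X=k-1]$. This telescopes to $c \cdot e^{-c}c^c/c!$. The ratio is then $1 - e^{-c}c^c/c!$. At $c=1$ this equals $1 - \sfrac1e$. For larger $c$ the ratio is larger, since $a_c = e^{-c}c^c/c!$ satisfies $a_{c+1}/a_c = e^{-1}(1+1/c)^c < 1$. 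Taking the infimum over $x$ completes the proof.
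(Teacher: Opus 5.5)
Your proof is correct, but it takes a more roundabout route than the paper.

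The paper fixes the rate $k$ and uses a single pointwise minorant, $\varphi(y) \ge \frac{\varphi(k)}{k}\min(y,k)$ for all $y \ge 0$. This follows from concavity and $\varphi(0)=0$ below $k$, and from monotonicity above $k$. The minorant is tight at $y=k$, so it immediately gives $\alpha_\varphi(k) \ge \E[\min(\Pois(k),k)]/k = 1 - k^k e^{-k}/k!$. Only the ``diagonal'' hinge, whose breakpoint equals the Poisson rate, ever appears.

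You instead decompose $\varphi$ into all hinges $\min(y,c)$ plus the identity. You then have to control every off-diagonal hinge $c \neq x$. That forces you to prove that $t \mapsto \E[\min(\Pois(t),c)]$ is concave and nondecreasing, in effect re-deriving the $c$-multi-coverage ratio of \Cref{lem:c_coverage_approximation_ratio}, before you land on the same quantity $1 - c^c e^{-c}/c!$. The paper's minorant trick shows this intermediate step is unnecessary.

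Your approach does have advantages:
\begin{itemize}
\item It is the more systematic extreme-ray argument: it reduces any inequality that is linear in $\varphi$ to checking hinge functions.
\item It extends without change to non-integer rates $x$, using the piecewise linear extension of $\varphi$, because the hinge breakpoints stay integral.
\item You actually prove that $c^c e^{-c}/c!$ is decreasing, via $a_{c+1}/a_c = e^{-1}(1+1/c)^c < 1$. The paper's proof only asserts that the bound is minimized at $k=1$.
\end{itemize}
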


\begin{proof}
By \citet[Proposition A.1]{barman2021tight},
it suffices to show that $\alpha_{\varphi}(k) \ge 1 - \sfrac{1}{e}$ for all positive integers $k$.
We construct a lower bound $f_k$ for $\varphi$ that holds for any choice of $k$.
For $x \le k$, by the definition of concavity,
\[
    \varphi(x) \ge \varphi(k) \frac{x}{k}.
\]
For $x \ge k$, since $\varphi$ is nondecreasing,
\[
    \varphi(x) \ge \varphi(k).
\]
Combining these two piecewise bounds into a single inequality for $x \ge 0$, let
\[
    f_k(x)
    \defeq
    \frac{\varphi(k)}{k} \min\parens*{x, k}.
\]

Now that we have a lower bound for $\varphi$ that is purely a multiple of $\min(x, k)$,
fix the value of $k \ge 1$ and let $X \sim \Pois(k)$.
It follows that
\begin{align*}
    \alpha_{\varphi}(k)
    &=
    \frac{\E[\varphi(X)]}{\varphi(k)}
    \ge
    \frac{\E\bracks*{\min\parens*{X, k}}}{k}.
\end{align*}
Applying \Cref{lem:expected_min},
\begin{align*}
    \E\bracks*{\min\parens*{X, k}}
    &=
    k \Pr(X \le k - 2) + k \Pr(X \ge k) \\
    &=
    k \parens*{1 - \Pr\parens*{X = k - 1}}.
\end{align*}
Substituting this into our inequality,
\[
    \alpha_{\varphi}(k)
    \ge
    \frac{k \parens*{1 - \Pr(X = k - 1)}}{k}
    =
    1 - \frac{k^{k-1} e^{-k}}{(k-1)!}
    =
    1 - \frac{k^{k} e^{-k}}{k!}
    =
    1 - \Pr(X = k).
\]
This value is minimized at $k = 1$, which proves the claim.
\end{proof}

\subsection{Comparison of running times in \Cref{tab:theory_comparison}}
\label{app:runtime_comparison}
We prove that the running time of our algorithm is asymptotically faster than \citet{barman2021tight},
even if they use the current-fastest theoretical LP solvers in~\citep{van2020deterministic,cohen2021solving}.

\begin{theorem}
The worst-case running time of Algorithm~\ref{alg:accelerated_relax_and_round} (\MainAlgorithm) is faster than the worst-case running time of \citet{barman2021tight}, i.e.,
\[
    (m + n \log n) \frac{n \sqrt{\log n}}{\varepsilon} = O\parens*{(n + |R|)^{\omega} \log(1/\varepsilon) + n m \Delta},
\]
if $\omega > 2$ and $\varepsilon > 0$ is a constant.
The current-best matrix multiplication constant is $\omega \approx 2.3714$~\citep{alman2025more}.
\end{theorem}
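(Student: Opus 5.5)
We need to show $(m + n\log n)\, n\sqrt{\log n}/\varepsilon = O((n+|R|)^{\omega}\log(1/\varepsilon) + nm\Delta)$ when $\varepsilon$ is constant and $\omega>2$. Since $\varepsilon$ is constant, the left-hand side is $O\bigl((m + n\log n)\, n\sqrt{\log n}\bigr)$ and $\log(1/\varepsilon)$ is a positive constant; if $\varepsilon \ge 1$ the guarantee is trivial, so we may take $\varepsilon<1$. The plan is to split the left-hand side into the two summands $mn\sqrt{\log n}$ and $n^2(\log n)^{3/2}$ and bound each by one term of the right-hand side.

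\textbf{The term $n^2(\log n)^{3/2}$.} Since $\omega > 2$, we have $(\log n)^{3/2} = O(n^{\omega-2})$. Hence $n^2(\log n)^{3/2} = O(n^{\omega}) = O((n+|R|)^{\omega})$.

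\textbf{The term $mn\sqrt{\log n}$.} This is the main obstacle, because we must beat the extra $\sqrt{\log n}$ factor. I would split into two cases according to the size of $\Delta$.
\begin{itemize}
\item If $\Delta \ge \sqrt{\log n}$, then $mn\sqrt{\log n} \le nm\Delta$ directly.
\item If $\Delta < \sqrt{\log n}$, then every right node has degree at most $\Delta$, so $m \le |R|\Delta < |R|\sqrt{\log n}$. This gives
\[
mn\sqrt{\log n} \le n|R|\log n \le (n+|R|)^2 \log(n+|R|).
\]
Because $\omega>2$, we have $\log(n+|R|) = O((n+|R|)^{\omega-2})$, so this is $O((n+|R|)^{\omega})$.
\end{itemize}

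Adding the two bounds gives the claim, with constants depending only on $\varepsilon$ and $\omega$. I will state explicitly that the hidden constant grows as $\omega \to 2$ through the polylog-versus-polynomial comparison, and that a constant $\varepsilon$ is essential because the left-hand side scales as $1/\varepsilon$ while the right-hand side scales only as $\log(1/\varepsilon)$.
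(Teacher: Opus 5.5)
Your proof is correct, but it takes a different route from the paper's. You split the left-hand side into $mn\sqrt{\log n}$ and $n^2(\log n)^{3/2}$. You absorb the second summand into $n^{\omega}$. You handle the first by a case split on whether $\Delta \ge \sqrt{\log n}$, using $m \le |R|\Delta$ in the low-degree case to force $|R|$, and hence $(n+|R|)^{\omega}$, to be large.

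The paper instead lower-bounds the right-hand side directly:
\begin{itemize}
\item It uses $\Delta \ge m/|R|$ to reduce the right-hand side to $|R|^{\omega} + nm^2/|R|$.
\item It balances these two terms with the inequality $a+b \ge (a^c b^d)^{1/(c+d)}$ (with $c=1$, $d=\omega$) to get $(nm^2)^{\omega/(\omega+1)}$.
\item It uses the standing assumption $m \ge n$ (degree-zero left nodes removed) to conclude that the right-hand side is at least $mn \cdot m^{(\omega-2)/(\omega+1)} \ge mn \cdot n^{0.11}$.
\item It notes that this dominates the left-hand side, which is $O(mn\log^2 n)$ (again using $m \ge n$).
\end{itemize}

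Your argument is more elementary. It needs neither the weighted-max lemma nor the assumption $m \ge n$, and it works directly from the hypothesis $\omega > 2$ rather than the numerical value $2.371$.

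The paper's argument buys a stronger, quantitative conclusion. The competing bound always exceeds $mn$ by a polynomial factor $n^{(\omega-2)/(\omega+1)}$, so the new running time is actually $o(\cdot)$ of the old one, which is the ``strictly faster'' claim made in the introduction. Your case analysis yields only the literal $O(\cdot)$ statement. For example, when $\Delta \approx \sqrt{\log n}$ you compare only against $nm\Delta$, which is within a constant factor of $mn\sqrt{\log n}$, and you never exploit the $(n+|R|)^{\omega}$ term there.

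Both proofs implicitly need $\varepsilon < 1$ so that $\log(1/\varepsilon)$ is a positive constant; you were right to state that explicitly.
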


\begin{proof}
We start by lower bounding the running time of \citet{barman2021tight}:
\begin{align*}
    (n + |R|)^{\omega} \log(1/\varepsilon) + n m \Delta &> |R|^{\omega} + n m \Delta \\
    &\geq |R|^{\omega} + n m \frac{m}{|R|} \\ 
    &\geq \parens*{|R|^{\omega} \cdot \parens*{n m \frac{m}{|R|}}^{\omega} }^{1/(\omega + 1)} & \text{(\Cref{lem:arthmetic_geometric_mean} with $c=1$ and $d=\omega$)}\\
    &= n^{\omega/(\omega+1)} m^{2\omega/(\omega+1)} \\
    &= m \cdot n^{\omega/(\omega+1)} m^{(\omega-1)/(\omega+1)} \\
    &\geq m \cdot n \cdot m^{(\omega-2)/(\omega+1)} & \text{(using $m \geq n$)} \\
    &\geq mn \cdot n^{0.11} & \text{(using $\omega > 2.371$)}
\end{align*}
We use the inequality $m \geq n$ above since any left node with degree zero can be removed from the instance without loss of generality.

To complete the proof, we note that our runtime is essentially the $mn$ term, and that the extra logarithm factors are asymptotically smaller than the polynomial term $n^{0.11}$ even though the exponent constant is small.
Concretely,
\begin{align*}
    (m + n \log n) \frac{n \sqrt{\log n}}{\varepsilon} &= O\parens*{mn \log^2 n} \leq O(m n \cdot n^{0.11}),
\end{align*}
as desired.
\end{proof}

\begin{lemma}\label{lem:arthmetic_geometric_mean}
For any four positive real numbers $a, b, c$, and $d$, we have
$
    a+b \ge \parens{a^c \times b^d}^{1/(c+d)}.
$
\end{lemma}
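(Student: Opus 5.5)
Since both sides of $a+b \ge (a^c b^d)^{1/(c+d)}$ are positive, the plan is to read the right-hand side as a weighted geometric mean of $a$ and $b$ and compare it with the corresponding weighted arithmetic mean. Set the weights $\lambda \defeq c/(c+d)$ and $1-\lambda = d/(c+d)$; both lie in $(0,1)$ because $c,d>0$. Then
\[
    \parens*{a^c b^d}^{1/(c+d)} = a^{\lambda} b^{1-\lambda}.
\]

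The key step is the weighted AM--GM inequality $a^{\lambda} b^{1-\lambda} \le \lambda a + (1-\lambda) b$. It follows from concavity of $\log$ on $(0,\infty)$:
\[
    \lambda \log a + (1-\lambda)\log b \le \log\parens*{\lambda a + (1-\lambda) b},
\]
after which we exponentiate.

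To finish, use that $a,b>0$ and $\lambda, 1-\lambda \le 1$:
\[
    \lambda a + (1-\lambda) b \le a + b.
\]
Chaining this with the previous step gives the claim. No step is a real obstacle; the only care needed is positivity of $a,b,c,d$, which makes the logarithms and the weights well defined.
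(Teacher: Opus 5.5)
Your proof is correct, but it takes a different route from the paper. The paper uses no concavity at all. It sets $z = \max\{a,b\}$ and writes $a+b \ge z = (z^c z^d)^{1/(c+d)} \ge (a^c b^d)^{1/(c+d)}$. This needs only $a+b \ge \max\{a,b\}$ and that $t \mapsto t^c$, $t \mapsto t^d$, $t \mapsto t^{1/(c+d)}$ are increasing on $t \ge 0$.

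You instead bound the weighted geometric mean $a^{\lambda} b^{1-\lambda}$ by the weighted arithmetic mean $\lambda a + (1-\lambda) b$ via Jensen for $\log$, and then drop the weights. Your intermediate bound is the sharper of the two, since $\lambda a + (1-\lambda) b \le \max\{a,b\}$, and it is what the lemma's ``arithmetic--geometric mean'' label suggests. The paper's argument is more elementary: it needs no convexity, and it works verbatim if $a$ or $b$ is $0$, where your logarithms would need a separate remark.

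The lemma is used only in the running-time comparison, with $c = 1$ and $d = \omega$, and only the final inequality $a+b \ge (a^c b^d)^{1/(c+d)}$ matters there. So either argument serves equally well.
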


\begin{proof}
Let $z = \max \set{a, b}$.
The claim follows from a sequence of inequalities:
\begin{align*}
    a + b &\geq \max \set{a, b} \\
    &= z \\
    &= \parens*{z^{c+d}}^{1/(c+d)} \\
    &= \parens*{z^c \times z^d}^{1/(c+d)} \\
    &\geq \parens*{a^c \times b^d}^{1/(c+d)}. \qedhere
\end{align*}
\end{proof}

%% file: app_smooth_proxy_objective.tex
\section{Analysis for \Cref{sec:smooth_proxy_objective}}
\label{app:smooth_proxy_objective}

\subsection{Properties of log-sum-exp smoothing}
\label{app:log_exp_sum_trick}

Recall that
\begin{align*}
    f(\mat{x})
    &=
    \min_{i \in [d]} \parens*{\mat{a}_{i}^\intercal \mat{x} + b_i} \\
    f_{\mu}(\mat{x})
    &=
    -\mu \log\parens*{\sum_{i=1}^d \exp\parens*{- \frac{\mat{a}_{i}^\intercal \mat{x} + b_i}{\mu}}}.
\end{align*}

\paragraph{Gradient.}
Following \citet[Example A.2]{boyd2004convex},
\begin{equation}
\label{eqn:lse_gradient}
    \nabla f_{\mu}(\mat{x}) = \sum_{i=1}^d p_{i}(\mat{x}) \mat{a}_{i},
\end{equation}
where $p_{i}(\mat{x})$ is the softmin probability distribution over the pieces:
\[
    p_i(\mat{x})
    =
    \frac{\exp\parens*{-(\mat{a}_{i}^\intercal \mat{x} + b_i)/\mu}}{\sum_{j=1}^d \exp\parens*{-(\mat{a}_{j}^\intercal \mat{x} + b_j)/\mu}}.
\]
Therefore, the gradient $\nabla f_\mu(\mat{x})$ is a convex combination of the $\mat{a}_i$ vectors.

\paragraph{Hessian.}
Similarly, following \citet[Example A.4]{boyd2004convex},
\begin{align}
\label{eqn:lse_hessian}
    \nabla^2 f_\mu(\mat{x})
    =
    -\frac{1}{\mu}\bracks*{\sum_{i=1}^d p_{i}(\mat{x}) \mat{a}_i \mat{a}_i^\intercal - \parens*{\sum_{i=1}^d p_{i}(\mat{x}) \mat{a}_{i} } \parens*{\sum_{i=1}^d p_{i}(\mat{x}) \mat{a}_{i}}^\intercal }.
\end{align}

\paragraph{Smoothness.}
We bound the smoothness constant of $f_\mu$ by bounding the spectral norm of its Hessian
\citep[Lemma 1.2.2]{nesterov2018lectures}:
\[
    \norm*{\nabla^2 f_\mu(\mat{x})}_{2} \le L.
\]

Since $f_\mu(\mat{x})$ is concave,
$\nabla^2 f_\mu(\mat{x})$ is negative semi-definite
and the spectral norm equals the absolute value of the smallest eigenvalue.
By the Courant--Fisher theorem,
\begin{align*}
    \norm*{\nabla^2 f_\mu(\mat{x})}_{2}
    =
    \abs*{\lambda_{\min}(\nabla^2 f_\mu(\mat{x}))}
    =
    \max_{\norm{\mat{v}}_{2} = 1} \abs*{\mat{v}^\intercal \nabla^2 f_\mu(\mat{x}) \mat{v}}.
\end{align*}
For any $\mat{v} \in \R^n$, the quadratic form is proportional to the negative variance of the discrete
random variable $Z$ that takes the value $\mat{v}^\intercal \mat{a}_i$ with probability $p_i(\mat{x})$:
\[
    - \mat{v}^\intercal \nabla^2 f_\mu(\mat{x}) \mat{v} = \frac{1}{\mu} \Var_{p(\mat{x})}(\mat{v}^\intercal \mat{a}_i).
\]
To find the maximum possible variance, we use Popoviciu's inequality,
which states that the variance of any bounded random variable $Z \in [m, M]$ is bounded by $(M - m)^2 / 4$.

For a unit vector $\mat{v}$, the maximum difference between any two values of the random variable $\mat{v}^\intercal \mat{a}_i$ is
\begin{align*}
    \max_{i, j} \parens*{\mat{v}^\intercal \mat{a}_i - \mat{v}^\intercal \mat{a}_j}
    &=
    \max_{i, j} \mat{v}^\intercal \parens*{\mat{a}_i - \mat{a}_j} \\
    &\le
    \max_{i, j} \norm{\mat{a}_i - \mat{a}_j}_2
\end{align*}
by the Cauchy--Schwarz inequality.
Applying Popoviciu's inequality gives
\[
    L = \frac{1}{4 \mu} \max_{i, j} \norm{\mat{a}_i - \mat{a}_j}^2_2.
\]

\paragraph{Approximation.}
Following \citet[Section 3.1.5]{boyd2004convex}, for any $\mu > 0$,
\begin{equation}
\label{eqn:lse_approximation}
    f(\mat{x}) - \mu \log d \le f_{\mu}(\mat{x}) \le f(\mat{x}).
\end{equation}

\subsection{Proof of \Cref{lem:smooth_c_guarantees}}
\label{app:smooth_approximate_coverage_function}

\SmoothCGaurantees*

\begin{proof}
We build on the properties of LSE smoothing for piecewise linear functions in \Cref{app:log_exp_sum_trick}.

\paragraph{Gradient.}
Since $\mat{a}_{j,i} = (\varphi(i)-\varphi(i-1)) \mat{1}_{N(j)}$, by linearity and \eqref{eqn:lse_gradient} we have
\begin{align*}
    \nabla \SmoothC(\mat{x})
    &=
    \sum_{j \in R} w_j \nabla \varphi_{j, \mu}(\mat{x}) \\
    &=
    \sum_{j \in R} w_j \sum_{i=1}^{\deg(j)} p_{j, i}(\mat{x}) \mat{a}_{j, i} \\
    &=
    \sum_{j \in R} w_j \mat{1}_{N(j)} \sum_{i=1}^{\deg(j)} p_{j, i}(\mat{x}) \cdot (\varphi(i) - \varphi(i - 1)).
\end{align*}

\paragraph{Hessian.}
For $j \in R$, let $Y_j(\mat{x})$ be a discrete random variable over the slopes of $\varphi$ with probability mass function $p_{j}(\mat{x})$.
Then,
\begin{align*}
    \E[Y_{j}(\mat{x})]
    &=
    \sum_{i=1}^{\deg(j)} p_{j, i}(\mat{x}) \cdot (\varphi(i) - \varphi(i - 1)) \\
    \Var(Y_j(\mat{x}))
    &=
    \E[Y_{j}(\mat{x})^2] - \E[Y_{j}(\mat{x})]^2.
\end{align*}
Using \eqref{eqn:lse_hessian}, observe that
\begin{align*}
    \nabla^2 \varphi_{j, \mu}(\mat{x})
    &=
    -\frac{1}{\mu}\bracks*{ \sum_{i=1}^{\deg(j)} p_{j,i}(\mat{x}) \mat{a}_{j,i} \mat{a}_{j,i}^\intercal - \parens*{\sum_{i=1}^{\deg(j)} p_{j,i}(\mat{x}) \mat{a}_{j,i} } \parens*{\sum_{i=1}^{\deg(j)} p_{j,i}(\mat{x}) \mat{a}_{j,i}}^\intercal } \\
    &=
    -\frac{1}{\mu}\bracks*{\mat{1}_{N(j)} \mat{1}_{N(j)}^\intercal \E[Y_j(\mat{x})^2] - \mat{1}_{N(j)} \mat{1}_{N(j)}^\intercal \E[Y_j(\mat{x})]^2 } \\
    &=
    -\frac{1}{\mu} \cdot \Var(Y_{j}(\mat{x})) \cdot \mat{1}_{N(j)} \mat{1}_{N(j)}^\intercal.
\end{align*}
Therefore,
\begin{align*}
    \nabla^2 \SmoothC (\mat{x})
    &=
    \sum_{j \in R} w_j \nabla^2 \varphi_{j, \mu}(\mat{x}) \\
    &=
    -\frac{1}{\mu} \sum_{j \in R} w_j \cdot \Var(Y_{j}(\mat{x})) \cdot \mat{1}_{N(j)} \mat{1}_{N(j)}^\intercal.
\end{align*}

\paragraph{Smoothness.}
\hspace{-0.2cm}
We bound the smoothness constant of $\SmoothC$ by bounding the spectral norm of its Hessian
\citep[Lemma 1.2.2]{nesterov2018lectures}.
By the triangle inequality,
\[
    \norm{\nabla^2 \SmoothC(\mat{x})}_{2}
    \le
    \frac{1}{\mu} \sum_{j \in R} w_j \cdot \Var(Y_{j}(\mat{x})) \cdot \norm{\mat{1}_{N(j)} \mat{1}_{N(j)}^\intercal}_{2}.
\]
Applying Popoviciu's inequality for the variance of bounded random variables,
\[
    \Var(Y_j(\mat{x})) \le \frac{(1 - 0)^2}{4} = \frac{1}{4}.
\]
The spectral norm of a rank-1 matrix is equal to the product of the Euclidean norms of the two vectors that define it, so
\[
    \norm{\mat{1}_{N(j)} \mat{1}_{N(j)}^\intercal}_{2}
    =
    \deg(j).
\]
Putting everything together,
\[
    \norm{\nabla^2 \SmoothC(\mat{x})}_{2}
    \le
    \frac{1}{4 \mu} \sum_{j \in R} w_{j} \deg(j)
    =
    \frac{\degreeR}{4 \mu}.
\]

\paragraph{Approximation.}
The approximation guarantee in \eqref{eqn:lse_approximation} implies that for each $j \in R$,
\[
    \varphi_j(\mat{x}) - \mu \log \deg(j) \le \varphi_{j,\mu}(\mat{x}) \le \varphi_{j}(\mat{x}).
\]
Therefore,
\begin{align}
\label{eqn:smooth_c_pointwise_bounds}
    C(\mat{x}) - \mu \sum_{j \in R} w_j \log \deg(j)
    \le
    \SmoothC(\mat{x})
    \le
    C(\mat{x}).
\end{align}
Let $E$ be the maximum pointwise error:
\[
    E := \mu \sum_{j \in R} w_{j} \log \deg(j).
\]
\emph{Upper bound.}
Since $\SmoothC(\mat{x}) \le C(\mat{x})$ for all $\mat{x}$, it must be true that at $\widetilde{\mat{x}}^*_{\mu}$,
\[
    \widetilde{C}_\mu(\widetilde{\mat{x}}^*_{\mu}) \le C(\widetilde{\mat{x}}^*_{\mu}).
\]
Since $\mat{x}^*$ is the global maximizer of $C(\mat{x})$, we also have $C(\widetilde{\mat{x}}^*_{\mu}) \le C(\mat{x}^*)$.
Combining these together gives
\[
    \widetilde{C}_\mu(\widetilde{\mat{x}}_{\mu}^*) \le C(\mat{x}^*).
\]

\emph{Lower bound.}
Since $\widetilde{\mat{x}}^*_{\mu}$ maximizes $\SmoothC(\mat{x})$,
we have $\SmoothC(\mat{x}^*) \le \SmoothC(\widetilde{\mat{x}}^*_{\mu})$.
From our pointwise bounds in \eqref{eqn:smooth_c_pointwise_bounds}, we also know that at $\mat{x}^*$,
\[
    C(\mat{x}^*) - E \le \SmoothC(\mat{x}^*).
\]
Combining these inequalities gives
\[
    C(\mat{x}^*) - E \le \SmoothC(\widetilde{\mat{x}}^*_{\mu}).
\]

\emph{Bounding $E$.}
Since $\log(x)$ is concave, Jensen's inequality gives
\begin{align*}
    E
    =
    \mu \sum_{j \in R} w_j \log \deg(j)
    \le
    \mu \log\parens*{\sum_{j \in R} w_j \deg(j)}
    =
    \mu \log \degreeR,
\end{align*}
which completes the proof.
\end{proof}

%% file: first_order_methods.tex
\section{Analysis for \Cref{sec:algorithm}}

\subsection{Proof of \Cref{lem:xgreedy}}
\label{app:xgreedy}

\XGreedyLemma*

\begin{proof}
The LP relaxation objective $C(\mat{x})$ is concave on $[0,1]^n$,
so we have the following supergradient inequality for any $\mat{x} \in [0, 1]^n$ and $\mat{g} \in \partial^+ C(\mat{x})$:
\begin{equation*}
\label{eqn:supergradient_inequality_1}
    C(\mat{x}^*) - C(\mat{x}) \le \mat{g}^\intercal\parens{\mat{x}^* - \mat{x}}.
\end{equation*}

We construct a supergradient at each point $\mat{x}_{S_{t}} \in \{0,1\}^n$, where $\mat{x}_{S_{t}}$ is the greedy set at step $t$.
For the piecewise linear concave function $\varphi$ evaluated at an integer $x$, the forward difference $\varphi(x + 1) - \varphi(x)$ is a supergradient,
so define $\mat{g} \in \partial^+ C(\mat{x}_{S_t})$ to be
\begin{align*}
    g(\mat{x}_{S_t})_i
    &\defeq
    \sum_{j \in N(i)} w_j \parens*{\varphi(|S_t \cap N(j)| + 1) - \varphi(|S_t \cap N(j)|)}.
\end{align*}
We can write the supergradient inequality as
\begin{align*}
    C(\mat{x}^*) - C(\mat{x}_{S_t})
    &\le
    \sum_{i=1}^n g(\mat{x}_{S_t})_i \cdot (x_i^* - (\mat{x}_{S_t})_i) \\
    &=
    \sum_{i \in S_t} g(\mat{x}_{S_t})_i \cdot (x_i^* - 1)
    +
    \sum_{i \not\in S_t} g(\mat{x}_{S_t})_i \cdot x_i^*.
\end{align*}

Each $x_i^* - 1 \le 0$ since $\mat{x}^* \in [0, 1]^n$.
Further, $\varphi$ is nondecreasing, so the marginal gains $g(\mat{x}_{S_t})_i$ are nonnegative.
It follows that the first sum is nonpositive, leaving us with
\begin{align*}
    C(\mat{x}^*) - C(\mat{x}_{S_t})
    \le
    \sum_{i \not\in S_t} g(\mat{x}_{S_t})_i \cdot x_i^*.
\end{align*}

Observe that for $i \not\in S_t$, we have $g(\mat{x}_{S_t})_i = C(S_t \cup \{i\}) - C(S_t)$.
It will be convenient to define
\[
    \Delta(i \mid S_t) \defeq C(S_t \cup \{i\}) - C(S_t).
\]

Now, consider the greedy choice at step $t$.
The algorithm selects the element $i_{t+1}^*$ with maximum marginal gain.
Therefore, $\Delta(i \mid S_t) \le \Delta(i_{t+1}^* \mid S_t)$ for all $i \in L \setminus S_t$.
Substituting this upper bound,
\begin{align*}
    C(\mat{x}^*) - C(\mat{x}_{S_t})
    \le
    \Delta(i^*_{t+1} \mid S_t) \sum_{i \not\in S_t} x_i^*.
\end{align*}
Since $\mat{x}^*$ is a feasible solution, we know that $\sum_{i=1}^n x_i^* = k$.
It follows that
\begin{align*}
    C(\mat{x}^*) - C(\mat{x}_{S_t})
    \le
    k \cdot \Delta(i^*_{t+1} \mid S_t).
\end{align*}

Further, observe that $\Delta(i^*_{t+1} \mid S_t) = C(\mat{x}_{S_{t+1}}) - C(\mat{x}_{S_t})$.
Substituting this into the inequality above,
\begin{align*}
    C(\mat{x}^*) - C(\mat{x}_{S_t})
    \le
    k \cdot \bracks*{C(\mat{x}_{S_{t+1}}) - C(\mat{x}_{S_t})}.
\end{align*}
Rearranging terms,
\begin{align*}
\label{eqn:supergradient_inequality_7}
    C(\mat{x}_{S_{t+1}}) \ge \frac{1}{k}C(\mat{x}^*) + \parens*{1 - \frac{1}{k}}C(\mat{x}_{S_t}).
\end{align*}
Applying this recurrence $k$ times starting from $C(\mat{x}_{S_0}) = 0$ gives
\begin{align*}
    C(\mat{x}_{S_{k}})
    \ge
    \frac{1}{k}C(\mat{x}^*) + \parens*{1 - \frac{1}{k}}C(\mat{x}_{S_{k-1}})
    \implies
    C(\mat{x}_{S_{k}})
    &\ge
    \frac{1}{k} \sum_{t=0}^{k-1}
    \parens*{1 - \frac{1}{k}}^t C(\mat{x}^*)
    \\
    &=
    \frac{1}{k} \bracks*{\frac{1 - (1-\sfrac{1}{k})^{k}}{1 - (1 - \sfrac{1}{k})}} C(\mat{x}^*) \\
    &=
    \bracks*{1 - \parens*{1 - \frac{1}{k}}^k} C(\mat{x}^*).
\end{align*}
Using the bound $(1 - 1/k)^k < 1/e$, we arrive at the desired result
\[
    C(\xgreedy) \ge \parens*{1 - \frac{1}{e}} C(\mat{x}^*).
\]

\paragraph{Running time.}
We can compute all marginal gains $\Delta(i \mid S_t)$ for $i \in L$ in $O(m)$ time,
so the total running time is $O(mk)$.
\end{proof}

\subsection{Proof of \Cref{lem:hypersimplex_projection}}
\label{app:hypersimplex_projection}

For any $\mat{x} \in \R^n$ and $\lambda \in \R$, define the \emph{clamped shifted sum} as
\[
    F(\lambda ; \mat{x}) \defeq \sum_{i = 1}^n \min\parens*{\max\parens*{x_i - \lambda, 0}, 1}.
\]
The hypersimplex projection algorithm below is similar to \citet[Algorithm 1]{wang2015projection}, but uses binary search instead of a linear scan.
\citet{ang2021fast} also studied this problem and made steps towards a linear-time algorithm.

\begin{algorithm}[H]
\caption{Projection onto the hypersimplex $\Delta_{n,k} = \{\mat{x} \in \R^n : \sum_{i=1}^n x_i = k, 0 \le x_i \le 1 \}$.}
\label{alg:euclidean_projection_hypersimplex}
\begin{algorithmic}[1] 
\Procedure{\texttt{HypersimplexProjection}}{$\mat{x} \in \R^n$, $k$}
    \State $\texttt{points} \gets \bigcup_{i = 1}^n \set{x_i - 1, x_i}$
    \State Sort $\texttt{points}$ in increasing order
    \State Use binary search to find index $i^*$ such that
    \[
        F(\texttt{points}[i] ; \mat{x}) \ge k > F(\texttt{points}[i + 1] ; \mat{x})
    \]
    \State $\lambda_1 \gets \texttt{points}[i^*]$
    \State $\lambda_2 \gets \texttt{points}[i^* + 1]$
    \State $\lambda^* \gets \lambda_1 + (\lambda_2 - \lambda_1) \frac{F(\lambda_1 ; \mat{x}) - k}{F(\lambda_1 ; \mat{x}) - F(\lambda_2 ; \mat{x})}$ 
    \Comment{Linear interpolation}
    \State Initialize $\mat{y} \gets \mat{0}_n$
    \For{$i = 1, 2, \dots, n$}
        \State $y_i \gets \min(\max(x_i -\lambda^*, 0), 1)$
    \EndFor
    \State \Return $\mat{y}$
\EndProcedure
\end{algorithmic}
\end{algorithm}

\HypersimplexProjection*

\begin{proof}
Projecting $\mathbf{x} \in \mathbb{R}^n$ onto the hypersimplex $\Delta_{n,k}$ is equivalent to finding the point $\mathbf{y} \in \Delta_{n,k}$ that minimizes the Euclidean distance to $\mathbf{x}$,
which is equivalent to the following optimization problem: 
\begin{align*}
    &\hspace{2.6cm} \min_{\mat{y} \in \R^n} \frac{1}{2} \norm{\mat{y} - \mat{x}}_2^2 \\
    &\text{subject to} \quad \sum_{i=1}^n y_i = k, \quad \text{and} \quad 0 \le y_i \le 1 \quad \forall i \in [n]
\end{align*}

We first write the Lagrangian for this convex optimization problem
and apply the KKT conditions:
\[
    \mathcal{L}(\mat{y}, \lambda, \boldsymbol{\alpha}, \boldsymbol{\beta} ; \mat{x})
    \defeq
    \frac{1}{2} \sum_{i=1}^n (y_i - x_i)^2 + \lambda \left(\sum_{i=1}^n y_i - k\right) - \sum_{i=1}^n \alpha_i y_i + \sum_{i=1}^n \beta_i (y_i - 1),
\]
where $\lambda \in \mathbb{R}$ is the Lagrange multiplier for the equality constraint, and $\boldsymbol{\alpha} \ge \mathbf{0}, \boldsymbol{\beta} \ge \mathbf{0}$ are multipliers for the inequality constraints $y_i \ge 0$ and $y_i \le 1$.
Setting the partial derivative with respect to $y_i$ to zero gives the stationarity condition:
\[
\frac{\partial \mathcal{L}}{\partial y_i} = y_i - x_i + \lambda - \alpha_i + \beta_i = 0 \implies y_i = x_i - \lambda + \alpha_i - \beta_i 
\]
Using the complementary slackness conditions ($\alpha_i y_i = 0$ and $\beta_i(y_i - 1) = 0$), we can deduce the value of the optimal $y^*_i$:
\begin{itemize}
    \item If $0 < y^*_i < 1$, then $\alpha_i = 0$ and $\beta_i = 0$, giving $y^*_i = x_i - \lambda$.
    
    \item If $y^*_i = 0$, then $\beta_i = 0$ and $\alpha_i \ge 0$, meaning $0 = x_i - \lambda + \alpha_i \implies x_i - \lambda \le 0$.
    \item If $y^*_i = 1$, then $\alpha_i = 0$ and $\beta_i \ge 0$, meaning $1 = x_i - \lambda - \beta_i \implies x_i - \lambda \ge 1$.
\end{itemize}
Combining these mutually exclusive cases, $y^*_i$ is exactly the clamped shifted value given  $\lambda$:
\[
    y^*_i(\lambda) = \min(\max(x_i - \lambda, 0), 1)
\]

To satisfy the primal constraint $\sum_{i=1}^n y^*_i = k$, we must find the optimal dual variable $\lambda^*$ such that
\[
    F(\lambda; \mathbf{x}) = \sum_{i=1}^n \min(\max(x_i - \lambda, 0), 1) = k.
\]

The clamped shifted sum $F(\lambda; \mat{x})$ is a continuous, decreasing function of $\lambda$.
It is composed of a sum of piecewise linear functions $f_i(\lambda) \defeq \min(\max(x_i - \lambda, 0), 1).$
Each term $f_i(\lambda)$ changes its slope at exactly two threshold points:
$\lambda = x_i-1$ and $\lambda = x_i $.
Therefore, the entire function $F(\lambda ; \mat{x})$ is piecewise linear,
and its complete set of breakpoints where the slope of the sum changes is strictly
contained in the set
\[
    \texttt{points} = \bigcup_{i=1}^{n} \{x_i - 1, x_i\}.
\]
Between any two consecutive breakpoints, no individual term $f_i(\lambda)$ changes its individual state.
It remains identically $0$, identically $1$, or linearly decreasing with slope $-1$.
Thus, $F(\lambda ; \mat{x})$ is linear in any interval
$[\texttt{points}[i], \texttt{points}[i+1]]$.

Since $F(\lambda ; \mat{x})$ is decreasing, we isolate the two adjacent breakpoints
that bound the target value $k$ by sorting the array of $2n$ points and using binary search.
The binary search finds index $i^*$ such that
\[
    F(\texttt{points}[i] ; \mat{x}) \ge k > F(\texttt{points}[i + 1] ; \mat{x}).
\]
Note that at the endpoints we have $F(\min(\texttt{points}) ; \mat{x}) = n$ and $F(\max(\texttt{points}) ; \mat{x}) = 0$.

Let $\lambda_1 = \texttt{points}[i^*]$
and $\lambda_2 = \texttt{points}[i^* + 1]$.
Since $F(\lambda ; \mat{x})$ is linear on the interval $[\lambda_1, \lambda_{2}]$,
the exact root $\lambda^*$ can be found through linear interpolation:
\[
    \frac{\lambda^* - \lambda_1}{\lambda_2 - \lambda_1} = \frac{F(\lambda_1 ; \mat{x}) - k}{F(\lambda_1 ; \mat{x}) - F(\lambda_2 ; \mat{x})}.
\]
Solving for $\lambda^*$ gives the linear interpolation on Line~7 of the algorithm:
\[
    \lambda^*
    =
    \lambda_1 + (\lambda_2 - \lambda_1) \frac{F(\lambda_1 ; \mat{x}) - k}{F(\lambda_1 ; \mat{x}) - F(\lambda_2 ; \mat{x})}.
\]
Once $\lambda^*$ is computed, the projection step $y_{i} = \min(\max(x_i - \lambda^*, 0), 1)$ guarantees that all constraints are satisfied. This establishes the correctness of \Cref{alg:euclidean_projection_hypersimplex}.

\paragraph{Running time.}
We first sort \texttt{points} in $O(n\log n)$ time.
Then we run binary search for $O(\log n)$ steps.
The evaluation of $F(\lambda ; \mat{x})$ in each step takes $O(n)$ time,
so it takes $O(n\log n)$ time to find $i^*$.
It then takes $O(n)$ time to compute $\lambda^*$ and $\mat{y} \in \Delta_{n,k}$,
so the overall running time is $O(n \log n)$.
\end{proof}

\subsection{Proof of \Cref{lem:convergence_rate}}
\label{app:accelerated_projected_gradient_ascent}

\ConvergenceRate*

\begin{proof}
Lines 7--16 of \Cref{alg:accelerated_relax_and_round} are the \emph{fast iterative shrinkage-thresholding algorithm} (FISTA) for
\[
    \max_{\mat{x} \in \Delta_{n,k}} \SmoothC(\mat{x}).
\]
\citet[Theorem 4.4]{beck2009fast} showed that the convergence rate of this first-order accelerated gradient method is
\[
    \SmoothC(\widetilde{\mat{x}}_\mu^*) - \SmoothC(\mat{x}^{(t)}) \le \frac{2 \norm{\mat{x}^{(0)} - \widetilde{\mat{x}}_\mu^*}_{2}^2}{\eta (t + 1)^2},
\]
where $\eta \le 1/L$ is a constant step size and $L$ is a Lipschitz constant of $\nabla \SmoothC(\mat{x})$.
Using \Cref{lem:smooth_c_guarantees},
\[
    L = \frac{\degreeR}{4 \mu}.
\]
Since $\mat{x}^{(0)} = \xgreedy$ and $\widetilde{\mat{x}}_\mu^* \in \Delta_{n,k}$,
\begin{align*}
    \norm{\mat{x}^{(0)} - \widetilde{\mat{x}}_\mu^*}_{2}^2
    &=
    \norm{\xgreedy}_{2}^2 -2\xgreedy^\intercal \widetilde{\mat{x}}_\mu^* + \norm{\widetilde{\mat{x}}_\mu^*}_{2}^2 \\
    &\le
    \norm{\xgreedy}_{2}^2 + \norm{\widetilde{\mat{x}}_\mu^*}_{2}^2 \\
    &\le
    k + k \\
    &=
    2k.
\end{align*}
We set $\eta = 1/L$ in Line~8, so for any $t \ge 1$,
\[
    \SmoothC(\widetilde{\mat{x}}_\mu^*) - \SmoothC(\mat{x}^{(t)}) \le \frac{k \degreeR }{\mu (t + 1)^2}.
\]
Combining this with the approximation error in \Cref{lem:smooth_c_guarantees} gives
\begin{align*}
    C(\mat{x}^*) - \widetilde{C}_{\mu}(\mat{x}^{(t)})
    &=
    \bracks*{C(\mat{x}^*) - \widetilde{C}_{\mu}(\widetilde{\mat{x}}_{\mu}^*)}
    + \bracks*{\widetilde{C}_{\mu}(\widetilde{\mat{x}}^*_{\mu}) - \widetilde{C}_{\mu}(\mat{x}^{(t)})}, \\
    &\le
    \mu \log \degreeR + \frac{k \degreeR}{\mu (t + 1)^2},
\end{align*}
which completes the proof.
\end{proof}

\subsection{Proof of \Cref{thm:relative_error_algorithm}}
\label{app:optimizing_mu}

We start with a general relative error guarantee in terms of an underestimate factor $\gamma$ for the optimal smoothing parameter.

\begin{lemma}
\label{lem:optimizing_mu}
Let $\varepsilon > 0$ and $\mu = \gamma \cdot \frac{\varepsilon C(\mat{x}^*)}{2 \log \degreeR}$, for $0 < \gamma \le 1$.
Then, for any $t \ge \frac{2}{\varepsilon C(\mat{x}^*)} \sqrt{\frac{k \degreeR \log \degreeR}{\gamma(2 - \gamma)}}$,
\[
    \widetilde{C}_{\mu}(\mat{x}^{(t)}) \ge \parens*{1 - \varepsilon} \cdot C(\mat{x}^*).
\]
\end{lemma}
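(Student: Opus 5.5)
The proof starts from the convergence bound of \Cref{lem:convergence_rate}. Write $\mu^* \defeq \varepsilon C(\mat{x}^*)/(2\log \degreeR)$, so that $\mu = \gamma \mu^*$, and abbreviate $A \defeq k\degreeR$ and $B \defeq \log\degreeR$. For every $t \ge 1$, \eqref{eqn:main_error_bound} gives
\[
    C(\mat{x}^*) - \SmoothC(\mat{x}^{(t)}) \le \frac{A}{\mu (t+1)^2} + \mu B .
\]
It therefore suffices to show that the right-hand side is at most $\varepsilon C(\mat{x}^*)$ once $t$ exceeds the stated threshold. The step size hypothesis $\eta \le 4\mu/\degreeR$ is inherited, since the algorithm sets $\eta = 4\mu/\degreeR$.

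\emph{Step 1: the smoothing term.} Substituting $\mu = \gamma\mu^*$ gives $\mu B = \gamma \varepsilon C(\mat{x}^*)/2$. The remaining budget is then
\[
    \varepsilon C(\mat{x}^*) - \frac{\gamma \varepsilon C(\mat{x}^*)}{2} = \frac{(2-\gamma)\,\varepsilon C(\mat{x}^*)}{2},
\]
which is positive because $\gamma \le 1$.

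\emph{Step 2: the convergence term.} We need
\[
    \frac{A}{\gamma \mu^* (t+1)^2} \le \frac{(2-\gamma)\,\varepsilon C(\mat{x}^*)}{2}.
\]
Solving for $(t+1)^2$ gives
\[
    (t+1)^2 \ge \frac{2A}{\gamma(2-\gamma)\,\mu^*\,\varepsilon C(\mat{x}^*)}.
\]
Plugging in $\mu^* = \varepsilon C(\mat{x}^*)/(2B)$, this becomes
\[
    (t+1)^2 \ge \frac{4AB}{\gamma(2-\gamma)\,\varepsilon^2 C(\mat{x}^*)^2}.
\]
Taking square roots, the condition is
\[
    t+1 \ge \frac{2}{\varepsilon C(\mat{x}^*)}\sqrt{\frac{k\degreeR\log\degreeR}{\gamma(2-\gamma)}}.
\]
The hypothesis on $t$ implies this, since $t+1 > t$. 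This closes the argument.

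\emph{Edge cases.} The calculation is routine; the only care needed is in degenerate cases. If $\log \degreeR = 0$, then $\mu$ is undefined, and the smoothing error vanishes anyway since every $\deg(j) = 1$; we would note that $\degreeR > 1$ is assumed implicitly. We also need $t \ge 1$ for \Cref{lem:convergence_rate} to apply, which holds whenever the threshold is positive, and otherwise we simply take $t \ge 1$. Finally, $C(\mat{x}^*) > 0$ holds because $\sum_j w_j = 1$, every $\deg(j) \ge 1$, and $k \ge 1$.
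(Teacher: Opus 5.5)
Your proposal is correct and follows essentially the same route as the paper. You substitute $\mu = \gamma\mu^*$ into the bound of \Cref{lem:convergence_rate}, let the smoothing term absorb $\gamma\varepsilon C(\mat{x}^*)/2$, isolate $(t+1)^2$, and take square roots to recover exactly the stated threshold; the paper differs only cosmetically, in motivating $\mu^*$ as the maximizer of $\mu(\varepsilon C(\mat{x}^*) - \mu\log\degreeR)$, and your edge-case remarks (such as needing $\degreeR > 1$ for $\mu$ to be defined) make explicit what the paper leaves implicit.
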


\begin{proof}
To prove that $\SmoothC(\mat{x}^{(t)}) \ge (1 - \varepsilon) \cdot C(\mat{x}^*)$,
it is sufficient by \Cref{lem:convergence_rate} to have
\[
    \frac{k \degreeR}{\mu (t + 1)^2} + \mu \log \degreeR
    \le
    \varepsilon C(\mat{x}^*).
\]
Rearranging the inequality to isolate $(t+1)^2$,
\begin{align*}
    \frac{k \degreeR}{\mu (t + 1)^2} + \mu \log \degreeR \le \varepsilon C(\mat{x}^*)
    &\iff
    \frac{k \degreeR}{\mu (t + 1)^2} \le \varepsilon C(\mat{x}^*) - \mu \log \degreeR \\
    &\iff
    \frac{k \degreeR}{\mu (\varepsilon C(\mat{x}^*) - \mu \log \degreeR) } \le (t + 1)^2.
\end{align*}

To minimize the required number of steps, we compute the optimal $\mu^*$
that maximizes the denominator of the left-hand side:
\begin{equation}
\label{eqn:mu_star}
    \mu^* = \frac{\varepsilon C(\mat{x}^*)}{2 \log \degreeR}.
\end{equation}
However, the true value of $C(\mat{x}^*)$ is unknown, so we use an underestimate
$\mu = \gamma \mu^*$, where $0 < \gamma \le 1$ is a multiplier representing the error in our guess.
Substituting back into the inequality,
\[
    \frac{k \degreeR}{\gamma \mu^* (2 \mu^* \log \degreeR - \gamma \mu^* \log \degreeR)}
    =
    \frac{k \degreeR}{{\mu^*}^2 \log \degreeR \cdot \gamma (2 - \gamma)}
    \le (t + 1)^2.
\]
Observe that $\gamma (2 - \gamma)$ is maximized at $\gamma = 1$, but any $\gamma \in (0, 2)$ gives a valid bound.
It follows that
\begin{align*}
\label{eqn:T_lower_bound}
    \sqrt{\frac{k \degreeR}{{\mu^*}^2 \log \degreeR \cdot \gamma (2 - \gamma)}}
    &=
    \frac{2 \log \degreeR}{\varepsilon C(\mat{x}^*)} \sqrt{\frac{k \degreeR}{\log \degreeR \cdot \gamma (2 - \gamma)}} \\
    &=
    \frac{2}{\varepsilon C(\mat{x}^*)} \sqrt{\frac{k \degreeR \log \degreeR}{\gamma(2 - \gamma)}}
    \le
    t + 1.
\end{align*}
The final inequality holds because of our assumption on $t$ in the hypothesis.
\end{proof}

\begin{lemma}[Lower bounding $C(\mat{x}^*)$]
\label{lem:bounding_fractional_opt}
For any input, we have
\begin{align*}
    C(\mat{x}^*) \ge 0.43 \cdot \frac{\sqrt{k \degreeR}}{n}.
\end{align*}
\end{lemma}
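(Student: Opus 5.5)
The plan is to lower bound $C(\mat{x}^*)$ by the value of one explicit feasible point: the uniform point $\bar{\mat{x}} = (k/n)\mat{1} \in \Delta_{n,k}$. Then I compare that value to $\sqrt{k\degreeR}/n$ through elementary inequalities. I use only the normalization $\varphi(0)=0$, $\varphi(1)=1$, $\sum_j w_j = 1$, $\deg(j)\ge 1$, and $1 \le k \le n$. I expect to actually get a constant of $1$, which is stronger than the claimed $0.43$.

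\emph{Step 1 (reduce to $\min(1,\cdot)$).} Since the piecewise linear extension of $\varphi$ is concave and nondecreasing with $\varphi(0)=0$ and $\varphi(1)=1$, we have $\varphi(y)\ge y$ for $y\in[0,1]$ by concavity and $\varphi(y)\ge 1$ for $y\ge 1$ by monotonicity. Hence $\varphi(y)\ge \min(1,y)$ for all $y \ge 0$. At $\bar{\mat{x}}$, node $j$ has $y_j = a_j \defeq k\deg(j)/n$. Therefore
\[
C(\mat{x}^*) \ge C(\bar{\mat{x}}) \ge U \defeq \sum_{j\in R} w_j \min(1,a_j).
\]

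\emph{Step 2 (two facts about $U$).} First, $\deg(j)\ge 1$ and $k\le n$ give $a_j\ge k/n$ and $k/n \le 1$, so every term has $\min(1,a_j)\ge k/n$. Thus $U\ge k/n$. Second, $\deg(j)\le n$ gives $a_j\le k$, hence $a_j \le k\min(1,a_j)$ using $k\ge 1$. Summing,
\[
A \defeq \sum_j w_j a_j = \frac{k\degreeR}{n} \le kU.
\]

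\emph{Step 3 (combine).} The target is $\sqrt{k\degreeR}/n = \sqrt{A/n}$. By Step 2, $A/n \le kU/n \le U\cdot U$, where the last inequality uses $U\ge k/n$. Hence $\sqrt{k\degreeR}/n \le U \le C(\mat{x}^*)$, which implies the stated bound with $0.43$.

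The only place I expect friction is Step 1: checking that the LP objective $C(\mat{x})$ at a fractional point really equals $\sum_j w_j\varphi(y_j)$ for the piecewise linear extension, and that this extension dominates $\min(1,y)$. Both follow directly from the definitions in \Cref{sec:preliminaries}. I would also note explicitly the degenerate case $k=0$, where the bound is trivial, and the case $k > n$, which is excluded by feasibility.
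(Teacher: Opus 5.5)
Your proof is correct, and it proves more than the statement asks: constant $1$ instead of $0.43$. You and the paper start the same way. Both evaluate $C$ at the uniform point $\frac{k}{n}\mat{1}_n$ and use $\varphi(y)\ge\min(1,y)$, so your $U$ is exactly the paper's $\alpha+\frac{k}{n}\degreeR^{\textnormal{low}}$.

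The routes diverge in how the bounds are combined. The paper splits $R$ at degree $n/k$ and writes $\sqrt{k\degreeR}\le\sqrt{k}\,(\sqrt{\alpha n}+\sqrt{\degreeR^{\textnormal{low}}})$. It then bounds the two pieces separately, using a min-versus-geometric-mean step and $k\degreeR^{\textnormal{low}}\ge 1-\alpha$. Finally it numerically minimizes $\max\{\alpha,\,1/(1+1/\sqrt{1-\alpha})\}$ over $\alpha$, losing constant factors at several steps.

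You instead use two global bounds, $U\ge k/n$ and $U\ge A/k=\degreeR/n$, and take their geometric mean. The second bound is the paper's own inequality $\degreeR\le\alpha n+\degreeR^{\textnormal{low}}$ combined with $k\ge 1$. No case split is needed, and your two facts even give the stronger $C(\mat{x}^*)\ge\max\{k,\degreeR\}/n$.

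Your constant is optimal. For $k=1$ and a perfect matching with uniform weights, $C(\mat{x}^*)=1/n=\sqrt{k\degreeR}/n$. Feeding constant $1$ into \Cref{thm:relative_error_algorithm} would lower the iteration bound from $5.1\,n\sqrt{\log n}/\varepsilon$ to about $2.2\,n\sqrt{\log n}/\varepsilon$. The paper's high/low-degree decomposition buys nothing extra here; it is a looser combination of the same ingredients.
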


\begin{proof}
We start with a simple lower bound
\begin{align*}
    C(\mat{x}^*)
    &\ge
    C\parens*{\frac{k}{n} \cdot \mat{1}_n} \\
    &=
    \sum_{j \in R} w_{j} \varphi\parens*{\frac{k}{n} \cdot \deg(j)} \\
    &\ge \frac{k}{n} \sum_{j \in R} w_j \\
    &= \frac{k}{n}.
\end{align*}

To improve this bound, partition the right nodes based on whether their degree exceeds $n/k$.
Define the weighted fraction of \emph{high-degree nodes} as
\[
    \alpha \defeq \sum_{j \in R, \deg(j) > n/k} w_j.
\]
Define the effective degree of the \emph{low-degree nodes} as
\[
    \degreeR^{\textnormal{low}} \defeq \sum_{j \in R, \deg(j) \le n/k} w_j \cdot \deg(j).
\]
By allocating each left node $x_i \gets k / n$, we can lower bound the optimal value by
\begin{align*}
    C(\mat{x}^*) &\ge \alpha + \sum_{j \in R, \deg(j) \le n/k} w_{j} \varphi\parens*{\frac{k}{n} \cdot \deg(j)} \\ 
    &= \alpha + \sum_{j \in R, \deg(j) \le n/k} w_{j} \cdot \frac{k}{n} \cdot \deg(j) \\
    &= \alpha + \frac{k}{n} \cdot \degreeR^{\textnormal{low}}.
\end{align*}

Since both $k$ and $\degreeR$ are at most $n$, we have the lower bound
\begin{equation*}
    C(\mat{x}^*) \ge \alpha \cdot \frac{n}{n} \ge \alpha \cdot \frac{\sqrt{k \degreeR}}{n}. 
\end{equation*}

To prove the claim, we give a second lower bound in terms of $\alpha$ and take the maximum of the two lower bounds.

Observe that we can upper bound $\degreeR$ by
\[
    \degreeR \le \alpha n + \degreeR^{\textnormal{low}}.
\]
Now we have all the ingredients to prove the second bound:
\begin{align*}
    \frac{\sqrt{k \degreeR}}{C(\mat{x}^*)}
    &\le \frac{\sqrt{k}}{C(\mat{x}^*)} \parens*{\sqrt{\alpha n + \degreeR^{\textnormal{low}}}} \\
    &\le \sqrt{k} \cdot \parens*{\frac{\sqrt{\alpha n}}{C(\mat{x}^*)} + \frac{\sqrt{\degreeR^{\textnormal{low}}}}{C(\mat{x}^*)}} \\
    &\le \sqrt{k} \cdot \parens*{\frac{\sqrt{\alpha n}}{\max\{\alpha, k/n\}} + \frac{\sqrt{\degreeR^{\textnormal{low}}}}{\degreeR^{\textnormal{low}} \cdot k/n}}     \\
    &= \min \set*{\sqrt{\frac{kn}{\alpha}}, n \sqrt{\frac{\alpha n}{k}} } + \frac{n}{\sqrt{k \cdot \degreeR^{\textnormal{low}}}} \\
    &\le \sqrt{ \sqrt{\frac{kn}{\alpha}} \cdot n \sqrt{\frac{\alpha n}{k}} } + \frac{n}{\sqrt{k \cdot \degreeR^{\textnormal{low}}}} \\
    &= n + \frac{n}{\sqrt{k \cdot \degreeR^{\textnormal{low}}}} \\
    &\le n + \frac{n}{\sqrt{1-\alpha}},
\end{align*}

where the last inequality holds because $k \ge 1$ and each right node $j \in R$ has $\deg(j) \ge 1$, hence
\[
    \degreeR^{\textnormal{low}} \ge 1 - \alpha
    \implies
    k\cdot \degreeR^{\textnormal{low}} \ge 1 - \alpha.
\]

Putting the two bounds together, we have
\[
    C(\mat{x}^*) \ge \max\set*{\alpha, \frac{1}{1 + \frac{1}{\sqrt{1-\alpha}}}} \cdot \frac{\sqrt{k \degreeR}}{n}
\]
For $\alpha \ge 0$, the term $\tau = \max\set*{\alpha, \frac{1}{1 + \frac{1}{\sqrt{1-\alpha}}}}$ is minimized around $\alpha \approx 0.4302$.
To prove the bound, observe that if $\alpha < 0.43$, then the second term in $\tau$ is at least $1/(1 + \frac{1}{\sqrt{1-0.43}}) > 0.43$.
If $\alpha \ge 0.43$, then we clearly have $\tau \ge 0.43$.
Therefore, we have the final lower bound
\[
    C(\mat{x}^*) \ge 0.43 \cdot \frac{\sqrt{k \degreeR}}{n},
\]
as desired.
\end{proof}

\OptimizingMu*

\begin{proof}
Building on \Cref{lem:optimizing_mu}, we use $\xgreedy$ in our underestimate of $\mu^*$:
\[
    \mu \gets \frac{\varepsilon C(\xgreedy)}{2 \log \degreeR}.
\]
It follows from our definition $\mu = \gamma \mu^*$ and \Cref{lem:xgreedy} that
\[
    \gamma = \frac{C(\xgreedy)}{C(\mat{x}^*)} \ge 1 - 1/e.
\]
Using the upper bound $\frac{\sqrt{k \degreeR}}{C(\mat{x}^*)} \le \frac{n}{0.43}$ in \Cref{lem:bounding_fractional_opt},
we bound the number of required iterations by
\begin{align*}
    \frac{2}{\varepsilon C(\mat{x}^*)} \sqrt{\frac{k \degreeR \log \degreeR}{\gamma(2 - \gamma)}}
    &\le
    \frac{2}{\varepsilon C(\mat{x}^*)} \sqrt{\frac{k \degreeR \log \degreeR}{(1 - \sfrac{1}{e})(1 + \sfrac{1}{e})}} \\
    &\le
    \frac{2 n}{0.43 \varepsilon} \sqrt{\frac{\log \degreeR}{(1 - \sfrac{1}{e})(1 + \sfrac{1}{e})}} \\
    &\le
    \frac{2 n}{0.43\varepsilon} \sqrt{\frac{\log n}{(1 - \sfrac{1}{e})(1 + \sfrac{1}{e}) }} \\
    &\le \frac{5.1 n \sqrt{\log n}}{\varepsilon}. \qedhere
\end{align*}
\end{proof}

%% file: app_convex_combination.tex
\subsection{Rounding in the hypersimplex}
\label{app:rounding}

\subsubsection{Carathéodory decomposition}
\label{app:convex_combination}

A key insight for our rounding strategy is Carathéodory's theorem, which states that any point~$\mat{x}$ in an $n$-dimensional convex hull $\textnormal{Conv}(P) \subseteq \R^n$ can be written as a convex combination of at most $n+1$ vertices of $P$.
In our case, we want to decompose a fractional solution $\mat{x}^{(T)} \in \Delta_{n,k}$,
where $\Delta_{n,k}$ is the hypersimplex (the convex hull of all binary vectors in $\{0, 1\}^n$ with exactly $k$ ones).
Even though the hypersimplex lives in $\R^n$, the constraint $\sum_{i=1}^n x_i = k$
restricts it to an $(n-1)$-dimensional affine subspace.
Thus, the effective dimension is $n-1$ and Carathéodory's theorem
guarantees that any point $\mat{x} \in \Delta_{n,k}$ can be written as a convex combination of at most $n$ vertices.

Computing a Carathéodory decomposition can be expensive, often requiring linear programming,
which is often too slow for large-scale coverage problems.
To solve this, we use the recent combinatorial breakthrough in \citet[Algorithm 1]{karalias2025geometric},
as it is a highly efficient way to decompose $\mat{x} \in \Delta_{n,k}$ into at most $n$ vertices.

\begin{theorem}[{\citet[Theorem 4.3]{karalias2025geometric}}]
\label{thm:decomposition}
For $\mat{x} \in \Delta_{n,k}$, \Cref{alg:caratheodory-representation} terminates in  $r\leq n$ iterations and returns probability-vector pairs $\{(\alpha_t, \mat{v}_t)\}_{t=1}^{r}$ such that $\mat{x} = \sum_{t=1}^{r} \alpha_t \mat{v}_t$, where $\mat{v}_t$ is a binary vector with exactly $k$ ones.
This algorithm can be implemented to run in $O(n k \log n)$ time.
\end{theorem}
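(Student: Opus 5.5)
The plan is to analyze the natural ``peel off the top-$k$ vertex'' procedure of \Cref{alg:caratheodory-representation}. We maintain a residual vector $\mat{z}$ and a residual mass $\lambda$, starting from $\mat{z} = \mat{x}$ and $\lambda = 1$. Throughout, we keep the invariant that $\mat{z}/\lambda \in \Delta_{n,k}$, i.e., $0 \le z_i \le \lambda$ and $\sum_i z_i = \lambda k$.

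In each iteration, let $T$ be the index set of the $k$ largest entries of $\mat{z}$. Ties are broken so that every coordinate with $z_i = \lambda$ is placed in $T$; there are at most $k$ such coordinates, since $\sum_i z_i = \lambda k$. Set $\mat{v} = \mat{1}_T$ and take the largest feasible step
\[
  \alpha = \min\set*{\min_{i \in T} z_i,\; \lambda - \max_{i \notin T} z_i}.
\]
Then update $\mat{z} \gets \mat{z} - \alpha \mat{v}$ and $\lambda \gets \lambda - \alpha$, and record the pair $(\alpha, \mat{v})$. The procedure stops when $\lambda = 0$.

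\textbf{Correctness.} The first step is to check that the invariant is preserved. The sum drops by exactly $\alpha k$. Coordinates in $T$ stay in $[0, \lambda - \alpha]$ because $\alpha \le \min_T z_i$ and $z_i \le \lambda$. Coordinates outside $T$ satisfy $z_i \le \lambda - \alpha$ by the second term in the minimum. Also $\alpha > 0$ unless the process has already ended: if $\lambda > 0$, then every entry of $T$ is positive, since at least $k$ entries are positive when the sum is $\lambda k$ and each is at most $\lambda$. Likewise $\max_{i \notin T} z_i < \lambda$ by the tie-breaking rule, unless $T$ already captures all the mass. Since the accumulated weights telescope, $\mat{x} = \sum_t \alpha_t \mat{v}_t$ with $\sum_t \alpha_t = 1$ once $\lambda$ reaches $0$.

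\textbf{Iteration bound $r \le n$.} This is the main step, and I would handle it with a potential argument. Call coordinate $i$ \emph{settled} if $z_i = 0$ or $z_i = \lambda$. Both kinds of settled coordinates stay settled:
\begin{itemize}
  \item a zero coordinate is never in $T$ while $\lambda > 0$, so it stays zero;
  \item a coordinate with $z_i = \lambda$ is in $T$ by the tie-breaking rule, so it decreases together with $\lambda$ and stays equal to $\lambda$.
\end{itemize}
By the choice of $\alpha$, each iteration settles at least one new coordinate: either some $i \in T$ hits $0$, or some $i \notin T$ becomes equal to $\lambda - \alpha$. Moreover, once $n - 1$ coordinates are settled, the sum constraint forces the remaining one to be settled as well, because the settled ones contribute a multiple of $\lambda$. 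At that point exactly $k$ coordinates equal $\lambda$, and the next iteration gives $\alpha = \lambda$ and terminates. Counting settled coordinates therefore gives at most $n$ iterations. I expect the delicate part to be exactly this bookkeeping: the initial already-settled coordinates, and the final step, where one iteration may settle several coordinates at once. I would resolve it by arguing that the number of unsettled coordinates strictly decreases and is never exactly $1$.

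\textbf{Running time.} Store the entries of $\mat{z}$ in a balanced search tree keyed by value. Each iteration extracts the top $k$ entries in $O(k \log n)$ time, reads the $(k+1)$-st largest to get $\max_{i \notin T} z_i$, computes $\alpha$ in $O(k)$, and reinserts the $k$ updated entries in $O(k \log n)$. Writing out $\mat{v}_t$ as an index list costs $O(k)$. With $r \le n$ iterations, the total is $O(nk \log n)$, plus $O(n \log n)$ for the initial build.
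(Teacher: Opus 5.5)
Your proof is correct and follows the same peeling argument that the paper sketches after \Cref{alg:caratheodory-representation}. Your unnormalized pair $(\mat{z},\lambda)$ is just the paper's residual under $\mat{x}^{(t)} = \mat{z}/\lambda$, $w_t = \lambda$, $\beta_t = \alpha/\lambda$, and your tie-breaking rule is automatic, since for $\lambda > 0$ any set of $k$ largest entries contains every entry equal to $\lambda$. Your observations that settled coordinates stay settled and that the number of unsettled coordinates is never exactly $1$ make rigorous the count that the paper's sketch leaves implicit, including the final step, which settles nothing new. Keeping the residual unnormalized also makes the $O(k \log n)$ per-iteration update immediate, because only the $k$ entries indexed by $T$ change.
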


\begin{algorithm}
\caption{Convex combination representation of $\mat{x} \in \Delta_{n,k}$.}
\label{alg:caratheodory-representation}
\begin{algorithmic}[1] 
\Procedure{\texttt{ConvexCombination}}{vector $\mat{x} \in \Delta_{n,k}$, integer $k$}

    \State $\texttt{decomposition} \gets \varnothing$
    \State $w_1 \gets 1$  \hfill \Comment{Remaining probability mass}
    \State $\mat{x}^{(1)} \gets \mat{x}$
    \For{ $t = 1, 2, \dots$}
        \State Find indices $I_t$ of the $k$ largest components of $\mat{x}^{(t)}$ \label{extract}
        \State $\mat{v}^{(t)} \gets \mat{1}_{I_{t}}$
        \State Compute the max weight we can give to $\mat{v}^{(t)}$ without violating rescaling constraints:
        \label{max_beta}
        \[
            \beta_t \gets \min\parens*{\min_{i \in I_t} x^{(t)}_i, \min_{j \not\in I_t} (1 - x^{(t)}_j)}
        \]
        \State Add $(\beta_t w_t, \mat{v}^{(t)})$ to \texttt{decomposition}
        \If{$\beta_t = 1$}
            \State \textbf{break}
        \EndIf
        \State Compute residual vector
        $
            \mat{x}^{(t+1)} \gets \frac{\mat{x}^{(t)} - \beta_t \mat{v}^{(t)}}{1 - \beta_t}
        $ \label{rescale}
        \State $w_{t+1} \gets w_t (1 - \beta_t)$
    \EndFor
    \State \Return \texttt{decomposition}
\EndProcedure
\end{algorithmic}
\end{algorithm}

\paragraph{Main idea.} 
The algorithm decomposes $\mat{x}$ by iteratively peeling off a vertex $\mat{v}^{(t)}$ supported on the~$k$ largest elements of $\mat{x}$ (Line~\ref{extract}), assigning it the maximum possible weight, and solving the problem for the ``residual'' vector $\mat{x}^{(t+1)}$.
To apply the decomposition step recursively, the algorithm maintains a key loop invariant: the residual vector $\mat{x}^{(t)}$ is rescaled so that its $L_1$ norm remains exactly $k$ (Line~\ref{rescale}).
The maximum possible weight $\beta_t$ is specifically chosen to guarantee that at least one coordinate of the residual vector becomes exactly $0$ or $1$ (Line~\ref{max_beta}). Consequently, the scaling factor $\frac{1}{1-\beta_t}$ ensures that the elements of the residual vector stay safely within the interval $[0, 1]$ (Line~\ref{rescale}). 
 Since at least one index of $\mat{x}$ is eliminated or fixed at each iteration, the algorithm runs for at most $n$ iterations.
 Finally, the probability allocated to the vector $\mat{v}^{(t)}$ at iteration $t$ is $\alpha_t =  \beta_t \omega_t =  \beta_t \prod_{t' = 1}^{t-1} (1-\beta_{t'})$.

\input{app_swap_round}

%% file: app_swap_round.tex
\subsubsection{Randomized swap rounding}
\label{app:swap_round}

Now we give a tailored version of the \SwapRound algorithm  
\citep{chekuri2010dependent} for points $\mat{x} \in \Delta_{n, k}$.

\begin{algorithm}[H]
\caption{Swap rounding for points in the hypersimplex $\mat{x} \in \Delta_{n,k}$.}
\label{alg:swap_round}
\begin{algorithmic}[1] 

\Procedure{\texttt{MergeBases}}{$(\alpha_1, \mat{v}^{(1)}), (\alpha_2, \mat{v}^{(2)})$}
    \State $\gamma \gets \alpha_1 + \alpha_2$
    \State $p \gets \alpha_1 / \gamma$
    
    \While{$\mat{v}^{(1)} \neq \mat{v}^{(2)}$}
        \State Pick an element $i \in \mat{v}^{(1)} \setminus \mat{v}^{(2)}$
        \State Pick an element $j \in \mat{v}^{(2)} \setminus \mat{v}^{(1)}$ 
        \State With probability $p$: \Comment{Perform a randomized swap}
        \State \indent $\mat{v}^{(2)} \gets \mat{v}^{(2)} + \{i\} - \{j\}$ 
        \State Otherwise (with probability $1-p$):
        \State \indent $\mat{v}^{(1)} \gets \mat{v}^{(1)} - \{i\} + \{j\}$ 
    \EndWhile
    
    \State \Return $(\gamma, \mat{v^{(1)}})$ 
\EndProcedure

\Procedure{\SwapRound}{vector $\mat{x}\in \Delta_{n,k}$}
\State $k \gets \sum_{i=1}^n x_i$
\State $\{(\alpha_i, \mat{v}^{(i)})\}_{i=1}^r \gets \texttt{ConvexCombination}(\mat{x}, k)$ \label{decomposition_step} \Comment{$\mat{x} = \sum_{i=1}^r \alpha_i \mat{v}^{(i)}$}
\State Initialize $\mat{v} \gets \mat{v}^{(1)}, \alpha \gets \alpha_1$
\For{$i = 2, 3, \dots, r$}
    \State $(\alpha, \mat{v}) \gets \texttt{MergeBases}{((\alpha, \mat{v}), (\alpha_i, \mat{v}^{(i)}))}$
\EndFor
\State Let $S \gets \{i \in [n] : v_i = 1\}$
\State \Return $S$

\EndProcedure
\end{algorithmic}
\end{algorithm}

\HypersimplexSwapRound*

\begin{proof}
The first step of \SwapRound calls the vertex-peeling algorithm of
\citet{karalias2025geometric} to get the
convex combination $\mat{x} = \sum_{i=1}^r \alpha_i \mat{v}^{(i)}$,
where $r \le n$ by \Cref{thm:decomposition}.
Since $C(S)$ is submodular and $F(\mat{x})$ is its multilinear extension,
applying \cite[Theorem 2.1]{chekuri2010dependent} gives
\[
    \E[C(\SwapRound(\mat{x}))] \geq F(\mat{x}).
\]
In each iteration of \MergeBases, the swap operation for $\mat{v}^{(1)}$ and $\mat{v}^{(2)}$ removes exactly one element (changing from 1 to 0) and adds exactly one element (changing from 0 to 1).
Therefore, the total number of ones remains constant at $k$ throughout every iteration of the while loop
and the output of \MergeBases is a valid vector with $k$ ones and $n-k$ zeros. 

\paragraph{Running time.}
Finding the Carathéodory decomposition takes $O(nk\log{n})$ time by \Cref{thm:decomposition}.
There are at most $n-1$ calls to \MergeBases,
and in each call to \MergeBases the Hamming distance between the input vectors is at most $2k$.
There is a simple perfect matching between the elements in the
differences $\mat{v}^{(1)} \setminus \mat{v}^{(2)}$
and $\mat{v}^{(2)} \setminus \mat{v}^{(1)}$
that we can use for deciding which $i$ and $j$ to swap in each step,
which we compute in $O(k \log k)$ time using balanced binary search trees.
Therefore, \MergeBases takes $O(k\log{k})$ time,
which means \SwapRound runs in $O(n k \log n)$ time.
\end{proof}

%% file: app_rounding_ratio.tex
\section{Analysis for \Cref{sec:approximation_ratios}}
\label{app:approximation_ratios}

In this section, we evaluate the Poisson concavity ratio $\alpha_\varphi$ for different families of reward functions,
i.e., the approximation ratio of the relax-and-round algorithms:
\[
  \alpha_{\varphi}
  = 
  \inf_{x \in \Z_{\ge 1}}
  \frac{\E[\varphi(\Pois(x))]}{\varphi(x)}.
\]

\subsection{Properties of Poisson random variables}

\begin{lemma}
\label{lem:expected_min}
If $X \sim \Pois(x)$ and $c \in \Z_{\ge 0}$, then
$
    \E[\min(X, c)] = x \Pr(X \le c - 2) + c \Pr(X \ge c).
$
\end{lemma}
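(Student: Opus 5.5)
Since $\min(X,c)$ equals $X$ when $X \le c$ and equals $c$ when $X \ge c$, we split the expectation as
\[
    \E[\min(X, c)] = \E\bracks*{X \cdot \mathbf{1}[X \le c-1]} + c \Pr(X \ge c).
\]
The term $c \Pr(X \ge c)$ already has the required form. It therefore remains to show that $\E[X \mathbf{1}[X \le c-1]] = x \Pr(X \le c-2)$.

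For the truncated mean we use the Poisson size-bias identity $\ell \Pr(X = \ell) = x \Pr(X = \ell - 1)$ for $\ell \ge 1$. This follows from $\ell \cdot e^{-x} x^\ell/\ell! = x \cdot e^{-x} x^{\ell-1}/(\ell-1)!$. Summing it over the relevant range gives
\[
    \sum_{\ell=0}^{c-1} \ell \Pr(X=\ell) = \sum_{\ell=1}^{c-1} x \Pr(X = \ell-1) = x \sum_{\ell'=0}^{c-2} \Pr(X=\ell') = x \Pr(X \le c-2),
\]
where we reindex with $\ell' = \ell - 1$. Combining this with the first display proves the lemma.

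The only point needing care is the boundary cases. If $c = 0$, both sides are $0$: the left side is $\E[\min(X,0)] = 0$, the term $x \Pr(X \le -2)$ vanishes, and the term $c \Pr(X \ge 0) = 0 \cdot 1 = 0$. If $c = 1$, the range $\ell \le c-1 = 0$ contributes only $\ell = 0$, which gives $0$, consistent with $\Pr(X \le -1) = 0$. There is no substantive obstacle. The identity is a direct computation once the truncated mean is handled with the size-bias trick.
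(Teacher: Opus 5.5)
Your proposal is correct and matches the paper's proof: you split $\E[\min(X,c)]$ into the truncated mean over $X \le c-1$ plus $c \Pr(X \ge c)$, then rewrite the truncated mean as $x \Pr(X \le c-2)$ via $k\, x^k/k! = x\cdot x^{k-1}/(k-1)!$ and a reindexing. Your boundary checks for $c \in \{0,1\}$ are a small addition; the paper leaves these to empty sums.
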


\begin{proof}
Observe that
\begin{align*}
    \E[\min(X, c)]
    &=
    \sum_{k=0}^\infty \min(k, c) \frac{x^k e^{-x}}{k!} \\ 
    &=
    \sum_{k=1}^{c - 1} k \frac{x^k e^{-x}}{k!} + \sum_{k=c}^\infty c \frac{x^k e^{-x}}{k!} \\
    &=
    x \sum_{k=1}^{c - 1} \frac{x^{k-1} e^{-x}}{(k-1)!} + c \Pr(X \ge c) \\
    &=
    x \sum_{k=0}^{c - 2} \frac{x^{k} e^{-x}}{k!} + c \Pr(X \ge c) \\
    &=
    x \Pr(X \le c - 2) + c \Pr(X \ge c). \qedhere
\end{align*}
\end{proof}

\begin{lemma}
\label{lem:poisson_reward_derivative}
Let $X \sim \Pois(x)$.
For any nonnegative concave function $\varphi : \Z_{\ge 0} \rightarrow \R$,
\[
    \frac{d}{dx} \E[\varphi(X)] = \E[\varphi(X + 1) - \varphi(X)].
\]
\end{lemma}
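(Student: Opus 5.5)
Write $\E[\varphi(X)]$ as a power series in $x$ and differentiate it term by term. Set
\[
    G(x) \defeq \E[\varphi(X)] = \sum_{k=0}^\infty \varphi(k) \frac{x^k e^{-x}}{k!}.
\]
Before differentiating, I will check that this series and its term-by-term derivative converge locally uniformly in $x \ge 0$. Because $\varphi$ is concave on $\Z_{\ge 0}$, its forward differences are nonincreasing. This gives the linear growth bound $|\varphi(k)| \le |\varphi(0)| + k\,|\varphi(1)-\varphi(0)| + \dots$; more simply, $|\varphi(k)| \le A + Bk$ for constants $A, B \ge 0$. Since $\varphi$ is nonnegative and concave, it is in fact bounded above by $\varphi(0) + k(\varphi(1)-\varphi(0))$ and below by $0$.

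For $x$ in a compact interval $[0, M]$, each term $\varphi(k) x^k e^{-x}/k!$ is then dominated by $(A+Bk)M^k/k!$, and the differentiated terms are dominated by a comparable summable sequence. The standard theorem on term-by-term differentiation of uniformly convergent series (equivalently, dominated convergence) therefore applies.

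With differentiation justified, the product rule gives
\[
    \frac{d}{dx}\frac{x^k e^{-x}}{k!} = \frac{x^{k-1}e^{-x}}{(k-1)!}\mathbf{1}[k \ge 1] - \frac{x^k e^{-x}}{k!}.
\]
Summing against $\varphi(k)$, the second part contributes $-\E[\varphi(X)]$. In the first part I reindex with $k \mapsto k+1$, which gives $\sum_{k \ge 0} \varphi(k+1) x^k e^{-x}/k! = \E[\varphi(X+1)]$. Hence $G'(x) = \E[\varphi(X+1)] - \E[\varphi(X)]$. Both expectations are finite by the linear growth bound, so they combine into $\E[\varphi(X+1) - \varphi(X)]$.

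The only real obstacle is the interchange of derivative and infinite sum. The linear growth bound coming from concavity and nonnegativity handles it, because the Poisson tail decays factorially. At the endpoint $x = 0$ I will interpret the derivative as one-sided, and the same computation goes through there.
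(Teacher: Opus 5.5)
Your proof is correct and follows the same route as the paper's: expand $\E[\varphi(X)]$ as the Poisson series, differentiate term by term with the product rule, and reindex the first sum to obtain $\E[\varphi(X+1)] - \E[\varphi(X)]$. The one addition is that you justify the term-by-term differentiation via the linear growth bound from concavity and nonnegativity, and handle the $k=0$ term explicitly with an indicator; the paper's proof leaves both points implicit.
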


\begin{proof}
Observe that
\begin{align*}
    \frac{d}{dx} \E[\varphi(X)]
    &=
    \frac{d}{dx} \sum_{k=0}^\infty \varphi(k) \Pr(X = k) \\
    &=
    \sum_{k=0}^\infty \varphi(k) \frac{d}{dx}  \parens*{\frac{e^{-x} x^k}{k!}} \\
    &=
    \sum_{k=0}^\infty \varphi(k) \parens*{\frac{k e^{-x} x^{k-1}}{k!} - \frac{e^{-x} x^k}{k!}} \\
    &=
    \sum_{k=0}^\infty \varphi(k) \frac{e^{-x} x^{k-1}}{(k-1)!} - \sum_{k=0}^\infty \varphi(k) \frac{e^{-x} x^k}{k!} \\
    &=
    \sum_{k=0}^\infty \varphi(k + 1) \frac{e^{-x} x^{k}}{k!} - \sum_{k=0}^\infty \varphi(k) \frac{e^{-x} x^k}{k!} \\
    &=
    \E[\varphi(X + 1)] - \E[\varphi(X)],
\end{align*}
which completes the proof.
\end{proof}

\begin{corollary}
\label{cor:expected_min_derivative}
If $X \sim \Pois(x)$, then
\[
    \frac{d}{dx} \E[\min(X, c)] = \Pr(X \le c - 1).
\]
\end{corollary}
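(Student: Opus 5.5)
The plan is to apply \Cref{lem:poisson_reward_derivative} directly with $\varphi(y) = \min(y, c)$. This function is nonnegative, concave and nondecreasing on $\Z_{\ge 0}$, so the lemma gives
\[
    \frac{d}{dx} \E[\min(X, c)] = \E\bracks*{\min(X+1, c) - \min(X, c)}.
\]
The remaining work is a pointwise identity for the integrand.

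For integer $X \ge 0$ we would split into two cases:
\begin{itemize}
    \item if $X \le c-1$, then $X+1 \le c$, so the difference is $(X+1) - X = 1$;
    \item if $X \ge c$, both minima equal $c$, so the difference is $0$.
\end{itemize}
Hence $\min(X+1,c) - \min(X,c) = \mathbf{1}[X \le c-1]$, and taking expectations gives $\Pr(X \le c-1)$.

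As a sanity check, one could instead differentiate the closed form in \Cref{lem:expected_min}. Using $\frac{d}{dx}\Pr(X \le c-2) = -\Pr(X = c-2)$ and $\frac{d}{dx}\Pr(X \ge c) = \Pr(X = c-1)$, and noting $x\Pr(X=c-2) = (c-1)\Pr(X=c-1)$, the terms combine to
\[
    \Pr(X \le c-2) + \Pr(X = c-1) = \Pr(X \le c-1).
\]
This confirms the result.

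The only mild obstacle is justifying the term-by-term differentiation inside \Cref{lem:poisson_reward_derivative}. Since $\varphi$ here is bounded by $c$, the series of derivatives converges uniformly on compact sets, so this step is routine. The case $c = 0$ is trivial: both sides are $0$, because $\Pr(X \le -1) = 0$.
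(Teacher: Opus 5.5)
Your proposal is correct and follows the paper's proof: apply \Cref{lem:poisson_reward_derivative} with $\varphi(y)=\min(y,c)$, then note that $\min(X+1,c)-\min(X,c)$ is the indicator of the event $X \le c-1$. Your cross-check via differentiating the closed form in \Cref{lem:expected_min}, and your remark justifying term-by-term differentiation, are both valid additions that the paper omits.
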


\begin{proof}
Applying \Cref{lem:poisson_reward_derivative},
\begin{align*}
    \frac{d}{dx} \E[\min(X, c)]
    &=
    \E[\min(X + 1, c) - \min(X, c)] \\
    &=
    \E[\mat{1}_{X \le c - 1}] \\
    &=
    \Pr(X \le c - 1),
\end{align*}
as desired.
\end{proof}



\subsection{Proof of \Cref{thm:log_ratio}}
\label{app:log_ratio}

\LogRatio*

\begin{proof}
We use the formula for $\alpha_{\varphi}$ in \eqref{eqn:poisson_concavity_ratio}
to show that
$\inf_{x \in \Z_{\ge 0}} \alpha_{\varphi}(x) = \alpha_{\varphi}(1)$.
First, we numerically verify that
\begin{align*}
    \alpha_{\varphi}(1) &\approx 0.8272 \\
    \alpha_{\varphi}(2) &\approx 0.8902 \\
    \alpha_{\varphi}(3) &\approx 0.9240 \\
    \alpha_{\varphi}(4) &\approx 0.9440
\end{align*}
We proceed by showing that $\alpha_{\varphi}(x) \ge 0.83$ for $x \ge 4$.

Let $X \sim \Pois(x)$.
Using the inequality $\log(z) \ge 1 - 1/z$, which holds for all $z > 0$, we have
\begin{align*}
    \log\parens*{\frac{1 + X}{1 + x}} \ge 1 - \frac{1+x}{1+X}.
\end{align*}
Applying logarithm rules,
\begin{align*}
    \log(1 + X) \ge \log(1 + x) + 1 - \frac{1 + x}{1 + X}.
\end{align*}
By the linearity of expectation,
\begin{align}
\label{eqn:log_useful_lower_bound}
    \E\bracks*{\log(1 + X)} \ge \log(1 + x) + 1 - (1 + x) \E\bracks*{\frac{1}{1 + X}}.
\end{align}

The expected value on the right-hand side of \eqref{eqn:log_useful_lower_bound}
has an exact closed-form solution:
\begin{align*}
    \E\bracks*{\frac{1}{1 + X}}
    &=
    \sum_{k = 0}^\infty \frac{1}{1 + k} \parens*{\frac{x^k e^{-x}}{k!}} \\
    &=
    \frac{1}{x} \sum_{k = 0}^\infty \parens*{\frac{x^{k+1} e^{-x}}{(k+1)!}} \\
    &=
    \frac{1}{x} \parens*{1 - \Pr(X = 0)} \\
    &=
    \frac{1 - e^{-x}}{x}.
\end{align*}
Putting everything together,
\begin{align*}
    \alpha_{\varphi}(x)
    &=
    \frac{\E\bracks*{\log(1 + X)}}{\log(1 + x)} \\
    &\ge
    1 + \frac{1}{\log(1 + x)} - \frac{1 + x}{x} \cdot \frac{1 - e^{-x}}{\log(1 + x)} \\
    &\ge
    1 + \frac{1}{\log(1 + x)} - \frac{1 + x}{x} \cdot \frac{1}{\log(1 + x)} \\
    &=
    1 - \frac{1}{x \log(1 + x)}.
\end{align*}
The function $f(x) = 1 - \frac{1}{x \log(1 + x)}$
is increasing for $x > 0$.
Therefore, if $x \ge 4$, then
\[
    \alpha_{\varphi}(x)
    \ge
    1 - \frac{1}{x \log(1 + x)}
    \ge
    f(4)
    \approx
    0.8446
    \ge 0.83,
\]
which proves the claim.
\end{proof}

\subsection{Proof of \Cref{thm:piecewise_linear_ratio}}

\subsubsection{Warm-up: Maximum coverage problem}
Let $X \sim \Pois(x)$ and
consider the coverage function $\varphi(x) = \min(x, 1)$.
Then,
\begin{align*}
    \E[\varphi(X)]
    &=
    \sum_{k=0}^\infty \varphi(k) \frac{x^k e^{-x}}{k!} \\
    &=
    e^{-x} \parens*{0 + \frac{x}{1!} + \frac{x^2}{2!} + \frac{x^3}{3!} + \dots} \\
    &=
    e^{-x} \parens*{e^{x} - 1} \\
    &=
    1 - e^{-x}.
\end{align*}
It follows that
\begin{align*}
    \alpha_{\varphi}
    &=
    \inf_{x \in \Z_{\ge 1}} \frac{\E[\varphi(X)]}{\varphi(x)}
    =
    \inf_{x \in \Z_{\ge 1}} \frac{1 - e^{-x}}{1}
    =
    1 - \frac{1}{e}
    \approx
    0.632.
\end{align*}

\subsubsection{Maximum multi-coverage}
\label{subsec:c_coverage}

For any positive integer $c$,
define the \emph{$c$-multi-coverage} function $\varphi(x) = \min(x, c)$.
By \Cref{lem:expected_min}, our goal is to compute
\begin{align}
\label{eqn:alpha_c_coverage}
    \alpha_{\varphi}
    &=
    \inf_{x \in \Z_{\ge 1}} \frac{\E[\varphi(X)]}{\varphi(x)}
    =
    \inf_{x \in \Z_{\ge 1}} \frac{x \Pr(X \le c - 2) + c \Pr(X \ge c)}{\min(x, c)}.
\end{align}

\citet[Theorem 1.1]{barman2022tight} prove the Poisson concavity ratio for $c$-multi-coverage functions,
but we give a different proof and use some of the new inequalities for our proof of \Cref{thm:piecewise_linear_ratio}.

\begin{lemma}[{\citet[Theorem 1.1]{barman2022tight}}]
\label{lem:c_coverage_approximation_ratio}
Let $c$ be a positive integer.
If $\varphi(x) = \min(x, c)$, then
\[
    \alpha_{\varphi}
    =
    1 - \frac{c^c e^{-c}}{c!}.
\]
\end{lemma}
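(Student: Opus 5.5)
We need to show that $g(x) := \E[\min(X,c)]/\min(x,c)$, with $X \sim \Pois(x)$, has infimum over $x \in \Z_{\ge 1}$ equal to $1 - c^c e^{-c}/c!$, attained at $x = c$. The plan is to split into the two regimes $x \le c$ and $x \ge c$, using \Cref{lem:expected_min} and \Cref{cor:expected_min_derivative}.

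First, at $x = c$: by the computation already carried out in \Cref{app:poisson_concavity_ratio_lower_bound},
\[
\E[\min(X,c)] = c\bigl(1 - \Pr(X = c-1)\bigr),
\]
since $\Pr(X=c-1) = c^{c-1}e^{-c}/(c-1)! = c^c e^{-c}/c! = \Pr(X=c)$ when $x = c$. Hence $g(c) = 1 - c^c e^{-c}/c!$, and this value is an upper bound on the infimum.

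For $x \ge c$ the denominator is the constant $c$. The numerator $\E[\min(X,c)]$ is nondecreasing in $x$, because its derivative is $\Pr(X \le c-1) \ge 0$ by \Cref{cor:expected_min_derivative}. So $g(x) \ge g(c)$ on this range.

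For $1 \le x \le c$ the ratio is $h(x) := \E[\min(X,c)]/x$, and I will show $h$ is nonincreasing on $(0,c]$. The derivative of $h$ has the sign of
\[
N(x) := x\Pr(X \le c-1) - \E[\min(X,c)].
\]
By \Cref{lem:expected_min},
\[
N(x) = x\Pr(X \le c-1) - x\Pr(X \le c-2) - c\Pr(X \ge c) = x\Pr(X = c-1) - c\Pr(X \ge c).
\]
Next rewrite $x\Pr(X=c-1) = c\Pr(X=c)$, which holds for every $x$ since $x \cdot x^{c-1}/(c-1)! = c \cdot x^c/c!$. Then
\[
N(x) = c\Pr(X=c) - c\Pr(X \ge c) = -c\Pr(X \ge c+1) \le 0.
\]
So $h$ is nonincreasing and $g(x) = h(x) \ge h(c) = g(c)$ for $x \le c$. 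An equivalent route is to note that $x \mapsto \E[\min(X,c)]$ is concave (its derivative $\Pr(X\le c-1)$ is decreasing in $x$) and vanishes at $0$, so $\E[\min(X,c)]/x$ is nonincreasing.

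Combining the two regimes, the infimum is attained at $x = c$, which gives the claimed value. The only step needing care is the sign of $N(x)$ in the regime $x \le c$. The identity $x\Pr(X=c-1) = c\Pr(X=c)$ makes it routine, and the concavity argument is a fallback if that algebra is awkward. The remaining steps are direct applications of the preceding lemmas.
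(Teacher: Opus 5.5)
Your proposal is correct and follows essentially the same route as the paper. You evaluate at $x=c$, show $\E[\min(X,c)]/x$ is decreasing on $(0,c]$ via the identity $x\Pr(X=c-1)=c\Pr(X=c)$, and show $\E[\min(X,c)]/c$ is increasing for $x\ge c$ via \Cref{cor:expected_min_derivative}; your explicit form $N(x)=-c\Pr(X\ge c+1)$ and the concavity remark are only slightly sharper phrasings of the paper's sign argument.
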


\begin{proof}
We start by evaluating $\alpha_{\varphi}(x)$ at $x = c$:
\begin{align*}
    \frac{c \Pr(X \le c - 2) + c \Pr(X \ge c)}{c}
    &=
    1 - \Pr(X = c - 1)
    =
    1 - \frac{c^{c-1} e^{-c}}{(c-1)!}
    =
    1 - \frac{c^{c} e^{-c}}{c!}.
\end{align*}
Next, we analyze two cases to show that this is the minimum value of $\alpha_{\varphi}(x)$ for $x > 0$.

\begin{figure}[t]
    \centering
    \includegraphics[width=0.5\textwidth]{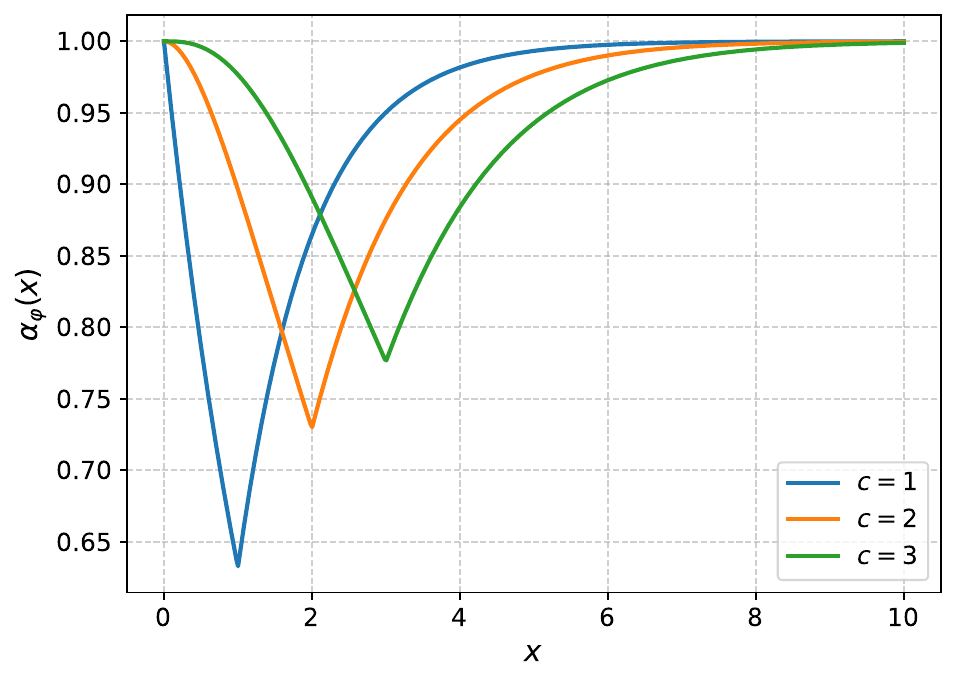}
    \caption{Plots of the Poisson concavity ratio $\alpha_{\varphi}(x) = \E_{X \sim \Pois(x)}[\varphi(X)] / \varphi(x)$ for $c$-multi-coverage functions $\varphi(x) = \min(x, c)$.}
    \label{fig:c_coverage_plots}
\end{figure}

\paragraph{Case 1 ($x \le c$):}
Let
\begin{equation}
\label{eqn:multi_coverage_f_def}
    f(x) = \frac{\E[\min(X, c)]}{x}.
\end{equation}
We prove that $f(x)$ is decreasing for $x \in (0, c]$.
Applying the quotient rule and \Cref{cor:expected_min_derivative},
\begin{align*}
    f'(x)
    &=
    \frac{x \frac{d}{dx}\E[\min(X, c)] - \E[\min(X, c)]}{x^2}
    < 0
    \iff
    x \Pr(X \le c - 1) < \E[\min(X, c)].
\end{align*}
Using the identity in \eqref{eqn:alpha_c_coverage}, it is equivalent to show that
\begin{align*}
    x \Pr(X \le c - 1) < \E[\min(X, c)]
    &\iff
    x \Pr(X \le c - 1) < x \Pr(X \le c - 2) + c \Pr(X \ge c) \\
    &\iff
    x \Pr(X = c - 1) < c \Pr(X \ge c).
\end{align*}
Observe that
\[
    x \Pr(X = c - 1)
    =
    x \cdot \frac{x^{c-1} e^{-x}}{(c-1)!}
    =
    c \cdot \frac{x^c e^{-x}}{c!}
    =
    c \Pr(X = c).
\]
It follows that
\begin{align*}
    x \Pr(X = c - 1)
    =
    c \Pr(X = c)
    <
    c \Pr(X \ge c),
\end{align*}
as desired.

\paragraph{Case 2 $(x \ge c)$:}
We prove that $\alpha_{\varphi}(x)$ is increasing for $x \ge c$.
Let
\begin{align*}
    g(x)
    &=
    \frac{\E[\min(X, c)]}{c}.
\end{align*}
Applying \Cref{cor:expected_min_derivative},
\begin{align*}
    g'(x)
    =
    \frac{1}{c} \parens*{\frac{d}{dx} \E[\min(X, c)]}
    =
    \frac{1}{c} \cdot \Pr(X \le c - 1).
\end{align*}
Therefore, $g'(x) > 0$ for all $x > 0$,
which means $g(x)$ is increasing and that $g(c)$ is the minimum value on the interval $x \ge c$.
This completes the case analysis and hence the proof.
\end{proof}

\begin{remark}
We give the values of $\alpha_{\varphi}$ for $c$-multi-coverage functions below,
which improve as $c$ increases since the problem becomes more continuous and computationally easier.
\begin{table}[H]
\caption{Poisson concavity ratios for $\varphi(x) = \min(x, c)$.}
\label{table:approximation_ratios_for_c_multi_coverage}
\centering
\vspace{-0.1cm}
\begin{tabular}{cc}
    \toprule
    $c$ & $\alpha_{\varphi}$ \\
    \midrule
    1 & 0.6321 \\
    2 & 0.7293 \\
    3 & 0.7759 \\
    4 & 0.8046 \\
    5 & 0.8245 \\
    6 & 0.8393 \\
    7 & 0.8509 \\
    8 & 0.8604 \\
    9 & 0.8682 \\
    10 & 0.8748 \\
    \bottomrule
\end{tabular}
\end{table}
\end{remark}

\vspace{-0.6cm}
\subsubsection{Piecewise linear rewards}
For any $0 \le \beta \le 1$, consider the piecewise linear reward function
\[
    \varphi(x; c, \beta)
    =
    \begin{cases}
        x & \text{if $x \le c$}, \\
        c + \beta (x - c) & \text{if $x > c$}.
    \end{cases}
\]
We can write $\varphi(x; c, \beta)$ as the convex combination
\begin{align*}
    \varphi(x; c, \beta)
    &=
    \beta x + (1 - \beta) \min(x, c).
\end{align*}

Let $X \sim \Pois(x)$.
It follows from our analysis of $c$-multi-coverage functions in \Cref{subsec:c_coverage} that
\vspace{-0.2cm}
\begin{align*}
    \E[\varphi(X)]
    &=
    \E[\beta X + (1 - \beta) \min(X, c)] \\
    &=
    \beta \E[X] + (1 - \beta) \E[\min(X, c)] \\
    &=
    \beta x + (1 - \beta) \bracks*{x \Pr(X \le c - 2) + c \Pr(X \ge c)}.
\end{align*}

To compute the Poisson concavity ratio, we need to analyze
\begin{align}
\label{eqn:piecewise_linear_alpha_x}
    \alpha_{\varphi}(x; c, \beta)
    =
    \frac{\beta x + (1 - \beta) \bracks*{x \Pr(X \le c - 2) + c \Pr(X \ge c)}}{\beta x + (1 - \beta) \min(x, c)}.
\end{align}

\PiecewiseLinear*

\begin{figure}[t]
    \centering
    \includegraphics[width=0.5\textwidth]{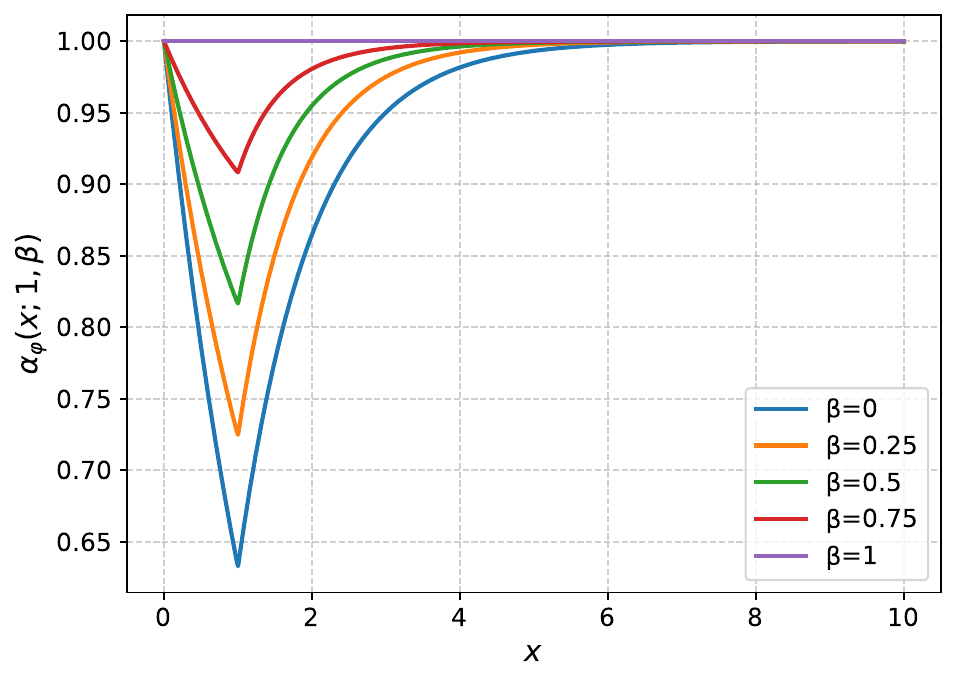}
    \caption{Plots of $\alpha_{\varphi}(x; c, \beta) = \E_{X \sim \Pois(x)}[\varphi(X; c, \beta)] / \varphi(x; c, \beta)$
    for $c = 1$ and different $\beta \in [0, 1]$.}
\end{figure}

\begin{proof}
Our proof is analogous to the proof of \Cref{lem:c_coverage_approximation_ratio}.
First, we evaluate $\alpha_{\varphi}(x; c, \beta)$ at $x = c$:
\begin{align*}
    \alpha_{\varphi}(c; c, \beta)
    &=
    \beta + (1 - \beta) \bracks*{\Pr(X \le c - 2) + \Pr(X \ge c)} \\
    &=
    \beta + (1 - \beta) \bracks*{1 - \Pr(X = c - 1)} \\
    &=
    \beta + (1 - \beta) \bracks*{1 - \frac{c^{c-1} e^{-c}}{(c-1)!}} \\
    &=
    1 - (1 - \beta) \frac{c^{c} e^{-c}}{c!}.
\end{align*}
We proceed by cases to show that the minimum value of $\alpha_{\varphi}(x; c, \beta)$ occurs at $x = c$.

\paragraph{Case 1 $(x \le c)$:}
If $x \le c$, then
\begin{align*}
    \alpha_{\varphi}(x; c, \beta)
    &=
    \frac{\beta x + (1 - \beta) \bracks*{x \Pr(X \le c - 2) + c \Pr(X \ge c)}}{x} \\
    &=
    \beta + (1 - \beta) f(x),
\end{align*}
where $f(x)$ is defined in \eqref{eqn:multi_coverage_f_def}.
We showed that $f(x)$ decreases for $x \le c$
in the proof of \Cref{lem:c_coverage_approximation_ratio},
so then $\alpha_{\varphi}(x; c, \beta)$ is decreasing for $x \in (0, c]$.

\paragraph{Case 2 $(x \ge c)$:}
If $x \ge c$, we need to analyze
\begin{align*}
    h(x)
    &=
    \frac{\beta x + (1 - \beta) \E[\min(X, c)]}{ \beta x + (1 - \beta) c}.
\end{align*}
Using the quotient rule and \Cref{cor:expected_min_derivative},
\begin{align*}
    h'(x)
    &=
    \frac{\bracks*{\beta + (1 - \beta) \Pr(X \le c - 1)} \cdot (\beta x + (1 - \beta) c) - \bracks*{\beta x + (1 - \beta) \E[\min(X, c)]} \cdot \beta}{(\beta x + (1 - \beta) c)^2}.
\end{align*}
Denoting the numerator by $N(x)$, we have
\begin{align*}
    N(x)
    &=
    \bracks*{\beta + (1 - \beta) \Pr(X \le c - 1)} \cdot (\beta x + (1 - \beta) c) - \bracks*{\beta x + (1 - \beta) \E[\min(X, c)]} \cdot \beta \\
    &=
    \beta^2 x + \beta(1-\beta)c + (1 - \beta) \Pr(X \le c - 1) (\beta x + (1 - \beta) c) - \beta^2 x - \beta (1 - \beta) \E[\min(X, c)] \\
    &=
    (1 - \beta) \cdot \bracks*{\beta c + \Pr(X \le c - 1) (\beta x + (1 - \beta) c) - \beta \E[\min(X, c)]} \\
    &=
    (1 - \beta) \cdot \bracks*{\beta c + \Pr(X \le c - 1) (\beta x + (1 - \beta) c) - \beta (x \Pr(X \le c-2) + c \Pr(X \ge c))} \\
    &=
    (1 - \beta) \cdot \bracks*{\beta c + \Pr(X \le c - 1) (\beta x + (1 - \beta) c) - \beta (x \Pr(X \le c-2) + c (1 - \Pr(X \le c - 1)))} \\
    &=
    (1 - \beta) \cdot \bracks*{\Pr(X \le c - 1) (\beta x + (1 - \beta) c) - \beta (x \Pr(X \le c-2) - c\Pr(X \le c - 1))} \\
    &=
    (1 - \beta) \cdot \bracks*{\Pr(X \le c - 1) (\beta x + (1 - \beta) c) - \beta (x \Pr(X \le c-1) - x \Pr(X = c - 1) - c\Pr(X \le c - 1))} \\
    &=
    (1 - \beta) \cdot \bracks*{\Pr(X \le c - 1) (\beta x + (1 - \beta) c - \beta x + \beta c) + \beta x \Pr(X = c - 1)} \\
    &=
    (1 - \beta) \cdot \bracks*{\Pr(X \le c - 1) c + \beta x \Pr(X = c - 1)} \\
    &=
    (1 - \beta) \cdot c \cdot \bracks*{\Pr(X \le c - 1) + \beta \Pr(X = c)}.
\end{align*}
Therefore,
\[
h'(x)
    =
    \frac{(1 - \beta) c \bracks*{\Pr(X \le c - 1) + \beta \Pr(X = c)}}{(\beta x + (1-\beta)c)^2} \ge 0,
\]
which implies that $\alpha_{\varphi}(x ; c, \beta)$ is increasing for $x \ge c$.
This finishes the case analysis and completes the proof.
\end{proof}

\subsection{Proof of \Cref{thm:isoelastic_ratio}}

Let $\varphi(x) = x^{1-\gamma}$ for $\gamma \in (0, 1)$.
\citet[Proposition B.3]{barman2022tight} show that
\begin{align}
\label{eqn:isoelastic_series}
    \alpha_{\varphi}
    &=
    \frac{1}{e} \sum_{k=1}^\infty \frac{k^{1-\gamma}}{k!}.
\end{align}
We build on their result to give a closed-form formula for the Poisson concavity ratio.

\Isoelastic*

\begin{proof}
We manipulate the integral above to match the infinite series in \eqref{eqn:isoelastic_series}.
Start with the substitution $u = - \log x$.
Then $x = e^{-u}$ and $dx = - e^{-u} du$.
It follows that
\begin{align*}
    \int_{0}^{1} \frac{e^x}{(-\log x)^{1-\gamma}} dx
    &=
    \int_{\infty}^{0} \frac{e^{e^{-u}}}{u^{1-\gamma}} \parens*{-e^{-u}} du \\
    &=
    \int_{0}^{\infty} \frac{e^{e^{-u}} e^{-u} }{u^{1-\gamma}} du.
\end{align*}
Using the Taylor series expansion for $e^x$, we have
\begin{align*}
    e^{e^{-u}}
    &=
    \sum_{k=0}^\infty \frac{\parens*{e^{-u}}^k}{k!}
    =
    \sum_{k=0}^\infty \frac{e^{-ku}}{k!}.
\end{align*}
Substituting this series back into the integral,
\begin{align*}
    \int_{0}^{\infty} \frac{1}{u^{1-\gamma}} \parens*{\sum_{k=0}^\infty \frac{e^{-ku}}{k!}} e^{-u} du
    &=
    \int_{0}^{\infty} \sum_{k=0}^\infty \frac{e^{-(k + 1)u}}{k! u^{1-\gamma}} du.
\end{align*}
Swapping the sum and integral and letting $n = k + 1$, we have
\begin{align*}
    \sum_{k=0}^\infty \frac{1}{k!} \int_{0}^{\infty} u^{-(1-\gamma)} e^{-(k+1)u} du
    &=
    \sum_{n=1}^\infty \frac{1}{(n-1)!} \int_{0}^{\infty} u^{-(1-\gamma)} e^{-nu} du.
\end{align*}

To evaluate the integral, let $t = nu$, which implies $u = t/n$ and $du = dt / n$.
Therefore,
\begin{align*}
    \int_{0}^\infty \parens*{\frac{t}{n}}^{-(1-\gamma)} e^{-t} \frac{dt}{n}
    &=
    \frac{1}{n^{1 - (1-\gamma)}} \int_{0}^\infty t^{-(1-\gamma)} e^{-t} dt \\
    &=
    \frac{1}{n^{\gamma}} \cdot \Gamma(\gamma).
\end{align*}
Putting everything together,
\begin{align*}
    \int_{0}^{1} \frac{e^x}{(-\log x)^{1-\gamma}} dx
    &=
    \sum_{n=1}^\infty \frac{1}{(n-1)!} \cdot \frac{\Gamma(\gamma)}{n^{\gamma}} \\
    &=
    \Gamma(\gamma) \sum_{n=1}^\infty \frac{n^{1-\gamma}}{n!}.
\end{align*}
It follows that the full integral expression is
\begin{align*}
    \frac{1}{e \Gamma(\gamma)} \int_{0}^{1} \frac{e^x}{(-\log x)^{1-\gamma}} dx
    &=
    \frac{1}{e \Gamma(\gamma)} \cdot \Gamma(\gamma) \sum_{n=1}^\infty \frac{n^{1-\gamma}}{n!} \\
    &=
    \frac{1}{e} \sum_{n=1}^\infty \frac{n^{1-\gamma}}{n!},
\end{align*}
which completes the proof.
\end{proof}

%% file: app_greedy.tex
\section{Greedy algorithm upper bound}
\label{app:hard_greedy_instances}

\HardInstances*

\begin{proof}
Consider the following family of $c$-multi-coverage instances, i.e., with reward function $\varphi(x) = \min(x, c)$.

\paragraph{Setup.}
We set integer parameters $\ell = c -1$ and $q = c+1$. We set the cardinality constraint $k = q \cdot c$.  
This setting ensures that $k = q \cdot c = (\ell+2)(\ell+1)$ is greater than $(\ell+1)^2$. This is an inequality we will need later in the construction of the instance. 

We first explain the structure of the optimum solution.
Parameter $\ell$ represents the marginal gain of every left node in the optimum solution (they are disjoint and not overlapping).

\paragraph{Construction.}
 We start the construction of the instance based on \Cref{fig:MatrixGreedyInstance}.
 It is a matrix with~$\ell q$ rows and $c q$ columns consisting of $q$ rectangular submatrices stacked in a diagonal form. 
 
 Each right node in $R$ is represented by a row of this matrix, so $|R| = \ell q$. 
 Each column of this matrix represents a left node in the optimum solution. We note that the number of columns, $c q$, is precisely equal to $k$. So each left node in the optimum solution (each column) covers exactly $\ell$ right nodes. The reward function ensures that each of these $k$ left nodes provide a disjoint value of $\ell$ without any overlap. So the optimum solution covers each right node up to the saturation point of $c$ providing a total value of $\OPT = \ell q c$. 

\begin{figure}[h]
    \centering
    \begin{tikzpicture}
        
        \draw[thick, fill=blue!10] (0.5, 4.5) rectangle (1.5, 5.5) node[midway] {$M_1$};
        \node at (0.3, 5) {$\ell$}; 
        \node at (1.05, 5.7) {$c$}; 
        
        \draw[thick, fill=blue!10] (1.5, 3.5) rectangle (2.5, 4.5) node[midway] {$M_2$};
        \node at (1.3, 4) {$\ell$};
        \node at (2.05, 4.7) {$c$};
        
        \node at (2.5, 3) {$\ddots$};
        
        \draw[thick, fill=blue!10] (3.5, 1) rectangle (4.5, 2) node[midway] {$M_q$};
        \node at (3.3, 1.5) {$\ell$};
        \node at (4.05, 2.2) {$c$};
        
        \draw[decorate,decoration={brace,amplitude=10pt}] (4.6, 5.5) -- (4.6, 1) node [midway,xshift=1cm] {$q$ boxes};
        
    \end{tikzpicture}
    \caption{Construction of the optimal solution (submatrices $M_1, M_2, \dots M_q$).}
    \label{fig:MatrixGreedyInstance}
\end{figure}

\paragraph{Greedy sets.}
We also add the following left nodes (sets) and show that the greedy algorithm will choose them instead of the optimum left nodes. 

From each submatrix, we pick an arbitrary right node, e.g., $x_1 \in M_1, x_2 \in M_2, \dots, x_q \in M_q$.
The greedy left nodes, $y_i \in L$, are defined as follows:
\begin{align*}
    y_1 &= \{ x_1, \dots, x_{\ell+1} \} \\
    y_2 &= \{ x_{\ell+2}, \dots, x_{2\ell+2} \} \\
        &\hspace{1.35cm}\vdots
\end{align*}

Each choice $y_i$ covers exactly $\ell+1$ new nodes. We continue this process until each node is covered exactly $c$ times. Each blue arc in \Cref{fig:CyclicCoverage} represents a left node we add and we expect the greedy algorithm to select.
We will explain shortly that we may need to increase the length of some of these arcs to $\ell+2$.
To ensure each arc does not go around the circle more than a full cycle, we need $\ell+2 \leq q$ as another parameter setting assumption.
Since we want each of the $\ell$ right nodes to be covered $c$ times, we approximately need $q c / (\ell+1)$ arcs.
We will add exactly $\lfloor q c / (\ell+1) \rfloor$ left nodes and make each of them cover $\ell+1$ or $\ell+2$ right nodes. If the remainder of dividing $q c$ by $\ell+1$ is $r$, we will need $r$ arcs of length $\ell+2$ and the rest should be of length $\ell+1$. 
We note that this remainder could be as large as $\ell$. To be able to pick $\ell$ arcs out of the total number of $\lfloor q c / (\ell+1) \rfloor$ arcs, we simply need the inequality $\ell \leq \lfloor q c / (\ell+1) \rfloor$ to hold. It suffices to have $\ell+1 \leq q c / (\ell+1)$ or equivalently $(\ell+1)^2 \leq qc = k$ which is one of requirements in the parameter setting. 
 
\begin{figure}[h]
    \centering
    \begin{tikzpicture}
        \draw[thick] (0,0) circle (2cm);
        
        \node[circle, fill=black, inner sep=1.5pt, label=above:$x_1$] at (75:2cm) {};
        \node[circle, fill=black, inner sep=1.5pt, label=right:$x_{\ell+1}$] at (45:2cm) {};
        \node[circle, fill=black, inner sep=1pt] at (65:2cm) {};
        \node[circle, fill=black, inner sep=1pt] at (55:2cm) {};
        
        \node[circle, fill=black, inner sep=1.5pt, label=above:$x_q$] at (90:2cm) {};

        \draw[thick, blue, ->] (80:2.6) arc (80:40:2.6);
        \node[blue] at (60:2.8) {$y_1$};
        
        \draw[thick, blue, ->] (27:3.1) arc (27:-5:3.1);
        \node[blue] at (15:3.3) {$y_2$};        
        \node[circle, fill=black, inner sep=1.5pt, label=right:$x_{\ell+2}$] at (30:2cm) {};
        \node[circle, fill=black, inner sep=1.5pt, label=right:$x_{2\ell+2}$] at (0:2cm) {};
        \node[circle, fill=black, inner sep=1pt] at (20:2cm) {};
        \node[circle, fill=black, inner sep=1pt] at (10:2cm) {};
        
        \node at (-45:2.3) {$\dots$};
        
    \end{tikzpicture}
    \caption{Cyclic coverage of nodes by greedy sets $y_i$.}
    \label{fig:CyclicCoverage}
\end{figure}

Since the marginal values of these arcs are either $\ell + 1$ or $\ell +2$, greedy will choose them over the optimum sets/columns that have marginal value of $\ell$. After all these 
$\lfloor \frac{q c}{\ell+1} \rfloor$ sets are chosen by greedy, the right nodes $x_1, x_2, \dots, x_q$ will be completely saturated with zero marginal value left on them. This means at this point, the marginal value of any optimum set is decreased by one unit, $\ell - 1$ instead of $\ell$. We announce this as the first phase of greedy sets construction, and move on to the next phase. We repeat a similar construction but this time, we only need the arcs to have length $\ell$ or $\ell + 1$. In the second phase, we will add $\lfloor \frac{q c}{\ell} \rfloor$ greedy sets, and this process continues. 
At the $i$-th phase, we add $\approx \frac{q c}{\ell+2-i}$ left nodes and greedy will choose them. 

Recall that $q c$ is equal to $k$, the cardinality constraint. Let's analyze what happens when greedy runs out of cardinality constraint budget. 
We need to find the smallest value of $i^*$ (or a good upper bound on it) for which
\[
    \sum_{i=1}^{i^*} \floor*{\frac{k}{\ell+2-i}} \geq k.
\]

For $1 \leq i' \leq \ell + 1$, we have
\begin{align*}
    \sum_{i=1}^{i'} \floor*{\frac{k}{\ell+2-i}} &\geq  \parens*{\sum_{i=1}^{i'} \frac{k}{\ell+2-i} } - i' \\
    &=  \parens*{\sum_{i=1}^{i'-1} \frac{k}{\ell+2-i} } + \frac{k}{\ell+2-i'} - i' \\
    &\geq  \sum_{i=1}^{i'-1} \frac{k}{\ell+2-i} \\
    &= k \sum_{j=\ell+3-i'}^{\ell+1} \frac{1}{j} \\
    &\geq k \int_{x=\ell+3-i'}^{\ell+2} \frac{1}{x} \, dx  \\
    &= k \log \parens*{\frac{\ell+2}{\ell+3-i'}}, \\
\end{align*}
where the second inequality, i.e., $\frac{k}{\ell+2-i'} - i' \geq 0$, holds because of the parameter setting assumption that $k \geq (\ell+1)^2$.
For the above lower bound to be at least $k$, it suffices to have $\log \frac{\ell+2}{\ell+3-i'} \geq 1$ or equivalently $\frac{\ell+2}{e} \geq \ell+3-i'$, where $e \approx 2.71$ is the Euler number.
Therefore, it suffices to set
\[
    i^* = \ell+3 - \floor*{\frac{\ell+2}{e}}.
\]

The greedy solution covers each right node at most $i^* = \ell+3 - \floor*{\frac{\ell+2}{e}}$ compared to the optimum solution in which right nodes are covered $\ell$ times. The approximation ratio of algorithm greedy on this instance is at most
\[ 
    \frac{\ell+3 - \floor*{\frac{\ell+2}{e}}}{\ell} \leq \frac{\ell+3 - \frac{\ell+2}{e} + 1}{\ell} \leq 1 - \frac{1}{e} + \frac{3.3}{\ell}.
\]
We recall the parameter setting: $c = \ell+1$, $q= \ell+2$, and $k = qc \geq (\ell+1)^2$.
The loss term in the approximation bound is $\frac{3.3}{\ell} = \frac{3.3}{c-1}$.
Since $c \ge 1 + \ceil*{\frac{3.3}{\varepsilon}}$, the approximation guarantee of the greedy algorithm is at most $1 - \sfrac{1}{e} + \varepsilon$.
We also note that the approximation guarantee of the greedy algorithm in this construction is 
 $1 - \sfrac{1}{e} + O(\frac{1}{\sqrt{k}})$ since $\ell \geq \sqrt{k} - 2$,
 which completes the proof.
\end{proof}

%% file: app_experiments.tex
\section{Additional experiments}
\label{app:experiments}

All of the experiments are implemented in C\texttt{++} and run on an AMD EPYC 7B13 processor (2.45 GHz, 256MB L3 cache) with 240GB RAM.
The LP solver versions we use are GLOP v9.15~\citep{ortools}, HiGHS v.1.10.0~\citep{huangfu2018parallelizing}, and SCIP v7.0.1~\citep{hojny2025scip}.

We conducted our experiments on two graphs in the Stanford Network Analysis Platform (SNAP)~\citep{snapnets}: \facebook and \dblp.
The original \facebook graph has 4,039 nodes and \mytilde 88k edges,
and the original \dblp graph has of 317,080 nodes and \mytilde 1M edges.
For each dataset, we construct a symmetric bipartite graph $G' = (L, R, E')$.
Specifically, for each node $u$ in the original graph, we create two copies: $u \in L$ and $u' \in R$, connected by an edge $\{u, u'\}$. Further, each original undirected edge $\{u, v\}$ is mapped to two edges, $\{u, v'\}$ and $\{v, u'\}$, in the bipartite coverage graph.
Consequently, the resulting bipartite graph for \facebook has 4,039 nodes per side with 180,507 total edges,
and the \dblp bipartite graph has 317,080 nodes per side and 2,416,812 total edges.
We study $c$-multi-coverage~\citep{barman2022tight}, where $\varphi(x) = \min(x, c)$,
for different values of $c$ and cardinality constraints $k$.

\paragraph{Facebook.}

\begin{table}
  \caption{Running time in seconds for different relax-and-round algorithms on the \facebook graph.}
  \label{table:facebook_running_times}
  \centering
  \begin{tabular}{rrrrrrrrrrrrrrrrrrr}
    \toprule
    & & \multicolumn{3}{c}{\textbf{LP Solver + PipageRounding}} & \multicolumn{3}{c}{\textbf{AcceleratedRelaxAndRound}} \\
    \cmidrule(r){3-5} \cmidrule(l){6-8}
    & $k$ & GLOP & HiGHS & SCIP & $\eta = 10$ & $\eta = 1$ & $\eta = 0.1$ \\

    \midrule
    \multirow{5}{*}{\rotatebox[origin=c]{90}{$c=1$}}
    & 20 & 0.70 \scriptsize{$\pm$ 0.01} & 0.42 \scriptsize{$\pm$ 0.03} & 3.03 \scriptsize{$\pm$ 0.02} & 0.05 \scriptsize{$\pm$ 0.00} & 0.34 \scriptsize{$\pm$ 0.00} & \textbf{0.04} \scriptsize{$\pm$ 0.00} \\
    & 40 & 0.70 \scriptsize{$\pm$ 0.00} & 0.40 \scriptsize{$\pm$ 0.01} & 3.04 \scriptsize{$\pm$ 0.02} & \textbf{0.06} \scriptsize{$\pm$ 0.00} & 0.29 \scriptsize{$\pm$ 0.00} & 0.23 \scriptsize{$\pm$ 0.00} \\
    & 60 & 0.91 \scriptsize{$\pm$ 0.02} & 0.40 \scriptsize{$\pm$ 0.00} & 3.03 \scriptsize{$\pm$ 0.02} & \textbf{0.05} \scriptsize{$\pm$ 0.00} & 0.26 \scriptsize{$\pm$ 0.00} & 0.25 \scriptsize{$\pm$ 0.00} \\
    & 80 & 0.96 \scriptsize{$\pm$ 0.02} & 0.41 \scriptsize{$\pm$ 0.01} & 3.07 \scriptsize{$\pm$ 0.03} & \textbf{0.05} \scriptsize{$\pm$ 0.00} & 0.24 \scriptsize{$\pm$ 0.00} & 0.25 \scriptsize{$\pm$ 0.00} \\
    & 100 & 0.68 \scriptsize{$\pm$ 0.01} & 0.40 \scriptsize{$\pm$ 0.01} & 6.61 \scriptsize{$\pm$ 1.17} & \textbf{0.05} \scriptsize{$\pm$ 0.00} & 0.22 \scriptsize{$\pm$ 0.00} & 0.25 \scriptsize{$\pm$ 0.00} \\

    \midrule
    \multirow{5}{*}{\rotatebox[origin=c]{90}{$c=2$}} &
    20 & 0.84 \scriptsize{$\pm$ 0.01} & 1.12 \scriptsize{$\pm$ 0.01} & 7.64 \scriptsize{$\pm$ 0.14} & \textbf{0.04} \scriptsize{$\pm$ 0.00} & 0.05 \scriptsize{$\pm$ 0.00} & 0.98 \scriptsize{$\pm$ 0.01} \\
    & 40 & 4.00 \scriptsize{$\pm$ 0.01} & 4.34 \scriptsize{$\pm$ 0.02} & 10.60 \scriptsize{$\pm$ 0.47} & \textbf{0.04} \scriptsize{$\pm$ 0.00} & 0.08 \scriptsize{$\pm$ 0.00} & 1.46 \scriptsize{$\pm$ 0.00} \\
    & 60 & 6.76 \scriptsize{$\pm$ 0.06} & 7.02 \scriptsize{$\pm$ 0.04} & 13.66 \scriptsize{$\pm$ 1.19} & \textbf{0.04} \scriptsize{$\pm$ 0.00} & 0.07 \scriptsize{$\pm$ 0.00} & 2.08 \scriptsize{$\pm$ 0.01} \\
    & 80 & 16.58 \scriptsize{$\pm$ 0.11} & 16.56 \scriptsize{$\pm$ 0.03} & 22.47 \scriptsize{$\pm$ 2.08} & \textbf{0.05} \scriptsize{$\pm$ 0.00} & 0.09 \scriptsize{$\pm$ 0.00} & 3.81 \scriptsize{$\pm$ 0.02} \\
    & 100 & 20.28 \scriptsize{$\pm$ 0.14} & 20.74 \scriptsize{$\pm$ 1.21} & 24.97 \scriptsize{$\pm$ 0.71} & \textbf{0.05} \scriptsize{$\pm$ 0.00} & 0.14 \scriptsize{$\pm$ 0.00} & 3.95 \scriptsize{$\pm$ 0.02} \\

    \midrule
    \multirow{5}{*}{\rotatebox[origin=c]{90}{$c=4$}}
    & 20 & 0.35 \scriptsize{$\pm$ 0.01} & 0.69 \scriptsize{$\pm$ 0.01} & 6.05 \scriptsize{$\pm$ 0.04} & \textbf{0.04} \scriptsize{$\pm$ 0.00} & \textbf{0.04} \scriptsize{$\pm$ 0.00} & 0.19 \scriptsize{$\pm$ 0.00} \\
    & 40 & 0.79 \scriptsize{$\pm$ 0.00} & 1.37 \scriptsize{$\pm$ 0.02} & 6.80 \scriptsize{$\pm$ 0.09} & \textbf{0.04} \scriptsize{$\pm$ 0.00} & 0.05 \scriptsize{$\pm$ 0.00} & 2.61 \scriptsize{$\pm$ 0.00} \\
    & 60 & 1.23 \scriptsize{$\pm$ 0.00} & 1.72 \scriptsize{$\pm$ 0.01} & 7.05 \scriptsize{$\pm$ 0.11} & \textbf{0.04} \scriptsize{$\pm$ 0.00} & 0.05 \scriptsize{$\pm$ 0.00} & 1.94 \scriptsize{$\pm$ 0.01} \\
    & 80 & 2.41 \scriptsize{$\pm$ 0.02} & 2.45 \scriptsize{$\pm$ 0.01} & 8.39 \scriptsize{$\pm$ 0.20} & \textbf{0.04} \scriptsize{$\pm$ 0.00} & 0.05 \scriptsize{$\pm$ 0.00} & 3.65 \scriptsize{$\pm$ 0.02} \\
    & 100 & 4.79 \scriptsize{$\pm$ 0.04} & 5.04 \scriptsize{$\pm$ 0.02} & 10.04 \scriptsize{$\pm$ 0.10} & \textbf{0.04} \scriptsize{$\pm$ 0.00} & 0.07 \scriptsize{$\pm$ 0.00} & 3.29 \scriptsize{$\pm$ 0.01} \\
    \bottomrule
  \end{tabular}
\end{table}

In \Cref{table:facebook_objective_values} and \Cref{table:dblp_objective_values},
we report the maximum rounded objective value over trials for each algorithm, and also compare against the greedy objective value.
In each row, we boldface the maximum value in the row if there is a strictly smaller value in that row (ignoring TLEs).

\begin{table}
  \caption{Rounded objective values for different relax-and-round algorithms on the \facebook graph.}
  \label{table:facebook_objective_values}
  \centering
  \begin{tabular}{rrrrrrrrrrrrrrrrrrr}
    \toprule
    & & & \multicolumn{3}{c}{\textbf{LP + PipageRounding}} & \multicolumn{3}{c}{\textbf{AcceleratedRelaxAndRound}} \\
    \cmidrule(r){4-6} \cmidrule(l){7-9}
    & $k$ & Greedy & GLOP & HiGHS & SCIP & $\eta = 10$ & $\eta = 1$ & $\eta = 0.1$ \\
    
    \midrule
    \multirow{5}{*}{\rotatebox[origin=c]{90}{$c=1$}}
    &  20 & 4,039 & 4,039 & 4,039 & 4,039 & 4,039 & 4,039 & 4,039 \\
    &  40 & 4,039 & 4,039 & 4,039 & 4,039 & 4,039 & 4,039 & 4,039 \\
    &  60 & 4,039 & 4,039 & 4,039 & 4,039 & 4,039 & 4,039 & 4,039 \\
    &  80 & 4,039 & 4,039 & 4,039 & 4,039 & 4,039 & 4,039 & 4,039 \\
    & 100 & 4,039 & 4,039 & 4,039 & 4,039 & 4,039 & 4,039 & 4,039 \\

    \midrule
    \multirow{5}{*}{\rotatebox[origin=c]{90}{$c=2$}}
    &  20 & 5,607 & \textbf{5,653} & \textbf{5,653} & \textbf{5,653} & 4,643 & 5,607 & 5,649 \\
    &  40 & 6,488 & 6,486 & \textbf{6,498} & 6,486 & 6,435 & 6,488 & 6,452 \\
    &  60 & \textbf{6,931} & 6,928 & 6,928 & 6,928 & 6,887 & \textbf{6,931} & 6,909 \\
    &  80 & \textbf{7,196} & 7,162 & 7,183 & 7,170 & 7,144 & \textbf{7,196} & 7,131 \\
    & 100 & \textbf{7,377} & 7,360 & 7,369 & 7,344 & 7,309 & 7,370 & 7,108 \\

    \midrule
    \multirow{5}{*}{\rotatebox[origin=c]{90}{$c=4$}}
    &  20 &  6,551 &  6,551 &  6,551 &  6,551 &  6,551 &   6,551 &  6,551 \\
    &  40 &  8,428 &  8,457 &  8,457 &  8,457 &  8,428 &   8,403 &  \textbf{8,460} \\
    &  60 &  9,655 &  \textbf{9,701} &  \textbf{9,701} &  \textbf{9,701} &  9,638 &   9,660 &  9,693 \\
    &  80 & 10,507 & 10,558 & \textbf{10,559} & 10,556 & 10,456 &  10,507 & 10,497 \\
    & 100 & 11,151 & 11,188 & \textbf{11,192} & 11,187 & 11,137 &  11,151 & 11,131 \\
    \bottomrule
  \end{tabular}
\end{table}

\paragraph{DBLP.}

The two tables for the \dblp coverage graph are analogous to those in the \facebook section.

\begin{table}[H]
  \caption{Running time in seconds for different relax-and-round algorithms on the \dblp graph.}
  \label{table:dblp_running_times}
  \centering
  \begin{tabular}{rrrrrrrrrrrrrrrrrrr}
    \toprule
    & & \multicolumn{3}{c}{\textbf{LP Solver + PipageRounding}} & \multicolumn{3}{c}{\textbf{AcceleratedRelaxAndRound}} \\
    \cmidrule(r){3-5} \cmidrule(l){6-8}
    & $k$ & GLOP & HiGHS & SCIP & $\eta = 10$ & $\eta = 1$ & $\eta = 0.1$ \\
    
    \midrule
    \multirow{5}{*}{\rotatebox[origin=c]{90}{$c=1$}}
    &  20 & 141.45 \scriptsize{$\pm$ 0.31} & TLE & TLE & \textbf{1.60} \scriptsize{$\pm$ 0.03} & 2.28 \scriptsize{$\pm$ 0.05} & 3.81 \scriptsize{$\pm$ 0.15} \\
    &  40 & 698.79 \scriptsize{$\pm$ 0.51} & TLE & TLE & \textbf{1.74} \scriptsize{$\pm$ 0.05} & 2.30 \scriptsize{$\pm$ 0.06} & 4.87 \scriptsize{$\pm$ 0.14} \\
    &  60 & 736.95 \scriptsize{$\pm$ 0.71} & TLE & TLE & \textbf{1.72} \scriptsize{$\pm$ 0.03} & 2.51 \scriptsize{$\pm$ 0.08} & 5.16 \scriptsize{$\pm$ 0.20} \\
    &  80 & 739.94 \scriptsize{$\pm$ 0.55} & TLE & TLE & \textbf{1.78} \scriptsize{$\pm$ 0.03} & 3.11 \scriptsize{$\pm$ 0.07} & 6.53 \scriptsize{$\pm$ 0.18} \\
    & 100 & 624.37 \scriptsize{$\pm$ 0.55} & TLE & TLE & \textbf{1.76} \scriptsize{$\pm$ 0.04} & 4.95 \scriptsize{$\pm$ 0.13} & 7.02 \scriptsize{$\pm$ 0.17} \\

    \midrule
    \multirow{5}{*}{\rotatebox[origin=c]{90}{$c=2$}}
    & 20 & 105.79 \scriptsize{$\pm$ 0.38} & 50.45 \scriptsize{$\pm$ 0.63} & TLE & 1.42 \scriptsize{$\pm$ 0.05} & \textbf{1.38} \scriptsize{$\pm$ 0.05} & 2.09 \scriptsize{$\pm$ 0.09} \\
    & 40 & 116.98 \scriptsize{$\pm$ 0.28} & 56.36 \scriptsize{$\pm$ 1.39} & TLE & 1.58 \scriptsize{$\pm$ 0.02} & 1.79 \scriptsize{$\pm$ 0.04} & \textbf{1.48} \scriptsize{$\pm$ 0.04} \\
    & 60 & 665.57 \scriptsize{$\pm$ 0.39} & 78.01 \scriptsize{$\pm$ 1.68} & TLE & 1.53 \scriptsize{$\pm$ 0.04} & 1.80 \scriptsize{$\pm$ 0.05} & \textbf{1.51} \scriptsize{$\pm$ 0.02} \\
    & 80 & 733.57 \scriptsize{$\pm$ 0.98} & 75.13 \scriptsize{$\pm$ 1.55} & TLE & 1.53 \scriptsize{$\pm$ 0.00} & 2.24 \scriptsize{$\pm$ 0.07} & \textbf{1.48} \scriptsize{$\pm$ 0.01} \\
    & 100 & 732.11 \scriptsize{$\pm$ 1.40} & 86.70 \scriptsize{$\pm$ 1.53} & TLE & 1.56 \scriptsize{$\pm$ 0.02} & 2.28 \scriptsize{$\pm$ 0.02} & \textbf{1.47} \scriptsize{$\pm$ 0.01} \\

    \midrule
    \multirow{5}{*}{\rotatebox[origin=c]{90}{$c=4$}}
    &  20 & 15.91 \scriptsize{$\pm$ 0.04} & 30.62 \scriptsize{$\pm$ 0.54} & TLE & 0.81 \scriptsize{$\pm$ 0.01} & \textbf{0.52} \scriptsize{$\pm$ 0.00} & \textbf{0.52} \scriptsize{$\pm$ 0.01} \\
    &  40 & 17.19 \scriptsize{$\pm$ 0.20} & 31.33 \scriptsize{$\pm$ 0.51} & TLE & 1.27 \scriptsize{$\pm$ 0.02} & \textbf{1.26} \scriptsize{$\pm$ 0.01} & 1.56 \scriptsize{$\pm$ 0.01} \\
    &  60 & 617.37 \scriptsize{$\pm$ 0.32} & 31.61 \scriptsize{$\pm$ 0.48} & TLE & 1.68 \scriptsize{$\pm$ 0.03} & 1.61 \scriptsize{$\pm$ 0.01} & \textbf{1.60} \scriptsize{$\pm$ 0.03} \\
    &  80 & 618.39 \scriptsize{$\pm$ 0.74} & 31.76 \scriptsize{$\pm$ 0.59} & TLE & \textbf{1.60} \scriptsize{$\pm$ 0.01} & 2.21 \scriptsize{$\pm$ 0.01} & 1.74 \scriptsize{$\pm$ 0.01} \\
    & 100 & 613.05 \scriptsize{$\pm$ 0.61} & 32.06 \scriptsize{$\pm$ 0.57} & TLE & \textbf{1.56} \scriptsize{$\pm$ 0.00} & 1.83 \scriptsize{$\pm$ 0.01} & 1.67 \scriptsize{$\pm$ 0.02} \\
    \bottomrule
  \end{tabular}
\end{table}

\begin{table}[H]
  \caption{Rounded objective values for different relax-and-round algorithms on the \dblp graph.}
  \label{table:dblp_objective_values}
  \centering
  \begin{tabular}{rrrrrrrrrrrrrrrrrrr}
    \toprule
    & & & \multicolumn{3}{c}{\textbf{LP Solver + PipageRounding}} & \multicolumn{3}{c}{\textbf{AcceleratedRelaxAndRound}} \\
    \cmidrule(r){4-6} \cmidrule(l){7-9}
    & $k$ & Greedy & GLOP & HiGHS & SCIP & $\eta = 10$ & $\eta = 1$ & $\eta = 0.1$ \\
    
    \midrule
    \multirow{5}{*}{\rotatebox[origin=c]{90}{$c=1$}}
    &  20 &  4,356 &  4,356 & TLE & TLE &  4,356 &  4,356 &  4,356 \\
    &  40 &  \textbf{7,467} &  6,626 & TLE & TLE &  \textbf{7,467} &  \textbf{7,467} &  \textbf{7,467} \\
    &  60 & \textbf{10,098} &  8,624 & TLE & TLE & \textbf{10,098} & \textbf{10,098} & \textbf{10,098} \\
    &  80 & \textbf{12,430} & 10,488 & TLE & TLE & \textbf{12,430} & \textbf{12,430} & \textbf{12,430} \\
    & 100 & \textbf{14,567} &  7,195 & TLE & TLE & \textbf{14,567} & \textbf{14,567} & \textbf{14,567} \\
    
    \midrule
    \multirow{5}{*}{\rotatebox[origin=c]{90}{$c=2$}}
    &  20 &  4,625 &  4,625 &  4,625 & TLE &  4,625 &  4,625 &  4,625 \\
    &  40 &  8,074 &  8,074 &  8,074 & TLE &  8,074 &  8,074 &  8,074 \\
    &  60 & \textbf{11,056} &  9,737 & \textbf{11,056} & TLE & \textbf{11,056} & \textbf{11,056} & \textbf{11,056} \\
    &  80 & \textbf{13,736} & 12,407 & \textbf{13,736} & TLE & \textbf{13,736} & \textbf{13,736} & \textbf{13,736} \\
    & 100 & \textbf{16,214} & 14,288 & \textbf{16,214} & TLE & \textbf{16,214} & \textbf{16,214} & \textbf{16,214} \\

    \midrule
    \multirow{5}{*}{\rotatebox[origin=c]{90}{$c=4$}}
    &  20 &  4,723 &  4,723 &  4,723 & TLE & 4,723 & 4,723 & 4,723 \\
    &  40 &  8,366 &  8,366 &  8,366 & TLE & 8,366 & 8,366 & 8,366 \\
    &  60 & \textbf{11,525} & 10,822 & \textbf{11,525} & TLE & \textbf{11,525} & \textbf{11,525} & \textbf{11,525} \\
    &  80 & \textbf{14,407} & 12,581 & \textbf{14,407} & TLE & \textbf{14,407} & \textbf{14,407} & 12,993 \\
    & 100 & \textbf{17,088} & 14,921 & \textbf{17,088} & TLE & \textbf{17,088} & \textbf{17,088} & \textbf{17,088} \\
\bottomrule
  \end{tabular}
\end{table}

\paragraph{Larger values of $k$.}
If $c = 1$ and $k$ spans a larger range, \Cref{alg:accelerated_relax_and_round} outperforms greedy in terms of objective value (\Cref{table:dblp_objective_values_large_k_c1}).
For these larger instances, SCIP and HiGHS \emph{fail to find a feasible solution} with a one-hour time limit.
GLOP finds a feasible solution for $k \in \{2000, 5000, 10000\}$, but the respective objective values after pipage rounding
are (33,008, 40,965, 64,556), which are significantly worse than greedy and \Cref{alg:accelerated_relax_and_round}.
In the following experiment, we amplified the quality of \SwapRound by rounding 200 times since \SwapRound is orders of magnitude cheaper than solving for $\widetilde{x}^*$.
This is not applicable to pipage rounding.
We present the running times of \Cref{alg:accelerated_relax_and_round} in \Cref{table:dblp_running_times_large_k_c1}.  

\begin{table}[H]
  \caption{Comparison of objective values achieved by \Cref{alg:accelerated_relax_and_round} before and after rounding on \dblp
  for $c = 1$ and larger values of $k$.}
  \label{table:dblp_objective_values_large_k_c1}
  \centering
  \begin{tabular}{rrrrrrrrrrrrrrrrrrr}
    \toprule
    & & &  \multicolumn{3}{c}{$C(\widetilde{x}^*)$}  & \multicolumn{3}{c}  {$C(\SwapRound(\widetilde{x}^*))$} \\
    \cmidrule(l){4-6}\cmidrule(l){7-9}
    & $k$ & Greedy & $\eta = 10$ & $\eta = 1$ & $\eta = 0.1$ & $\eta = 10$ & $\eta = 1$ & $\eta = 0.1$ \\
    \midrule
    \multirow{5}{*}{\rotatebox[origin=c]{90}{}}
    &    200 &  23,362 &   23,372.37 &   23,375.49 &  23,362.00 &  \textbf{23,376} &  \textbf{23,376} &  23,362 \\
    &    500 &  42,627 &   42,691.05 &   42,688.53 &  42,675.02 &  42,665 &  42,681	&  \textbf{42,686} \\
    &  1,000 &  65,387 &   65,492.60 &   65,488.31 &  65,465.10	&  \textbf{65,484} &  65,447	&  65,459 \\
    &  2,000 &  97,179 &   97,320.78 &   97,313.96 &  97,288.22 &  \textbf{97,237} &  97,226 &  97,183 \\
    &  5,000 & 155,383 &  155,615.68 &  155,606.43 & 155,559.09 & \textbf{155,458} & 155,419 & 155,224\\
    & 10,000 & 209,679 &  210,098.03 &  210,086.98 & 210,016.43 & \textbf{209,748} & 209,685 & 209,476 \\
    \bottomrule
  \end{tabular}
\end{table}

\begin{table}[H]
  \caption{Running time in seconds for \Cref{alg:accelerated_relax_and_round} on \dblp for $c = 1$ and larger values of $k$.}
  \label{table:dblp_running_times_large_k_c1}
  \centering
  \begin{tabular}{rrrrrrrrrrrrrrrrrrr}
    \toprule
     & \multicolumn{3}{c}{\textbf{Solve}} & \multicolumn{3}{c}{\textbf{Rounding}} & \multicolumn{3}{c}{\textbf{Total}}\\
    \cmidrule(r){2-4} \cmidrule(l){5-7} \cmidrule(l){8-10}
    $k$ & $\eta = 10$ & $\eta = 1$ & $\eta = 0.1$ & $\eta = 10$ & $\eta = 1$ & $\eta = 0.1$ & $\eta = 10$ & $\eta = 1$ & $\eta = 0.1$ \\

    \midrule
    \multirow{5}{*}{\rotatebox[origin=c]{90}{}}
    200 & 82.70& 	185.16& 	23.66& 	1.92& 	2.10& 	0.75& 84.62	& 187.26 & 	24.41 \\
    500 & 107.56&	215.30&	302.49&	3.12&	3.18&	3.18& 110.68&	218.49 &	305.67 \\
    1,000&  117.79&	260.99& 	407.71 & 	5.37& 	5.27&	5.18& 123.17 & 	266.26	& 412.89\\
    2,000 &  127.10 & 	273.44& 	545.24&	8.23&	8.15&	8.05& 135.33	& 281.59	& 553.30\\
    5,000 & 136.42& 	337.86 & 	680.88 & 	17.91& 	17.62& 	16.99 & 154.32	& 355.47	& 697.87\\
    10,000 & 148.02 & 	398.67 & 	864.95& 	27.05& 	27.10 & 	28.99 & 175.08 & 	425.77& 	893.94\\
    \bottomrule
  \end{tabular}
\end{table}